\documentclass[journal,letterpaper]{IEEEtran}

\usepackage{cite}
\usepackage{amsmath,amssymb,amsfonts}
\usepackage{algpseudocode}
\usepackage{algorithm}
\usepackage{graphicx}
\usepackage{textcomp}
\usepackage{comment}
\usepackage{braket}
\usepackage{bbold}
\usepackage{url}
\usepackage{xcolor} 

\usepackage[T1]{fontenc}
\usepackage{amsthm}
\theoremstyle{plain}
\newtheorem{theorem}{Theorem}
\newtheorem{corollary}{Corollary}
\newtheorem{lemma}{Lemma}
\theoremstyle{definition}
\newtheorem{definition}{Definition}

\algrenewcommand\textproc{}%
\DeclareMathOperator*{\argmax}{arg\,max}

\def\BibTeX{{\rm B\kern-.05em{\sc i\kern-.025em b}\kern-.08em
    T\kern-.1667em\lower.7ex\hbox{E}\kern-.125emX}}

\begin{document}

\title{Routing in Non-Isotonic Quantum Networks}


\author{%
\IEEEauthorblockN{%
Maxwell Tang\IEEEauthorrefmark{2}\IEEEauthorrefmark{1},
Garrett Hinkley\IEEEauthorrefmark{2}\IEEEauthorrefmark{1},
Kenneth Goodenough\IEEEauthorrefmark{2}\IEEEauthorrefmark{3},
Stefan Krastanov\IEEEauthorrefmark{2},
Guus Avis\IEEEauthorrefmark{2}}

\IEEEauthorblockA{\IEEEauthorrefmark{2}Manning College of Information and Computer Sciences,\\
University of Massachusetts Amherst, Amherst, Massachusetts 01003, USA}

\IEEEauthorblockA{\IEEEauthorrefmark{3}Naturwissenschaftlich-Technische Fakult{\"a}t, Universit{\"a}t Siegen,\\
Walter-Flex-Stra{\ss}e 3, 57068 Siegen, Germany}
\thanks{\IEEEauthorrefmark{1}Maxwell Tang and Garrett Hinkley contributed equally to this work.}
\thanks{We thank Eva Peet for help with creating Fig.~\ref{fig:skr_not_isotonic}.}
\thanks{S.K.\ and G.A.\ acknowledge support from NSF CQN Grant No.~1941583, NSF Grant No.~2346089, and NSF Grant No.~2402861.}
\thanks{K.G.\ acknowledges support by the Alexander von Humboldt foundation.}
\thanks{Corresponding author: Guus Avis (email: guusavis@hotmail.com).}
}

\maketitle

\begin{abstract}
Optimal routing in quantum-repeater networks requires finding the best path that connects a pair of end nodes.
Most previous work on routing in quantum networks assumes utility functions that are isotonic,
meaning that the ordering of two paths does not change when extending both with the same edge.
However, we show that utility functions that take into account both the rate and quality of the entanglement generation (e.g., the secret-key rate) are often non-isotonic.
This makes pathfinding difficult as classical algorithms such as Dijkstra’s become unsuitable, with the state of the art for quantum networks being an exhaustive search over all possible paths.
In this work we present improved algorithms.
First, we present two best-first-search algorithms that use destination-aware merit functions for faster convergence.
One of these provably finds the best path, while the other uses heuristics to achieve an effectively sublinear scaling of the query count in the network size while in practice always finding a close-to-optimal path.
Second, we present metaheuristic algorithms (simulated annealing and a genetic algorithm) that enable tuning a tradeoff between path quality and computational overhead.
While we focus on swap-ASAP quantum repeaters for concreteness, our algorithms are readily generalized to different repeater schemes and models.
\end{abstract}

\section{Introduction}
\label{sec:introduction}
A quantum network enables the creation of entangled quantum states amongst its end nodes.
Such states can be used for a plethora of applications, such as the transmission of quantum information through teleportation~\cite{bennett1993teleporting, bouwmeester1997experimental, pirandola2015advances}, secure communication through quantum key distribution (QKD)~\cite{bennett2014quantum, bennett1992a}, distributed sensing~\cite{bennett2014quantum}, blind quantum computing~\cite{broadbent2009universal, fitzsimons2017private, van2024hardware}, and distributed quantum computation~\cite{buhrman2003distributed, van2016path}.
In terrestrial quantum networks, entanglement is typically distributed by exchanging photons through optical fiber, but this incurs losses that are exponential in the fiber length.
To mitigate these losses, a type of auxiliary network node is required, called a quantum repeater~\cite{azuma2023quantum}.
While practical quantum repeaters have not yet been realized, there have been several promising proof-of-principle demonstrations~\cite{li2019experimental, langenfeld2021quantum, pu2021experimental}.

To create entanglement between two specific end nodes in a larger quantum-repeater network, a path connecting those nodes is required, as illustrated in Fig.~\ref{fig:best_path_example}. 
\begin{figure}[t!]
\centering
\includegraphics[width=.99\columnwidth]{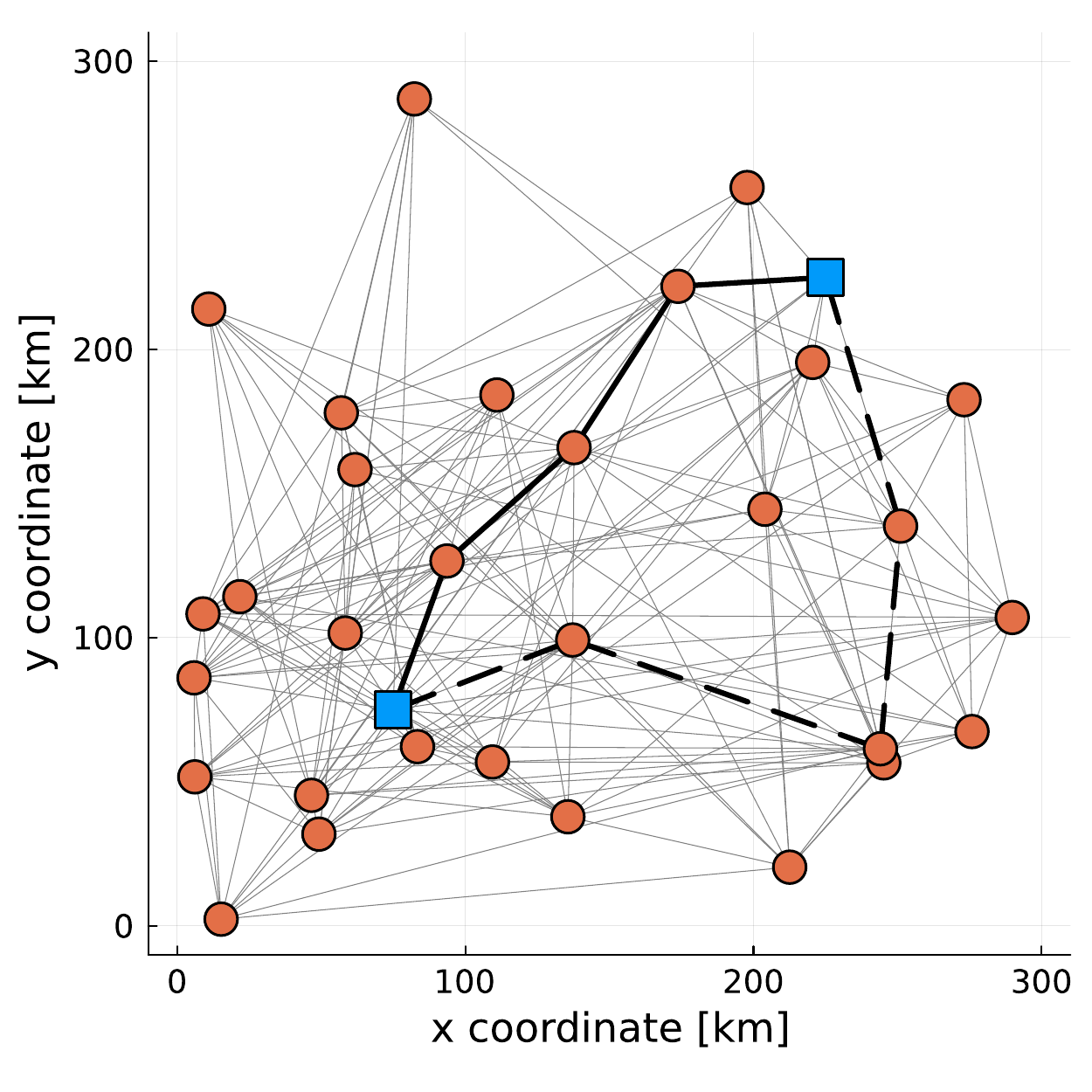}
\caption{
    An example of a randomly generated quantum network (procedure explained in Sec.~\ref{sec:problem_statement}), where blue squares indicate the end nodes that want to share entanglement and the orange circles represent quantum repeaters.
    The solid thick black line is the best path through the network, realizing a secret-key rate for quantum-key distribution (see Sec.~\ref{sec:problem_statement}) of 6.4 Hz, found using our BeFS-EXACT algorithm (see Sec.~\ref{sec:deterministic_algs}).
    On the other hand, the dashed thick black line is the path found using an algorithm (the extended Dijkstra's algorithm~\cite{shi2020a}) that assumes an isotonic utility function, which the secret-key rate is not in our repeater model.
    Over this path, a secret key can only be created at 0.17 Hz.
    The model parameters (explained in Sec.~\ref{sec:problem_statement}) are a link fidelity of $F = 0.96$ and coherence time of $T = 10$ s.
    \label{fig:best_path_example}
}
\end{figure}
Assigning paths is the responsibility of a routing protocol~\cite{abane2024}, and has recently received considerable interest~\cite{caleffi2017, schoute2016shortcuts, chakraborty2020entanglement, coutinho2023entanglement, victora2020purification, chakraborty2019distributed,van2013path, shi2020concurrent}.

In order for routing protocols to achieve the best possible network performance, they first need to know what the best possible paths through the network are.
Finding good paths between a pair of network nodes is a task that we refer to as pathfinding here, and finding pathfinding algorithms that are both efficient and effective when applied to quantum-repeater networks is the main goal of this paper.
We note that while pathfinding is essential to routing, the responsibility of routing protocols is broader than just identifying good paths.
For instance, they also need to adapt dynamically to changes in the network, organize how resources are shared, and deal with communication delays between nodes.

The contributions of this paper are twofold.
As a first contribution, we demonstrate that, for common utility functions and quantum-repeater models, the standard assumption of \textit{isotonicity} is strongly broken in quantum networks. 
A utility function is isotonic if the relative ordering between two paths cannot be reversed by appending the same edge to both those paths (see Def.~\ref{def:isotonicity} for an exact definition).
When a utility function is not isotonic, standard pathfinding algorithms like Dijkstra's algorithm are not guaranteed to find the optimal path~\cite{sobrinho2005} (see Fig.~\ref{fig:best_path_example} for an example).
As a second contribution, we present pathfinding algorithms for quantum networks when isotonicity does not hold.

The emergence of non-isotonicity in quantum networks was first discovered by Caleffi~\cite{caleffi2017}.
In that work, an analytical model was developed for the rate at which entanglement can be distributed along a network path.
Caleffi observed that this model violates isotonicity and that hence traditional algorithms are unable to find the optimal path.
He addressed this problem by proposing a pathfinding algorithm based on the explicit enumeration and evaluation of all possible paths.
We remark that the non-isotonic behavior of Caleffi's model is the result of a particular simplifying assumption used in that work,
namely that the probabilistic process of entanglement generation along an edge takes a constant, deterministic amount of time.
As a result, the rate along any path is given by the rate of its slowest edge, see, e.g.,~\cite{pirandola2019end} and~\cite{coutinho2023entanglement} for similar models.

As such, isotonicity is violated in Caleffi's model in the weakest way possible: two paths with different utility values can become equally good by adding an edge (namely, in case that edge becomes the bottleneck in both paths), but their ordering can never be reversed.
In fact, the model is still isotonic according to the definition we adhere to in this paper, and we point out that pathfinding in Caleffi's model could be solved efficiently since it reduces to the widest path problem~\cite{widestpathproblem}.

We also point out the work from~\cite{coutinho2023entanglement}, which showed isotonicity of the fidelity of the end-to-end state for two models: the single-e-bit and flow-distribution models.
The former requires all elementary links to succeed on the first attempt (and thus does not scale); the latter is similar to the model used by Caleffi and Pirandola et al.~\cite{pirandola2019end}.

We show here that non-isotonicity naturally arises when considering realistic distribution- and error models with a practical figure of merit.
In particular, we assume that 1) entanglement is distributed using a so-called \emph{swap-ASAP protocol}~\cite{kamin2023, goodenough2025} without distillation, and 2) we consider the secret-key rate of quantum key distribution, which captures the rate-fidelity trade-off in such a way that it quantifies the usefulness of a path for one of the most popular quantum-network applications.

The failure of isotonicity occurs for two reasons (see Sec.~\ref{sec:problem_statement} for more detail).
First, while individually the rate and fidelity can be isotonic for a given entanglement-distribution model, metrics like the secret-key rate need not be.
This arises from the trade-off between rate and quality; generating a worse-quality state at a higher rate might---depending on the setting---be better or worse than preparing a higher-quality state at a lower rate. Secondly, it is not only the trade-off between rate and fidelity which leads to non-isotonicity. In fact, even the rate and fidelity individually are themselves non-isotonic for swap-ASAP repeater chains (see Sec.~\ref{sec:isotonicity}).

As such, routing algorithms for quantum networks must account for non-isotonicity.
To enable routing in quantum networks without resorting to explicit enumeration~\cite{caleffi2017} ---which has a super-exponential complexity in the network size---we present two types of pathfinding algorithms that work even when utility functions are non-isotonic.
As our first result we present best-first-search algorithms inspired by the A* algorithm~\cite{foead2021systematic}.
One of these algorithms (named BeFS-EXACT) is guaranteed to always find the best path in the case of swap-ASAP repeaters, while the second (named BeFS-HEURISTIC) uses heuristics that we expect to be effective for most repeater types.
Importantly, in our experiments (see Sec.~\ref{sec:algorithm_comparisons}), BeFS-HEURISTIC almost always finds the best path with a query count that is effectively sublinear in the network size.
As our second result, we present metaheuristic algorithms (genetic algorithms and simulated annealing) that are suitable for any non-isotonic utility function and allow explicitly tuning a tradeoff between path quality and computational overhead.

Our paper is structured as follows.
In Sec.~\ref{sec:problem_statement} we introduce the utility function (secret-key rate) and repeater model (swap-ASAP repeaters) that we will use throughout the rest of the paper, discuss the importance of isotonicity, and explicitly show that the utility function is not isotonic.
In Sec.~\ref{sec:deterministic_algs} we present our first algorithms, which are based on a best-first search.
Our second set of algorithms, based on metaheuristics, is presented in Sec.~\ref{sec:metaheuristic_algs}.
We compare the performance of the different algorithms in Sec.~\ref{sec:algorithm_comparisons} and conclude in Sec.~\ref{sec:conclusion}.

\section{Problem Statement}
\label{sec:problem_statement}

Let $G$ be an undirected weighted graph, and let $f$ be a utility function that maps each path in $G$ to a real number.
The goal of pathfinding then is the following: given nodes $s$ and $t$, find a path between $s$ and $t$ that maximizes $f$.
For our purposes $G$ represents a physical quantum network with quantum repeaters at its nodes, such that the weight of each edge corresponds to the length of the corresponding communication channel (e.g., an optical fiber).
Here, we make the simplifying assumption that all nodes and channels have exactly the same properties, except for the fact that different channels have different lengths.
The utility function then depends only on the weights of the edges in a path, and hence can be considered a function that maps sequences of real numbers to a single real number.

\subsection{Utility Function}
\label{sec:utility_function}

In practice, what the utility function looks like depends on two things.
First, it depends on the performance metric used to quantify how good a path is at facilitating quantum communication between its end nodes.
Second, it depends on the architecture (both their physical design and the protocols used to operate them) and quality of the quantum repeaters in the network.
While our results are applicable to different performance metrics and different types of quantum repeaters (see Sec.~\ref{sec:conclusion}), for concreteness we here focus on one specific utility function:
the secret-key rate (SKR) of the BBM92 quantum-key-distribution
protocol~\cite{bennett1992a} for swap-as-soon-as-possible (swap-ASAP) quantum repeaters.

The SKR is a popular choice for a performance metric to optimize and benchmark quantum repeaters~\cite{rozpkedek2018parameter, rozpkedek2019near} as it quantifies how good they are at supporting a concrete quantum-network application, namely the generation of a shared secret key between two end nodes; the SKR is defined as the amount of key that can be generated per time unit.
Moreover, it conveniently combines the physical rate at which entangled pairs can be established by the network and the level of noise into a single number.
It can be written as
\begin{equation} \label{eq:SKR}
\text{SKR} = R \times \text{SKF},
\end{equation}
where $R$ is the rate of entanglement distribution (number of entangled pairs per time unit) and SKF is the secret-key fraction, i.e., the number of secret bits that can be distilled per entangled pair.
The SKF is a number between 0 and 1 that depends on the quality of the entangled states.
In the limit of asymptotically large keys when running the BBM92 protocol, it is given by~\cite{shor2000}
\begin{equation} \label{eq:SKF}
\text{SKF} = \max\left(0, 1 - h(Q_x) - h(Q_Z)\right),
\end{equation}
where $h(x) =-x\log_2(x) - (1-x)\log_2(1-x)$ is the binary entropy function and $Q_X$ ($Q_Z$) is the quantum-bit error rate (QBER) for measurements in the Pauli $X$ ($Z$) basis.

Swap-ASAP quantum repeaters~\cite{coopmans2021, goodenough2025, avis2022a, avis2024, avis2025, haldar2025, inesta2023, kamin2023} are repeaters that first create entanglement with each neighboring node, and then, as soon as both entangled states are present, perform entanglement swapping.
Entanglement swapping is an entangling measurement such that if the repeater had an entangled state with node $A$ and an entangled state with node $B$, $A$ and $B$ now share an entangled state.
When every repeater has performed entanglement swapping, the end nodes are successfully entangled.
We assume that each repeater has one qubit register per neighboring node and that an attempt at entanglement generation over an edge of length $L$ has success probability $10^{- \frac \alpha {10}  L}$ with $\alpha = 0.2 \, \text{km}^{-1}$ (equal to the probability that a photon is successfully transmitted through a fiber of length $L$ with 0.2 dB/km attenuation losses).
Moreover, we assume that upon success a Werner state with fidelity $F$ is created, which has density matrix
\begin{equation}
\frac{4F - 1}{3} \ket{\phi^+}\bra{\phi^+} + \frac 1 3 (1 - F) \mathbb 1_4
\end{equation}
where $\mathbb 1_n$ is the $n$-dimensional identity matrix, and that each time unit a qubit with index $k$ is stored for $t$ time in a memory with coherence time $T$ it undergoes depolarizing noise following
\begin{equation} \label{eq:memory_noise}
\rho \to e^{-\frac t {T}} \rho + (1 - e^{-\frac t {T}}) \frac{\mathbb 1_2} 2 \text{Tr}_k(\rho),
\end{equation}
where $\text{Tr}_k$ is the partial trace over qubit $k$.
End nodes measure their qubits directly after creating entanglement with their neighboring nodes to perform QKD, and hence do not suffer from memory decoherence.

We model entanglement generation over an edge of length $L$ as having duration $L/c$, where $c=200,000$ km/s is the speed of light in fiber (approximately two thirds the speed of light in vacuum).
However, we assume here that entanglement generation on shorter edges is purposefully delayed such that all edges have the same effective attempt duration, namely, $L_\text{max}/c$, where $L_\text{max}$ is the length of the longest edge in the chain.
Synchronizing attempts will reduce the entangling rate but can sometimes reduce qubit storage times.
Whether the SKR is higher when synchronizing or when not synchronizing depends on the configuration of the repeater chain.
We assume synchronized attempts, as it simplifies our model and makes it easier to prove bounds (see Appendix~\ref{app:bounds}).

We remark that our model does not closely capture any specific real-world implementation of swap-ASAP repeaters (see, e.g., \cite{avis2022a} for a more detailed model), but has rather been chosen for being relatively simple while nonetheless capturing the most critical properties of any real implementation.
This includes the exponential decrease of the success probability with the link length common to all heralded-entanglement-generation schemes using optical fiber (including single-click~\cite{cabrillo1999}, double-click~\cite{barrett2005} and direct-transmission schemes), as well as the fact that entangled links will already be noisy when first generated and then accumulate more noise the longer they are stored in memory.
Moreover, note that there are two free parameters in the model we have introduced: initial-state fidelity $F$ and the coherence time $T$.
That means that we now have a family of utility functions: $\text{SKR}_{F, T}$, which is the function that calculates the SKR of a path (which is effectively defined by a sequence of lengths) assuming a specific value for the fidelity and a specific value for the coherence time.
We evaluate the function $\text{SKR}_{F, T}$ numerically using the Julia package QuantumNetworkRecipes.jl~\cite{quantum_network_recipes.jl}, which was first introduced in \cite{avis2025}.
This implementation simulates the creation of a certain number of entangled pairs and estimates $\text{SKR}_{F, T}$ by dividing the sum of the secret-key fractions by the total time.
In our experiments, we take 10000 samples for each measurement.

\subsection{Isotonicity}
\label{sec:isotonicity}

Now, we make an important observation:
the utility function $\text{SKR}_{F, T}$ is not (always) isotonic (defined below).
This is problematic for pathfinding, as typical pathfinding algorithms~\cite{dijkstra1959, hart1968, yen1970} heavily rely on isotonicity~\cite{sobrinho2005}.
In particular, there exists a version of Dijkstra's algorithm (given in \cite{shi2020a} under the name ``extended Dijkstra's algorithm'') that efficiently finds optimal paths for any utility function that is both monotonic and isotonic (but not necessarily additive, as is assumed in the classical version of Dijkstra's algorithm~\cite{dijkstra1959}).
To understand this, we give formal definitions of monotonicity and isotonicity below.
First we introduce another piece of notation: if $p = (e_1, e_2, ..., e_k)$ is a path ending at node $v$ and $e$ is an edge connected to node $v$, then a new path can be formed by appending $e$ to $p$, which we write as $p \oplus e = (e_1, e_2, ..., e_k, e)$. For notational simplicity we will sometimes denote paths by their constituent vertices as opposed to their constituent edges; the intended meaning will be clear from the context.
\begin{definition} \label{def:monotonicity}
\emph{Monotonicity.}
A utility function $f$ that maps paths to real numbers is monotonic if and only if for any path p that ends at some node $v$ and every edge $e$ connected to node $v$ it holds that $f(p \oplus e) \leq f(p)$.
\end{definition}
That is, when a utility function is monotonic, a path can never be improved by adding edges to it.
\begin{definition} \label{def:isotonicity}
\emph{Isotonicity.}
A utility function $f$ that maps paths to real numbers is isotonic if and only if for any two paths $p_1$ and $p_2$ that end at the same node $v$ and for every edge $e$ connected to node $v$, $f(p_1) \geq f(p_2)$ implies $f(p_1 \oplus e) \geq f(p_2 \oplus e)$.
\end{definition}
That is, appending the same edge to two paths cannot change the relative ordering of those two paths with respect to an isotonic utility function.
Note that Caleffi~\cite{caleffi2017} defines isotonicity using a strict inequality; we commented on this discrepancy in Sec.~\ref{sec:introduction}.

We now explain why isotonicity allows for efficient algorithms such as Dijkstra's. Such an algorithm repeatedly extends partial paths (which we refer to as path prefixes) by appending edges to them (this procedure is discussed in more detail in Sec.~\ref{sec:deterministic_algs}).
When the utility function is isotonic, one only ever needs to extend the best path prefix leading up to a specific node, as all the other prefixes will remain suboptimal after the extensions.
For non-isotonic utility functions this is not the case, and one needs to track and extend many more path prefixes in order to find the best final path (one does not necessarily need to extend all path prefixes if the utility function is monotonic, i.e., decreasing under edge appendage; see Sec.~\ref{sec:deterministic_algs}).

We prove that the utility function $\text{SKR}_{F,T}$ is not isotonic for at least some values of $F$ and $T$ by providing a concrete counterexample, which can be found in Fig.~\ref{fig:skr_not_isotonic}.
\begin{figure}[t!]
\centering
\includegraphics[width=.99\columnwidth]{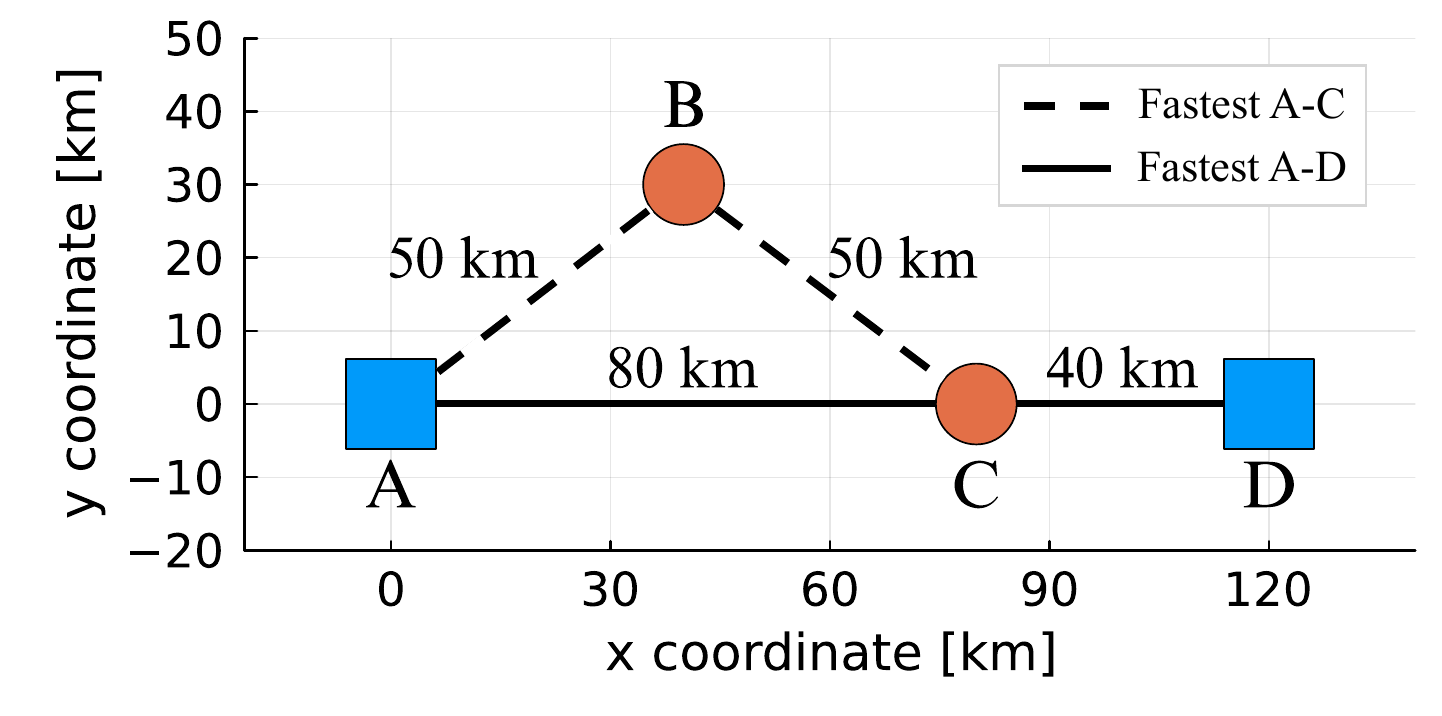}
\caption{
    Example of non-isotonic behavior in quantum-repeater networks.
    The utility function, the secret-key rate $\text{SKR}_{F, T}$ for link fidelity $F=0.94$ and coherence time $T=10$ s, is approximately 32 Hz for path $p_1 = (\text A, \text C)$ and 59 Hz for path $p_2 = (\text A, \text B, \text C)$, making $p_2$  the better path between A and C.
    However, when we append edge $e = (\text C, \text D)$ to both paths, we find approximately 13 Hz for path $p_1 \oplus e$ but exactly 0 for path $p_2 \oplus e$, making $p_1 \oplus e$ the better path between nodes A and D.
    The reason why path $p_2 \oplus e$ has zero utility is because with three links that each have a fidelity as low as $F=0.94$, the error rates in the end-to-end entangled states exceed the tolerance of the BBM92 protocol.
    \label{fig:skr_not_isotonic}
}
\end{figure}
In this example the non-isotonicity arises due to the tradeoff between state quality and entangling rate inherent in the definition of SKR.
One can either make high-quality entangled states slowly or low-quality entangled states quickly to achieve the same SKR.
Splitting the end-to-end distance into many segments can be favorable for the entangling rate as it allows suppression of the losses that scale exponentially with the channel length. In fact, to optimize the entangling rate over a fixed distance in this model, one would introduce as many repeaters as possible to split the total length into as many segments as possible (assuming losses only stem from fiber attenuation).
However, as each repeater is a noisy quantum device and entanglement generation is a noisy process (as captured by the fidelity parameter $F$), the end-to-end state quality will decrease when there are too many repeaters and hence segments.
In the example in Fig.~\ref{fig:skr_not_isotonic}, using two segments between A and C (by visiting node B) achieves a higher entangling rate, which leads to a higher secret-key rate.
However, when there are three segments between A and D, the noise levels are so high that it becomes impossible to generate a secret key, no matter how many pairs are distributed (i.e.~the secret-key fraction is zero).
It is then required to eliminate one segment to keep the state quality high enough, even if it comes at the cost of a reduced entangling rate.
The best path from A to D hence does not visit node B, and the best path from A to D is not obtained by extending the best path from A to C, demonstrating that the utility is not isontonic according to Def.~\ref{def:isotonicity}.

We now make some remarks on the non-isotonicity of the secret-key rate.
First, while the example in Fig.~\ref{fig:skr_not_isotonic} relies on one of the paths having zero SKR, this is not required to achieve non-isotonic behavior;
we have numerically found examples of non-isotonicity where the associated fidelity was well above the threshold for secret-key generation. 
Second, the tradeoff between entangling rate and state quality is not unique to the repeater type, model or performance metric that we utilize here, and hence we expect non-isotonic behavior to occur more generally in repeater networks.
Third, while our example shows how a rate-fidelity tradeoff can generate non-isotonicity even if the rate and fidelity would themselves be isotonic, it turns out that for swap-ASAP repeater chains the rate and fidelity are also themselves non-isotonic.
Hence, even if the goal is only to find the path with either the highest rate or the highest fidelity, pathfinding algorithms that assume isotonicity are not appropriate.
Proof of their non-isotonicity can be found in the form of numerical counterexamples contained in our repository~
\cite{datarepo}
(see \verb|isotonicity_counterexamples/| for concrete examples).
We note that while the BeFS-HEURISTIC and metaheuristic algorithms are expected to work well for pathfinding with the rate or fidelity as utility function, such an investigation is beyond the scope of this paper.

To better understand how non-isotonic $\text{SKR}_{F, T}$ is, we study the difference between the $\text{SKR}_{F, T}$ of the optimal path and the path found by the extended Dijkstra's algorithm. Since the extended Dijkstra's algorithm would find the optimal path for isotonic utility functions, this difference can be seen as a measure of non-isotonicity.
In Fig.~\ref{fig:how_non_isotonic}, we evaluate the relative difference between the SKRs on a set of 100 randomly generated graphs for different link fidelities and coherence times.
The best path is determined using the BeFS-EXACT algorithm, which we will introduce in Sec.~\ref{sec:deterministic_algs}.

\begin{figure}[t!]
\centering
\includegraphics[width=.99\columnwidth]{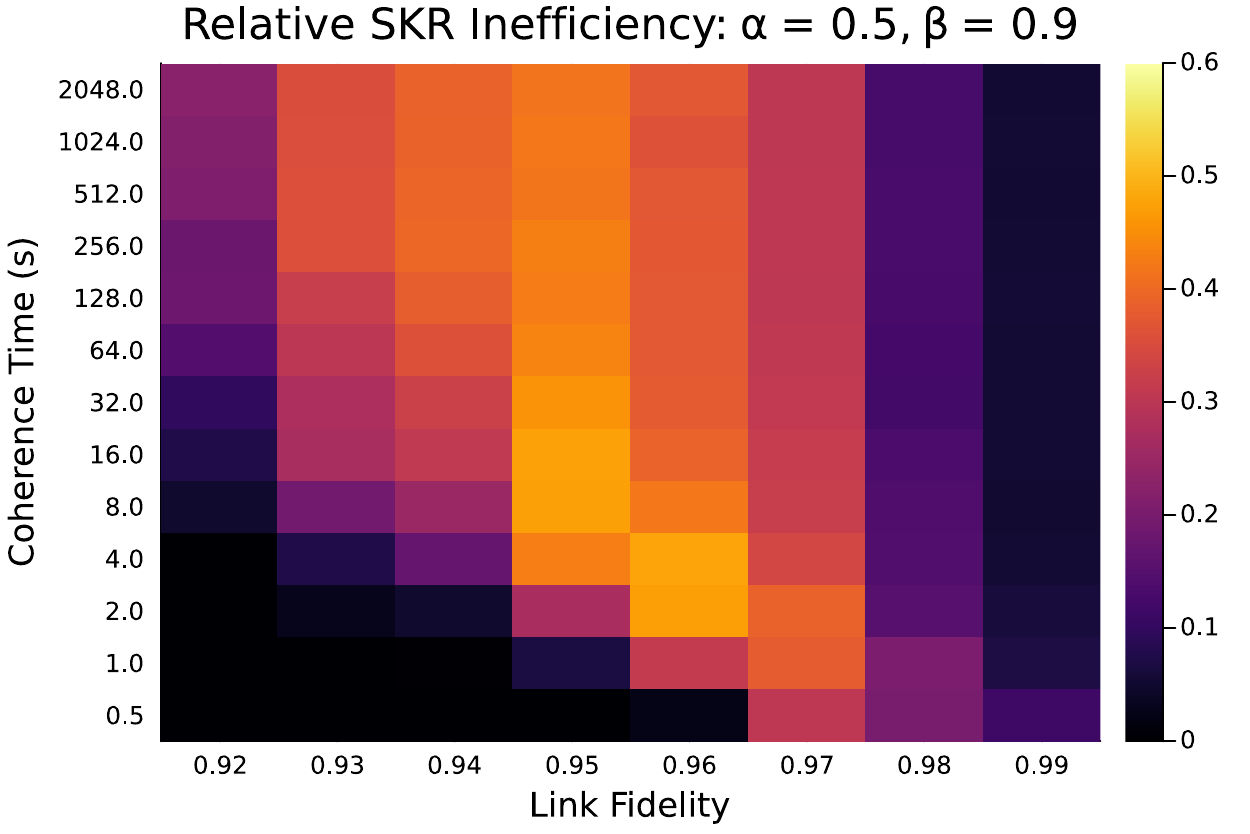}
\caption{Plot of the relative SKR inefficiency (defined as $\left.\left(\text{SKR}_\text{best} - \text{SKR}_\text{dijk}\right)\right/\left(\text{SKR}_{\text{best}}\right)$),
where $\text{SKR}_\text{best}$ is the SKR of the best path and $\text{SKR}_\text{dijk}$ is the SKR of the path found by the extended Dijkstra's algorithm. We plot the relative SKR inefficiency for several link fidelities $F$ and coherence times $T$, where each point was sampled over 100 random Waxman graphs with $\alpha = 0.5$, $\beta = 0.9$, and $L=300\mathrm{km}$. Dijkstra's algorithm always finds the optimal path if $\text{SKR}_{F, T}$ is isotonic, and hence any deviation is evidence of non-isotonicity.
\label{fig:how_non_isotonic}
}
\end{figure}

We generate random graphs, both for Fig.~\ref{fig:how_non_isotonic} and for benchmarking results that we present later in this paper, using the RG1 model developed by Waxman~\cite{waxman1988}.
First, we randomly place repeaters in a $300\mathrm{km} \times 300\mathrm{km}$ square.
Then, we place two end nodes at $\frac1 4$ and $\frac 3 4$ of the way along the diagonal.
Then, for each pair of nodes $u$ and $v$, we generate an edge of length $\|u - v\|_2$ with probability, where $\|\cdot\|_2$ is the Euclidean norm.
$$
P\left(\{u, v\}\right) = \beta \exp\left(-\frac{\|u - v\|_2}{L\alpha}\right),
$$
where $\alpha$, $\beta$, and $L$ are arbitrary parameters.
This procedure leads to graphs where local connections are more likely than more distant connections, roughly approximating the structure of real-life communications networks.
In this paper, we use $\alpha=0.5, \beta=0.9, L=300\mathrm{km}$ because preliminary testing found that it leads to stronger non-isotonicity.

As shown in Fig.~\ref{fig:how_non_isotonic}, non-isotonic behavior appears to peak when the link fidelity is roughly 0.96 and the coherence time is 4 seconds.
The non-isotonicity also appears to approach a non-zero limit as coherence time approaches infinity when link fidelity is around 0.96.
This suggests that the effect of imperfect link entanglements is able to cause non-isotonicity, independent of memory decoherence effects.

\section{Best-First Algorithms}
\label{sec:deterministic_algs}
In this section we introduce a set of algorithms based on the more general principle of best-first search.
The first of these algorithms, called BeFS-EXACT, provably finds the best path and performs significantly better than enumeration.
The second of these algorithms, called BeFS-HEURISTIC, uses a few approximations to achieve significant further performance gains, at the cost of no longer being guaranteed to find the best path (although in practice finding close-to-optimal paths, see Sec.~\ref{sec:algorithm_comparisons}).
Then, we go into detail on the implementation of these approximations.

\subsection{General Best-First Search}

To understand the idea behind a best-first search for finding the best path from node $s$ to node $t$ through graph $G$, let us first introduce some terminology.
Let a \textit{path prefix} be a path that starts at $s$ but ends at a node different than $t$, while a \textit{full path} is a path from $s$ to $t$.
Any path prefix can be \textit{expanded} to create new paths, which means appending edges adjacent to the prefix's end node to the prefix.
Now, let the \textit{path tree} corresponding to $G$, $s$ and $t$ be the rooted tree graph in which each vertex represents either a prefix or a full path such that the root is the empty path at $s$ ($p_\text{empty} = (s)$), the children of each prefix are the paths obtained by expanding it, and the full paths are leaf nodes.
Such a path tree is visualized in Fig.~\ref{fig:path_expansion_example}.
\begin{figure}[t!]
\centering
\includegraphics[width=.99\columnwidth]
{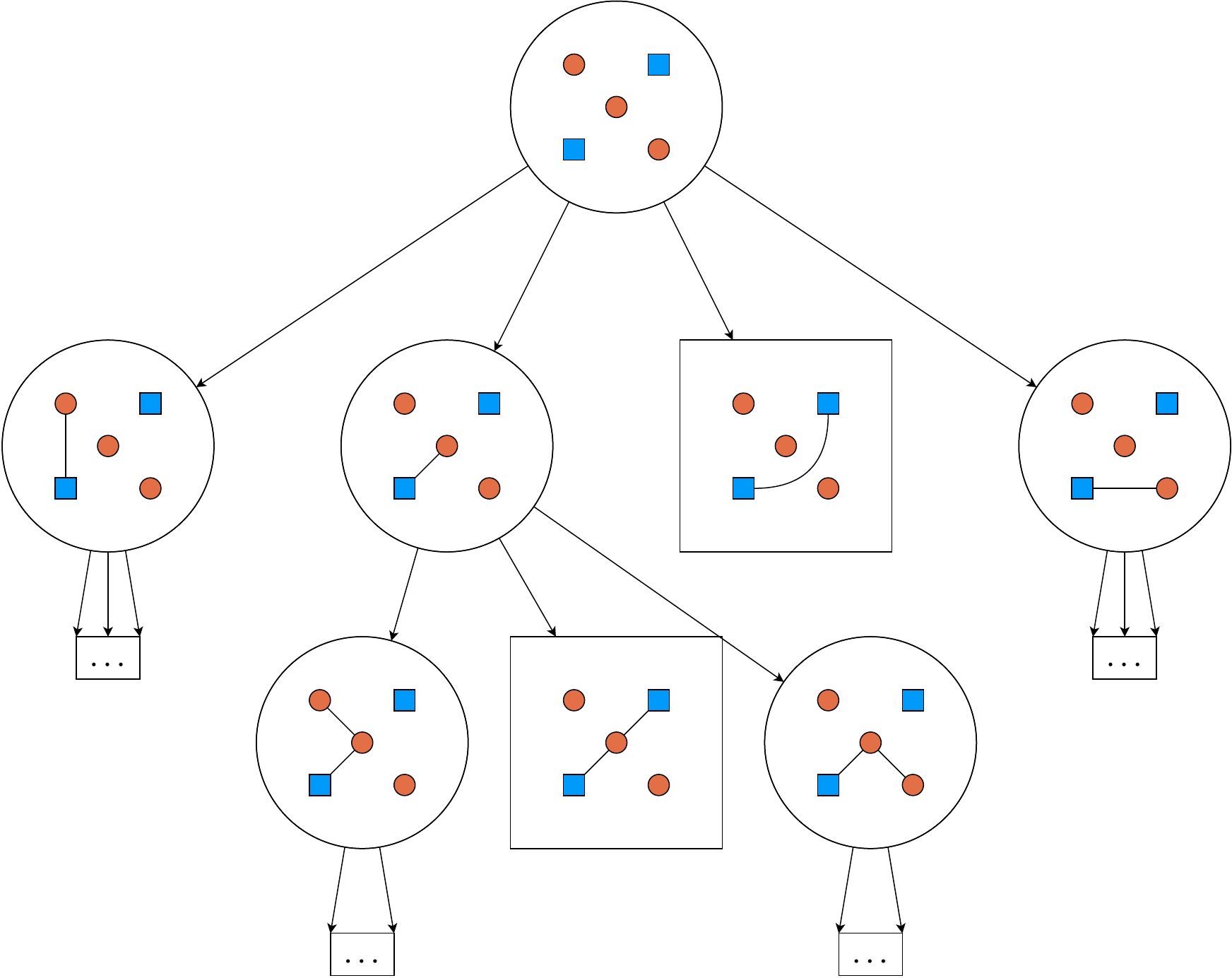}
\caption{
    The path tree for a fully-connected graph with 5 nodes starting at the bottom left blue node.
    The circles are non-leaf nodes, and the squares are leaf nodes.
    The path tree of a pathfinding problem is a tree structure imposed on the set of all paths starting at the start node, and in this diagram, each path is depicted inside of a node in the tree.
    Notice that parent paths and child paths differ by a single edge.
    \label{fig:path_expansion_example}
}
\end{figure}

A best-first search approaches pathfinding as a search through the path tree, where prefixes are expanded to explore the tree graph until the best full path is found.
There are several ways this could be done.
First, it would be possible to expand all path prefixes until all full paths are found.
The utility function can then be evaluated for each of them and the best one can be selected; this is more or less the algorithm presented by Caleffi~\cite{caleffi2017}.
However, as the path tree is very large (the number of full paths scales as $O(n!)$, where $n$ is the total number of nodes), it would be more efficient if only a subset of the path tree could be explored.

Another strategy is to evaluate the utility function for each explored path, and then expand prefixes in a highest-utility-first order.
Assuming that the utility function is monotonic (which it typically should be, even in the quantum case), the paths obtained by expanding a prefix cannot have a higher utility than that prefix itself.
Hence, as soon as a full path is found with a utility that exceeds the utility of any of the currently explored prefixes, it is guaranteed that it is the best full path in the graph and the search can be concluded.
If we associate a positive weight with each edge representing the difference in utility between the two paths, this algorithm can be thought of as Dijkstra's algorithm with multiple destination nodes applied to the path tree (we have transformed the problem into a new pathfinding problem with a simple additive utility function on a much larger graph).
While it can avoid exploring the full graph, it comes at the cost of evaluating the utility function for path prefixes, which is not required when exploring the full graph.

General best-first search generalizes the idea of running Dijkstra's algorithm on the path tree, but instead of evaluating the utility function for the prefixes, instead some \textit{merit function} is calculated and used to decide which path prefix to expand on a best-merit-first basis.
We formally present best-first search in Algorithm~\ref{alg:monotonic}. 
While using the utility function as the merit function will result in the best path being found, there may be other merit functions that guide the search to the correct leaf node of the path tree more quickly.
The key insight here is that the utility function is a destination-agnostic merit function.
Prefixes with equal utility are given equal priority in the search, even if one prefix has an end node that is physically very close to $t$ and the other prefix has an end node that is further from $t$ than $s$ was in the first place.
By using a destination-aware merit function, the most promising prefixes can be explored first, leading to a faster search.
In order to converge to the optimal path, it is required that the merit function on a prefix is an upper bound on the utility of any full path that can be obtained by expanding that prefix (which the utility function satisfies because it is monotonic).
Borrowing from the terminology surrounding the A* algorithm~\cite{hart1968}, we call any such merit function \textit{admissible}.
Generally, the tighter the bound is that the merit function places on the utility of full paths, the fewer path prefixes need to be expanded.

We note that A* is a special case of best-first search, where the merit function is the sum of the utility function and a second function (sometimes referred to as the heuristic function).
Below, we present two different merit functions for swap-ASAP repeater chains to be used in best-first search, effectively giving rise to two different pathfinding algorithms.
As neither merit function is explicitly written as a sum of a merit function and heuristic function, these algorithms are technically instances of best-first search but not of A*.

Since the entanglement-generation protocol is symmetric, the optimal path is preserved when interchanging the source and destination node.
The pathfinding algorithms, however, do distinguish between the source and destination nodes.
This means that there is often a more efficient direction to run each algorithm for any particular graph.
Using a low-cost algorithm at the start to predict this optimal direction can lower average computational costs.
We discuss these algorithms in Appendix~\ref{app:directionality}.

In addition, we avoid exploring dead-ends by removing all of the edges that don't appear on a full path from $s$ to $t$.
We discuss this in Appendix~\ref{app:bcc_pruning}

\begin{algorithm}
\caption{Best-First Search (BeFS)}\label{alg:monotonic}
\begin{algorithmic}

\Require{$G$ is the graph that we pathfind on, $p$ is a path prefix}
\Function{expand}{$G, p$}
  \State{$p_+ := \{\}$}
  \For{$v$ in $\mathrm{neighbors}(\mathrm{last}(p))$}
    \If{$v \not \in p$}
      \State{$p_+ \gets p \oplus (\mathrm{last}(p), v)$}
    \EndIf
  \EndFor
  \State{\Return{$p_+$}}
\EndFunction

\Require{$G$ is the graph that we pathfind on, $s$ is the source node, $t$ is the end node}
\Function{bestFirstSearch}{$G, s, t$}
  \State{$q := \mathrm{PriorityQueue}()$}
  \State{$q \gets ([s], \mathrm{merit}([s]))$}
  \While{$q$ is not empty}
    \State{$p := q.\mathrm{pop}()$}
    \If{$\mathrm{last}(p) = t$}
      \State{\Return{$p$}}
    \EndIf
    \For{$p'$ in $\mathrm{expand}(G, p)$}
      \State{$q \gets (p', \mathrm{merit}(p'))$}
    \EndFor
  \EndWhile
  \Comment{This is unreachable if a path from $s$ to $t$ exists}
\EndFunction

\end{algorithmic}
\end{algorithm}

\subsection{Exact Best-First Search}
\providecommand\dist{{\mathrm{d}}}
\providecommand\skr{{\mathrm{SKR}_{F, T}}}
\providecommand\psetreal{{\mathcal{P}}}
\providecommand\pset{{\mathbb{P}}}
\providecommand\bound{{\mathrm{BOUND}_{F, T}}}
\providecommand\realSuf{{\psetreal_{G}(\mathrm{end}(p), t)}}
\providecommand\lBound{{\dist_G(\mathrm{end}(p), t)}}
\providecommand\fakeSuf{{\pset_{\geq \lBound}}}
\providecommand\fakeSufEx{{\pset_{=\lBound}}}
\providecommand\meritEx{{\mathrm{M}_{\mathrm{e}, F, T}}}
\providecommand\hompath[1]{{H_{\dist_{G}(\mathrm{end}(p), t), {#1}}}}

In this section we introduce our only algorithm that always guarantees that the best possible path is found, i.e., it is our only ``exact'' algorithm, and hence we name it BeFS-EXACT.
The algorithm consists of best-first search (Algorithm~\ref{alg:monotonic}) with a specific merit function, which we will derive as follows.
To begin, we start by attempting to find the "best" admissible merit function.
Since tighter bounds increase the performance of BeFS, this should be the tight bound where the merit of each path is equal to the utility of the best path extension.
If we let $\psetreal_G(u, v)$ be the set of paths from $u$ to $v$ in $G$, we can write this formally as:

\begin{equation}
\mathrm{M}_{1, F, T}(p) = \max_{p' \in \realSuf} \skr(p \oplus p')
\end{equation}
In other words, the best admissible merit function maps $p$ to the greatest utility attainable by extending $p$ to form a full path.
While this merit function is admissible, it seems unlikely that it can be solved without using an algorithm that solves for the actual path extension, which would make this algorithm redundant.
Since this tight merit function is unlikely to be useful in efficiently finding paths, another merit function must be chosen.

Since appending elements to a set cannot decrease the maximum of that set, extending $\realSuf$ with more paths cannot make the resulting merit function inadmissible.
Thus, one way to modify this merit function is by constructing a larger set of paths that admits a simple optimization procedure.
If we let $\dist_G$ be the graph distance function for $G$ (the minimum physical distance along the network), we can define a new set of paths, $\fakeSuf$, to be the set of all conceivable paths whose length is at least $\lBound$.

To provide some examples of the paths included in $\fakeSuf$, $\fakeSuf$ includes every path in $\realSuf$, since each one has a physical length of at least $\lBound$ by definition.
It also includes any path that is not in $G$, with any number of segments and any sequence of lengths where the total length is at least $\lBound$.

Plugging this set into our merit function gives
\begin{equation}
\label{eq:befs_exact_version_1}
\mathrm{M}_{2, F, T} = \max_{p' \in \fakeSuf} \skr(p \oplus p')
\end{equation}

Since $\fakeSuf$ is defined with fewer constraints, we might hope to find a simple, analytical formula for solving Eq.~\eqref{eq:befs_exact_version_1}.
Unfortunately, memory decoherence affects the secret-key rate in a way that depends on the configuration of the quantum repeaters in a rather complicated way, making it difficult to analytically derive the suffix $p'$ for which $\text{SKR}_{F, T}(p \oplus p')$ is maximized.

To remedy this, we replace $\skr(p \oplus p')$ with $\bound(p, p')$, where $\bound(p, p')$ is the secret key rate of $p \oplus p'$ if every node adjacent to $p'$ has perfect quantum memory.
This must be greater than or equal to $\skr(p \oplus p')$ because removing memory decoherence improves or preserves the fidelity, which improves or preserves the secret-key rate.
Since none of the individual secret-key rates are decreased, the maximum secret-key rate cannot be decreased, and admissibility is preserved.
We define the merit function of BeFS-EXACT for any path prefix $p$ that ends at node $\mathrm{end}(p)$ as
\begin{equation}
\meritEx(p) = \max_{p' \in \fakeSuf} \bound(p, p').
\end{equation}

We will now summarize the derivation of an efficient formula for $\meritEx(p)$ based on theoretical results about swap-ASAP repeater chains that we have included in Appendix~\ref{app:bounds}.

Firstly, as shown in Corollary~\ref{cor:skr_of_extension_best_for_smallest_length}, any suffix $p'$ whose total length is greater than $\lBound$ can improve its entanglement generation time by shortening one or more edges to get a total length of $\lBound$.
Secondly, following from Lemma~\ref{lem:skr_optimal_for_hom_extension}, the path $p': \mathrm{dist}(p) = \lBound$ that maximizes $\bound(p, p')$ is always a \textit{homogeneous} one, where each edge in the chain has the same length.
If we let $H_{L, N}$ be a path consisting of $N$ edges whose lengths are all equal to $\frac{L}{N}$, we then have
\begin{equation}
\meritEx(p) = \max_{N \in \mathbb N} \bound(p, \hompath{N}).
\end{equation}

Now, as each state is generated with a fidelity of $F$ (and the memory noise is only affected by $p$), the noise levels will increase monotonically with $N$ as long as $F < 1$ (for $F=1$, this algorithm is not applicable).
Hence, there will be a finite value $N^*$ such that $\text{SKF} = 0$ (and hence $\text{SKR} = 0$) whenever the total number of edges in the combined paths is equal to or larger than $N^*$.
It is implied by Lemma~\ref{lem:max_N} that this is true for
\begin{equation}
N^* = \frac{ \log \left( 1 - 2 \cdot Q^* \right) } {\log \left( \frac {4F - 1}{3} \right) },
\end{equation}
where $Q^* \approx 0.110028$ is a constant that follows from the fidelity threshold for secret-key generation using BBM92.

Since the edges in the prefix also count towards this bound, the merit function can therefore be evaluated by performing a brute-force search over the domain $D = 1, 2, ..., \lfloor N^* - \left|\mathrm{edges}(p)\right|\rfloor$, giving us the formula
\begin{equation}
\meritEx(p) = \max_{N \in D} \bound(p, \hompath{N}).
\end{equation}
Since $\bound$ is evaluated numerically using QuantumNetworkRecipes.jl, this formula can be directly and efficiently evaluated.

\subsection{Heuristic Best-First Search}
\label{sec:heuristic_befs}

In this section we introduce BeFS-HEURISTIC, our second algorithm based on best-first search.
Unlike BeFS-EXACT, it is not an exact algorithm because it uses a few heuristics to speed up computation, which means that the found path is no longer guaranteed to be optimal.
Importantly, these heuristics allow BeFS-HEURISTIC to find paths more quickly than BeFS-EXACT, and which algorithm is preferable depends on the tolerance for path imperfection and the available computational resources.
In practice, BeFS-HEURISTIC is able to find the optimal path (or a path that is close to optimal) most of the time, as we will show in Sec.~\ref{sec:algorithm_comparisons}.

While the merit function of BeFS-EXACT is provably an upper bound on the utility of the best path that can be constructed using a given path prefix, it requires the assumption that  repeaters adjacent to the new edges do not suffer memory noise.
This assumption allows us to prove that the best path extension is always a homogeneous one (i.e., one in which all edges are equally long), which in turn allows us to significantly reduce the search space.
However, removing memory noise might lead to the utility of full paths being significantly overestimated, resulting in a merit function that only loosely bounds the utility of full paths.
As tighter bounds reduce the cost of best-first search, the main idea behind BeFS-HEURISTIC is to improve the merit function of BeFS-EXACT by including memory noise, at the cost of losing exactness.
Then, for any path prefix $p$ that ends at $r$ we have the merit function
\begin{equation}
\mathrm{M}_{\text{HEURISTIC}, F, T}(p) = \max_{N \in D} \text{SKR}_{F, T}(p \oplus H_{L_{G, r, t}, N, r, t}).
\end{equation}
As a homogeneous path extension is not generally the optimal path extension, this is not a valid upper bound and hence the merit function is technically not admissible.
In practice, however, this is not an issue.
Moreover, BeFS-HEURISTIC uses two further heuristics to reduce computational overhead.
$\text{SKR}_{F, T}(p \oplus H_{L_{G, r, t}, N, r, t})$ typically has a unique local maximum, which allows us to efficiently find the maximum by performing a binary search over $N$ as described in Appendix~\ref{app:binary_search}.
In addition, a cheap-to-calculate partial-ordering relation between paths can be used to defer (and often, avoid) merit-function evaluations.
The majority of time spent while running each of the pathfinding algorithms goes to evaluating the merit function.
In BeFS-HEURISTIC we employ a heuristic that is based on domination relations to reduce these costs.
We say that path $p$ dominates path $p'$ if $p$ and $p'$ are not identical and there is an order-preserving injection from the edge lengths in $p$ into the edge lengths of $p'$ where each edge in $p$ is at least as short as the corresponding edge in $p'$.
Said differently, $p$ dominates $p'$ if $p$ can be created from $p'$ by first removing edges and then reducing the lengths of the remaining edges.
Generally, if a path $p$ dominates $p'$, then $p$ is very likely to have a higher utility than $p'$.
However, in repeater chains with memory decoherence, this is not always true.
For instance, a homogeneous three-link chain can outperform the same chain but with the middle edge made very short as the discrepancy in entanglement-generation times between neighboring edges will result in large qubit storage times, even though the homogeneous chain dominates the second chain.
If we additionally find that the end of $p$ has a lower graph distance to the destination than the end of $p'$, then $p$ is also very likely to have a greater merit than $p'$.
These relations provide a method for proving that a path prefix is unlikely to be at the head of the queue without explicitly evaluating the merit function.
This allows us to defer merit evaluation on a given prefix until it is no longer dominated by anything else in the queue.
We call this optimization \textit{prefix bounding}.
Note that we simply replace the merit function evaluation with a domination check for some of the elements of the queue.
Indeed, if we assume that the dominance relation is always correct, we can prove that for any given graph, the set of elements in the queue at any given step is not changed by the addition of prefix bounding.
This means that this algorithm is likely just a constant factor more efficient, rather than being asymptotically more efficient.
However, as shown in Appendix~\ref{app:complexity validity}, the cost of dominance comparisons is negligible compared to the cost of SKR estimations, resulting in an exponential speed up in practical terms as shown in Sec.~\ref{sec:algorithm_comparisons}.
(Within the tested regime, we have observed that prefix bounding reduces the algorithm's query count from exponential to sublinear in the network size.)
Prefix bounding is implemented by replacing the priority queue in Algorithm~\ref{alg:monotonic} with a priority queue paired with a partially ordered set, as described in detail in Algorithm~\ref{alg:prefix_bounding}.

\begin{algorithm}
\caption{Best-first search with prefix bounding}\label{alg:prefix_bounding}
\begin{algorithmic}
\Require{$G$ is the graph that we pathfind on, $s$ is the source node, $t$ is the end node}
\Function{prefixBoundingSearch}{$G, s, t$}
  \State{$q := \mathrm{PriorityQueue}()$}
  \State{$q \leftarrow ([s], \mathrm{M}([s]))$}
  \State{$b := \text{a partially ordered set initialized as }\{[s]\}$}

  \While{$\lnot \mathrm{empty}(q)$}
    \State{$p, m := q.\mathrm{pop}()$}
    \State{$b := b \setminus \{p\}$}

    \If{$\mathrm{end}(p) = t$}
      \State\Return{$p$}
    \EndIf

    \For{$p'$ in $\mathrm{expand}(G,p)$}
      \State{$b := b \cup \{p'\}$}
      \For{$p''$ in $b$} 
        \If{$p'$ dominates $p''$}
          \State{add $p'' \leq p'$ to $b$}
        \EndIf
        \If{$p''$ dominates $p'$}
          \State{add $p' \leq p''$ to $b$}
        \EndIf
      \EndFor
    \EndFor

    \For{all $r \not \in q$ that are maximal within $b$}
      \State{$q[r] = \mathrm{merit}(r)$}
    \EndFor
  \EndWhile
\EndFunction
\end{algorithmic}
\end{algorithm}

\section{Metaheuristic algorithms}
\label{sec:metaheuristic_algs}

In this section we introduce a second family of pathfinding algorithms for dealing with the non-isotonic behavior of $\skr$.
We present two metaheuristic methods, one based on simulated annealing and another on genetic algorithms.
These methods are attractive because they do not use model-specific assumptions, making them generally applicable to any quantum-repeater model, or indeed any non-isotonic utility function.
The price to pay for this is that they might be less effective at finding good paths for the exact model under consideration here than our more tailored algorithms, as we will see in Sec.~\ref{sec:algorithm_comparisons}.
A second attractive property about these algorithms is the ability to tune a tradeoff between effectiveness and computational cost, making it possible to control the overhead even for large networks by tolerating an appropriate degree of suboptimality.

\subsection{Simulated Annealing}
Simulated annealing is a metaheuristic algorithm inspired by thermodynamics~\cite{russellnorvig2010}.
At each time step $i$, the algorithm has a current path $p$ from $s$ to $t$ and a temperature $\theta_i > 0$ that decreases as a function of $i$.
Exactly how $\theta_i$ decreases with $i$ is a parameter of the algorithm called the \textit{cooling schedule}.
Each step, we randomly mutate $p$ to get a slightly different path $p'$, as explained in the next paragraph.
We then measure the utilities of both $p$ and $p'$ and compare them.
With a certain probability depending on the difference $\Delta$ in utilities and the current temperature $\theta_i$, we accept $p'$ and set $p \gets p'$; otherwise we leave $p$ unchanged.
Near the beginning when the temperature $\theta_i$ is large, the algorithm behaves like a random walk through the space of all paths from $s$ to $t$.
Near the end when $\theta_i$ is small, it behaves more like a greedy hill-climbing algorithm.
It is important that the algorithm occasionally accepts candidate paths $p'$ that are worse than the current path $p$, as this helps it escape local maxima in the utility function landscape.

To mutate a path $p$, we pick uniformly at random from the set of all possible mutations of $p$.
A path is considered to be a mutation of $p$ if it can be obtained from $p$ by one of the following two operations:
\begin{enumerate}
\item 
Add a repeater: Replace an edge $(a, c)$ in $p$ with the edges $(a, b)$ and $(b, c)$.
We only allow this operation if the repeater $b$ is not already in $p$ and the edges $(a, b)$ and $(b, c)$ actually exist in the graph $G$.

\item 
Remove a repeater: Replace two consecutive edges $(a, b)$ and $(b, c)$ in $p$ with the edge $(a, c)$.
We only allow this operation if the edge $(a, c)$ actually exists in the graph $G$.
\end{enumerate}
These mutations are illustrated in Fig.~\ref{fig:mutation}.

\begin{figure}[t!]
\centering
\includegraphics[width=.99\columnwidth]{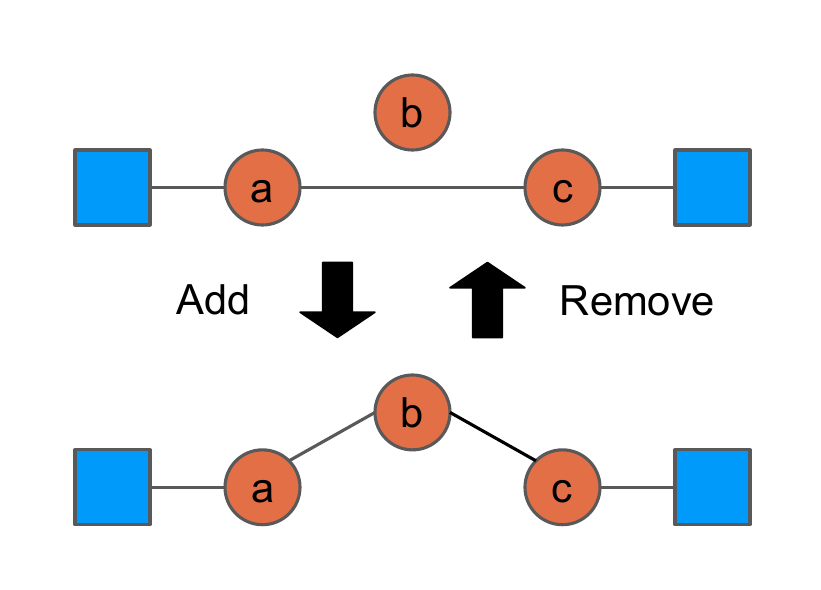}
\caption{
    Examples of the two types of path mutations used in the simulated annealing and genetic algorithms.
    \label{fig:mutation}
}
\end{figure}

We also experimented with a third mutation operation called \textit{crossing / uncrossing edges}, in which one replaces a pair of edges $(a, b)$ and $(c, d)$ in $p$ with the new edges $(a, c)$ and $(b, d)$, effectively reversing the entire subpath of nodes between $b$ and $c$.
We found this operation to give no major benefit to either simulated annealing or the genetic algorithm described below, so for simplicity we just discarded it. 

One issue we found is that when a path has too many edges, it has a SKR of zero.
This means that in the high-dimensional landscape of paths from $s$ to $t$ connected by mutations, there are wide flat valleys of paths with many edges with zero utility, and all of those candidate neighbors also have zero utility.
This is a problem for simulated annealing, which would often get stuck in such valleys.
We fixed this by modifying the utility function on paths; if the current path $p$ has a SKR of zero, then we penalize it further based on the number of edges.
Specifically, we define:

\newcommand{\skrbar}{\overline{\text{SKR}}_{F,T}}

\[ \skrbar(p) := \begin{cases}
    \skr(p) & \skr(p) > 0 \\
    -\ell \cdot (\text{\# of edges of } p) & \skr(p) = 0 \\    
\end{cases} \]

Here $\ell \ge 0$ is an algorithm parameter which we call the \textit{length penalty}. The penalty term encourages the algorithm to remove repeaters from $p$ when it is too long, allowing it to find its way back to the region of paths with positive SKR.

The pseudocode is given in Algorithm \ref{alg:annealing}.

\begin{algorithm}
  \caption{Simulated annealing}
  \label{alg:annealing}
  \begin{algorithmic}
    \State $p \gets$ initial path from $s$ to $t$
    \For {$i$ from 1 to $n$}
      \State $p' \gets$ randomly-mutate($p$)
      \State $\Delta \gets \skrbar(p') - \skrbar(p)$
      \State With probability $\min(1, \exp(\Delta / \theta_i))$:
        \State \hspace{1em} $p \gets p'$
    \EndFor
    \State \textbf{return} $p$
  \end{algorithmic}
\end{algorithm}

\subsection{Genetic Algorithm}
Short subpaths of a good path are often useful building blocks on their own.
As such, one can try to make a good path by concatenating subpaths of other good paths. 
Algorithms using this strategy are called genetic algorithms, inspired by genetic recombination in biology~\cite{russellnorvig2010}.

See Algorithm~\ref{alg:genetic} for the pseudocode.
At each generation $i$, we generate a population $P_i$ of $m$ candidate paths from $s$ to $t$.
To do this, we repeatedly select parent paths $p_1, p_2 \in P_{i-1}$ (possibly the same path twice) and recombine them to make a child path $p_c$, with a chance of an extra mutation (as explained below).
To bias parent selection toward better paths while still maintaining population diversity, we select paths randomly with probability given by the softmax function~\cite{goodfellow2016}. 
More explicitly:
\[ \text{Pr}(p\text{ is selected}) = \frac{\exp(\skr(p) / \theta)}{\sum_{p' \in P_i} \exp(\skr(p') / \theta)} \]
Here $\theta > 0$ is a parameter of the algorithm called the \textit{selection temperature} and plays a similar role here as in the simulated annealing algorithm.
When $\theta \approx 0$ the algorithm selects the best path nearly every time, which decreases diversity.
On the other hand when $\theta \to \infty$ it selects parents nearly uniformly at random.
Thus the temperature $\theta$ represents the tradeoff between the desire to increase the average fitness of the population (exploitation) and the desire to keep the population diverse (exploration).

\begin{algorithm}
  \caption{Genetic algorithm}
  \label{alg:genetic}
  \begin{algorithmic}
    \State $P_0 \gets$ initial population of paths from $s$ to $t$
    \For {$i$ from 1 to $n$}
      \State $P_i \gets \emptyset$
      \For {$j$ from 1 to $m$}
        \State $p_1 \gets$ select-parent($P_{i-1}, \theta$)
        \State $p_2 \gets$ select-parent($P_{i-1}, \theta$)
        \State $p_c \gets$ recombine($p_1, p_2$)
        \State With probability $r$:
        \State \hspace{1em} $p_c \gets$ randomly-mutate($p_c$)
        \State $P_i \gets P_i \cup \{p_c\}$
      \EndFor
    \EndFor
    \State \textbf{return} $\argmax_{p \in P_n} \skr(p)$
  \end{algorithmic}
\end{algorithm}

To recombine two parents $p_1$ and $p_2$, we pick uniformly at random from the set of all possible valid recombinations of $p_1$ and $p_2$.
We define a \textit{valid recombination} of $p_1$ and $p_2$ to be a choice of a prefix $s \rightsquigarrow a$ of $p_1$'s repeaters and a suffix $b \rightsquigarrow t$ of $p_2$'s repeaters such that the edge $(a, b)$ exists in the graph $G$.
To apply the recombination, we simply concatenate the prefix and suffix to obtain a new path $s \rightsquigarrow a \to b \rightsquigarrow t$ from $s$ to $t$.
If the prefix of $p_1$ and the suffix of $p_2$ share some repeaters, then the concatenation will have duplicate repeaters, creating a loop from a repeater to itself.
If the concatenation has such a loop, then we just remove it, both to strictly improve the child's utility and to simplify the mutation algorithm.
Finally, we randomly mutate the child path with a probability $r$ called the \textit{mutation rate}, which is another algorithm parameter.
We use the same random mutation function as in the simulated annealing algorithm.
The whole process is illustrated in Fig.~\ref{fig:recombination}. 

\begin{figure}[t!]
\centering
\includegraphics[width=.99\columnwidth]{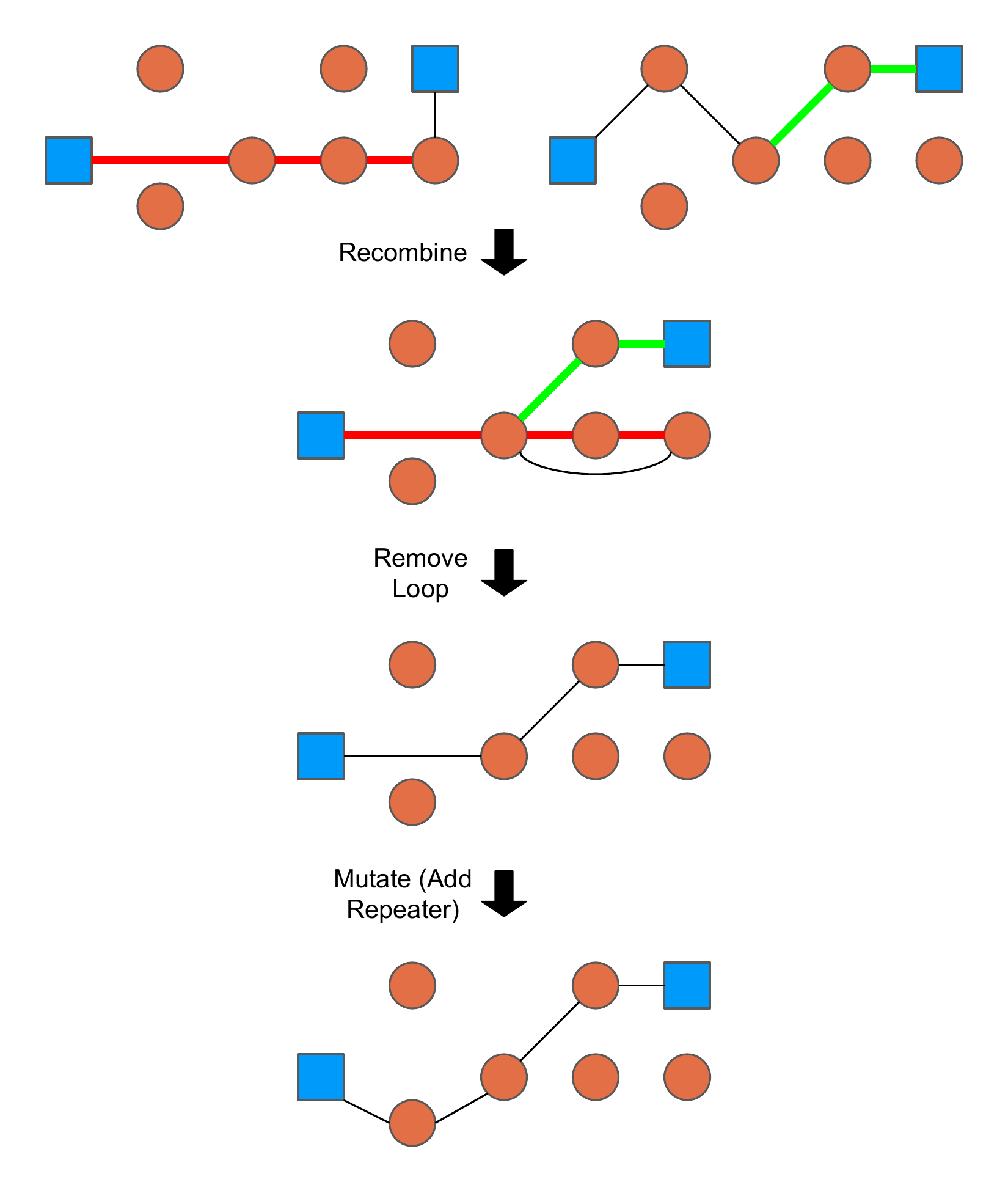}
\caption{
    An example recombination of two parent paths $p_1$ and $p_2$.
    We concatenate the red prefix of $p_1$ with the green suffix of $p_2$, joining them with the black edge.
    This creates a loop of length 3, which we then remove.
    Finally we apply a random mutation, in this case adding a repeater.
    \label{fig:recombination}
}
\end{figure}

\subsection{Selecting Good Parameters}
Both of these metaheuristic algorithms have a number of adjustable parameters.
Namely, simulated annealing has the number of iterations $n$, the length penalty $\ell$, and the cooling schedule dictating the temperature $\theta_i$ at each time step $i$.
The genetic algorithm has the number of generations $n$, the population size $m$, the selection temperature $\theta$, and the mutation rate $r$.
These parameters have a large impact on the algorithms' effectiveness, making it important to choose them carefully.
We now describe a procedure in which we run both algorithms on a set of randomly generated graphs in order to find good parameters.

In choosing our parameter values, due to CPU time constraints, we budgeted 500 calls to the utility function $\skr$ to the algorithms.
In other words, we fixed $n = 500$ in the case of simulated annealing and $n = 500 / m$ in the case of the genetic algorithm.
To tune the other parameters, we let each parameter run through about 5-10 different values, spanning several orders of magnitude when possible.
For example, for the genetic algorithm we let the selection temperature $\theta$ take values in the set $\{ 0, 0.1, 0.2, 0.5, 1, 2, 5, 10, 20, 50, 100 \}$.
For each combination of parameter values, we generated 100 random Waxman graphs for each repeater count in the set $\{ 20, 40, 60, 80, 100 \}$.
We then selected the parameter values resulting in the highest average SKR.

For the cooling schedule $\theta_i$, we tried two different types of cooling functions, namely linear functions $\theta_i = \left.\theta_0 \cdot \left( 1 - \frac{i}{n} \right)\right.$, and exponential functions $\theta_i = \theta_0 \cdot (\theta_f / \theta_0)^{i/n}$.
Here we fixed $\theta_f = 0.1$ and let $\theta_0$ run through the set $\{ 0.1, 0.2, 0.5, 1, 2, 5, 10, 20, 50, 100 \}$. 

For simulated annealing at $n = 500$, the best values found were $\ell = 2$ with a linear cooling schedule $\theta_i = 2 \cdot \left( 1 - \frac{i}{n} \right)$.
For the genetic algorithm, the best values found were $m = 25$ (and thus $n = 500 / 25 = 20$), $\theta = 0.5$, and $r = 0.75$.

\section{Performance of Algorithms}
\label{sec:algorithm_comparisons}
To evaluate the performance of our algorithms, we ran each of the algorithms described above, in addition to the enumeration method developed by Caleffi~\cite{caleffi2017}, on a set of randomly generated graphs, and we note both the query count and the relative SKR inefficiency of each algorithm.

The query count is a metric for how fast each of the algorithms is, and it is defined as the number of calls to the subroutine for estimating the secret key rate.
As shown in Appendix~\ref{app:complexity validity}, this is a roughly linear approximation of the actual CPU time.
However, this linear relationship breaks down for large repeater counts, as shown in Appendix~\ref{app:tradeoff_1000}, likely due to excessive overhead from dominance evaluations.

The relative SKR inefficiency is a metric for the quality of a particular solution to a particular pathfinding algorithm relative to a set of solutions.
The estimated optimal secret-key rate is the lowest secret-key rate among the proposed solutions to the particular pathfinding problem.
The secret-key rate inefficiency is the difference between the estimated optimal secret-key rate and the secret-key rate of the particular path.
The relative secret-key rate is the secret-key rate inefficiency divided by the estimated optimal secret-key rate.
For instance, if the best path from $s$ to $t$ in a network has a SKR of 5 hertz, and the path found in a particular run is 4 hertz, then the relative SKR inefficiency is 20\%, since the SKR that was achieved is 20\% lower than the maximum SKR.
In addition, if there is no usable path through a repeater network (i.e.~the best achievable SKR is 0), then all of the algorithms are correct, and the relative SKR inefficiency is defined as 0.
Since our utility function is evaluated using a Monte-Carlo method (as discussed in Sec.~\ref{sec:utility_function}), there is some inherent error, meaning that even algorithms that should be perfect may occasionally produce incorrect results.
However, as shown in Appendix~\ref{app:monte_carlo_error}, this is not an issue in practice.

\begin{figure}[t!]
\centering
\includegraphics[width=.99\columnwidth]{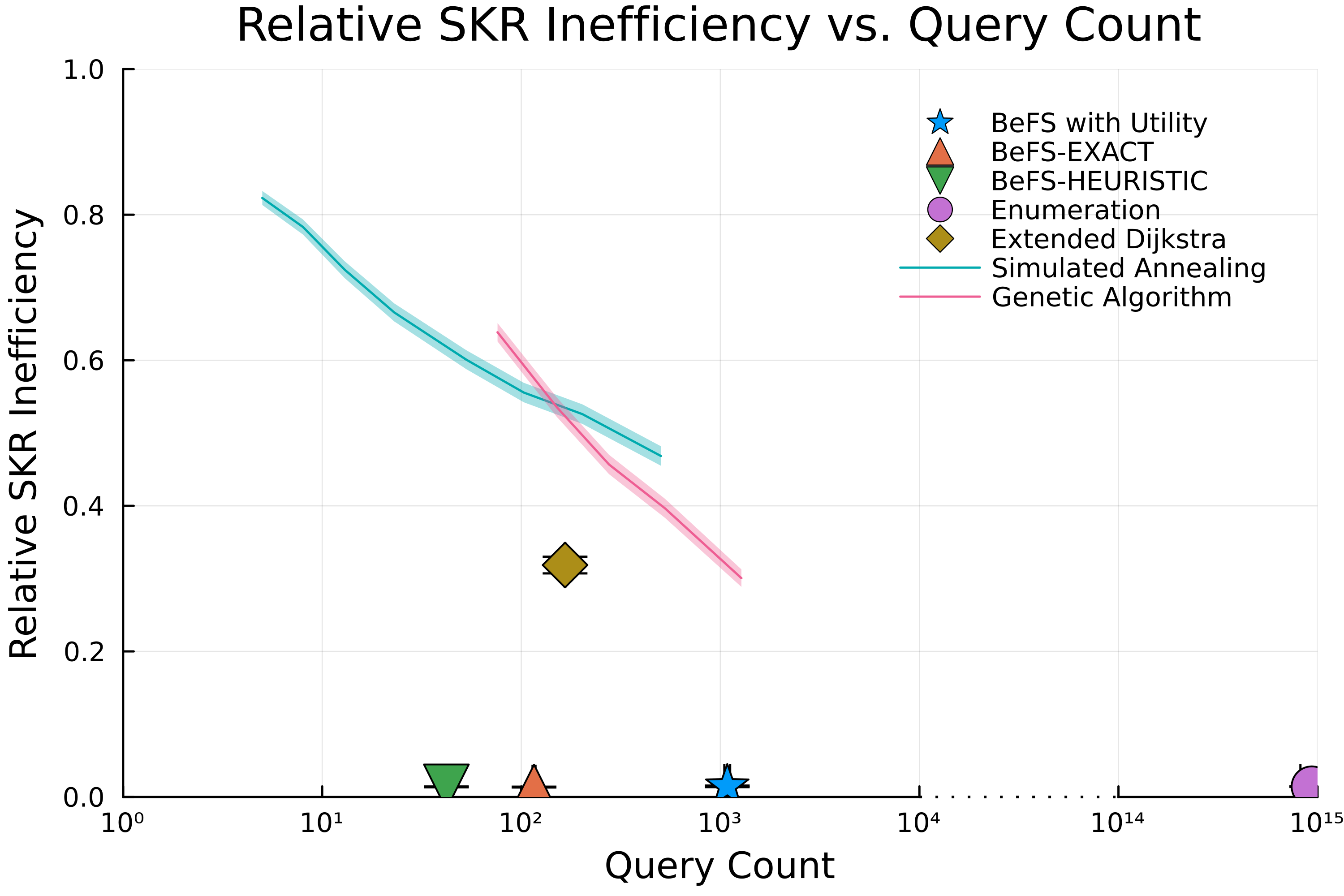}
\caption{
    A scatterplot of the query count of the algorithm (x-axis) vs. the relative secret key rate inefficiency (y-axis) when benchmarking the algorithms described in this paper on a set of randomly generated graphs.
    The query count is the number of times that the SKR estimation algorithm is called.
    The algorithms were run on a set of 1000 randomly generated Waxman graphs that each have 25 repeaters.
    The graphs were generated with the parameters $\alpha=0.5, \beta=0.9, L=300\mathrm{km}$.
    The metaheuristic algorithms (simulated annealing and genetic algorithms) are parameterized by the number of steps, so they are plotted as lines rather than points.
    The simulated annealing algorithms are plotted for 1 to 500 steps, and the genetic algorithms are plotted for 1 to 50 generations.
    \label{fig:query_complexity_vs_skr_deficit}
}
\end{figure}
In Fig.~\ref{fig:query_complexity_vs_skr_deficit}, we run each of the algorithms described above on a set of 1000 random Waxman graphs with 25 repeaters each and plot the query count against the relative SKR inefficiency.
For the BeFS algorithms, the figure indicates that the relative SKR inefficiency is negligible (roughly 0), with the fastest algorithm being BeFS-HEURISTIC.
Since BeFS-HEURISTIC has the lowest query count and the same relative SKR inefficiency, this figure indicates BeFS-HEURISTIC performs the best in practice.
For the simulated annealing and genetic algorithms, the figure shows an inverse correlation between query count and relative SKR inefficiency, suggesting that there is a trade-off between these metrics within both families of algorithms.
For this particular graph configuration, genetic algorithms appear to surpass simulated annealing algorithms in terms of quality past a certain number of repeaters.
The trade-off point appears to be around $120$ queries, and as shown in Appendix~\ref{app:tradeoff_50}.
Furthermore, whilst the extended Dijkstra's algorithm is strictly better (lower relative SKR inefficiency and lower query complexity) than most parameterizations of the metaheuristic algorithms for this graph configuration, Appendix~\ref{app:tradeoff_50} shows that the metaheuristic algorithms overtake the extended Dijkstra's algorithm for higher repeater counts.

\begin{figure}[t!]
\centering
\includegraphics[width=.99\columnwidth]{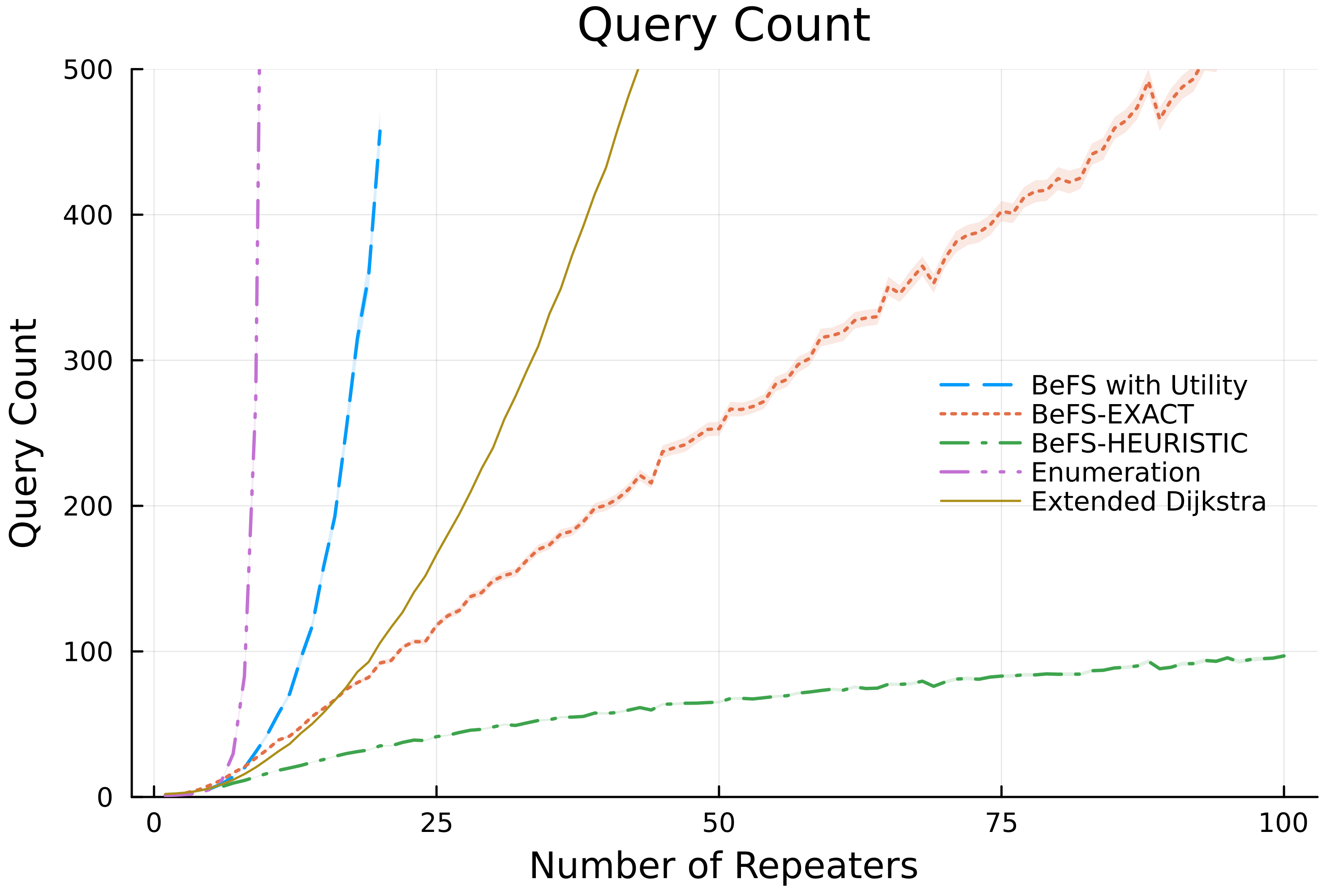}
\caption{
    A plot of the query count of the algorithm (y-axis) vs. the number of repeaters in the graph (x-axis) when benchmarking the algorithms discussed in this paper on a set of randomly generated graphs.
    The query count is the number of times that the SKR estimation algorithm is called.
    The set of graphs was created by generating 1000 random Waxman graphs with $\alpha=0.5, \beta=0.9, L=300\mathrm{km}$ for each graph size between 1 and 100 repeaters (inclusive).
    Only the metaheuristic algorithms have been excluded, since they have a fixed query count, regardless of the graph being evaluated.
    \label{fig:num_rep_vs_query_complexity}
}
\end{figure}
In Fig.~\ref{fig:num_rep_vs_query_complexity}, we plot the query count of each algorithm against the number of repeaters, with 1000 graphs being used for each repeater count.
Since simulated annealing and genetic algorithms have a constant number of function calls, they would form perfectly horizontal lines on this graph and are thus not included.
This plot shows that within the tested parameter space, enumeration appears to have a super-linear, rapidly increasing query count with respect to the number of repeaters, matching the theoretical bounds and becoming intractable for $n \geq 10$.
Surprisingly, the plot shows all of the best-first algorithms having better asymptotic behavior compared to the extended Dijkstra's algorithm, with BeFS-HEURISTIC in particular having sublinear query count in the tested regime.
As described in Sec.~\ref{sec:deterministic_algs}, we suspect that this is due to SKR estimation queries being replaced with cheaper dominance comparisons.

\begin{figure}[t!]
\centering
\includegraphics[width=.99\columnwidth]{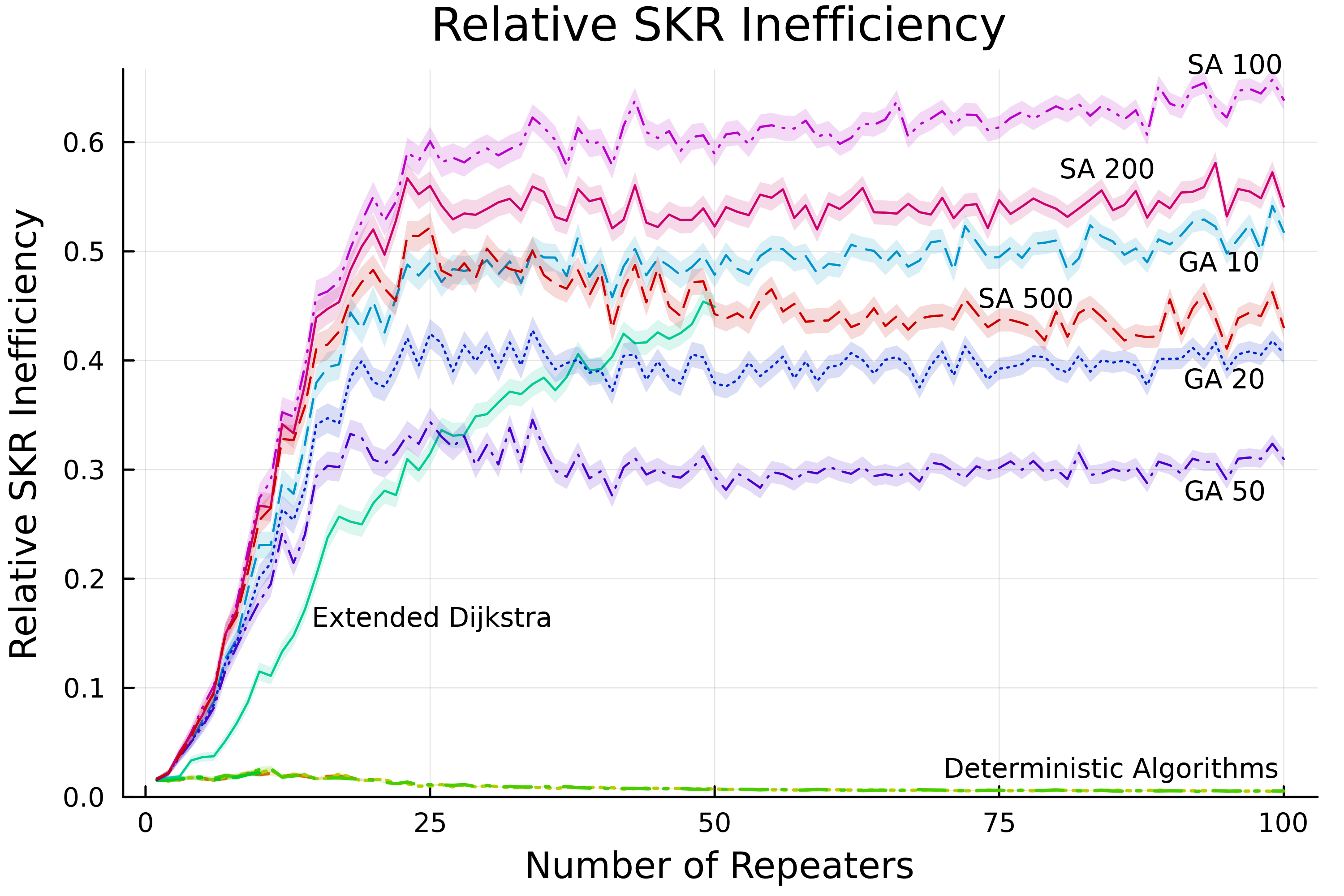}
\caption{
    A plot of the relative SKR inefficiency (y-axis) vs. the number of repeaters in the graph (x-axis) when benchmarking the algorithms discussed in this paper on a set of randomly generated graphs. 
    The query count is the number of times that the SKR estimation algorithm is called.
    The simulated annealing algorithms were evaluated for 100, 200, and 500 steps, and the genetic algorithms were evaluated for 10, 20 and 50 generations.
    The set of graphs was created by generating 1000 random Waxman graphs with $\alpha=0.5, \beta=0.9, L=300\mathrm{km}$ for each graph size between 1 and 100 repeaters (inclusive).
    The algorithms are labeled adjacent to the lines, with simulated annealing indicated by "SA (number of steps)", genetic algorithms indicated by "GA (number of generations)", and "Deterministic Algorithms" being a group consisting of enumeration, BeFS with utility, BeFS-EXACT, and BeFS-HEURISTIC (see Appendix~\ref{app:monte_carlo_error} for more a zoom-in on these algorithms).
    \label{fig:num_rep_vs_skr_deficit}
}
\end{figure}
In Fig.~\ref{fig:num_rep_vs_skr_deficit} we plot the average relative SKR inefficiency of each algorithm against the number of repeaters.
We see that the average relative SKR inefficiency of the exact and near-exact algorithms are close to 0 for all repeater counts, and as shown in Appendix~\ref{app:monte_carlo_error}, the small error explainable by error in the utility function, confirming the hypothesis that these algorithms reliably produce the optimal path.
The extended Dijkstra's algorithm's solutions appear to decrease in relative quality as the number of repeaters increases, suggesting that this family of repeater networks becomes less isotonic as the number of repeaters increases.
In contrast, the metaheuristic algorithms' relative inefficiencies all increase with increasing repeater counts before plateauing to a fixed relative inefficiency after 25 repeaters.
The relative inefficiencies observed in the plateau regions appear to decrease as the number of iterations increase for each of the metaheuristic algorithms, suggesting that these algorithms are promising for graphs with extremely large repeater counts.

\section{Conclusion}
\label{sec:conclusion}
We have introduced both heuristic and exact algorithms for the important task of pathfinding, which will form an essential component of routing in quantum networks. 
The presence of non-isotonicity in metrics for pathfinding has often been neglected, even though we find it to be a generic feature in realistic quantum networks.
Taking the non-isotonicity into consideration is not just a technicality; our algorithms improve the end-to-end secret-key rate by orders of magnitude in some cases when compared with existing state of the art.
At the same time, prior algorithms that include non-isotonicity in the optimization have super-exponential complexity~\cite{caleffi2017}, while our algorithms remain efficient, in one case effectively achieving a sublinear query count scaling in the network size.
Note, however, that this does not confer sublinear time scaling for significantly larger networks, as shown in Appendix~\ref{app:tradeoff_1000}.

We evaluated our algorithms using a particular model of entanglement distribution:
swap-ASAP repeaters distributing Werner states without distillation, where the elementary link generation probability decays exponentially with the distance.
This model captures most of the essentials of near-term quantum networks, and we have explicitly demonstrated that the secret-key rate is not isotonic in this model because of the inherent rate-fidelity tradeoff.
We expect this non-isotonicity to be generic as all practical quantum networks will need to balance the rate and quality of the delivered states, independent of the underlying model, mode of distribution, or application that the entanglement is used for.
An important question is therefore whether our algorithms can also be used for pathfinding in the context of other models.
For any model where equidistant repeater placement is near optimal, BeFS-HEURISTIC should be highly effective as its underlying principle is to bound path performance by searching over homogeneous path extensions.
We expect that such behavior is typical (e.g.~equidistant repeater placement maximizes the chain's theoretical quantum capacity~\cite{pirandola2019end}) and that the algorithm can still be used when the model is extended with, e.g., cut-offs~\cite{li2021efficient}, initial losses~\cite{rozpkedek2018parameter}, and simple distillation protocols~\cite{goodenough2024near}.
Moreover, our metaheuristic methods (simulated annealing and genetic algorithms) make no specific assumptions on the model at all, and can in fact be used for pathfinding with any non-isotonic utility function.

In a similar vein, even in more intricate network settings, e.g.~multiple simultaneous requests and/or multipartite state routing, it is expected that any practical metric will be non-isotonic. Future work should attempt to generalize our findings to these more complicated and rich settings, see e.g.~\cite{bugalho2023distributing}.

\bibliographystyle{plain}
\bibliography{pathfinding.bib}

@misc{abane2024,
  title = {Entanglement {{Routing}} in {{Quantum Networks}}: {{A Comprehensive Survey}}},
  shorttitle = {Entanglement {{Routing}} in {{Quantum Networks}}},
  author = {Abane, Amar and Cubeddu, Michael and Mai, Van Sy and Battou, Abdella},
  year = {2024},
  month = aug,
  number = {arXiv:2408.01234},
  eprint = {2408.01234},
  primaryclass = {quant-ph},
  publisher = {arXiv},
  doi = {10.48550/arXiv.2408.01234},
  url = {http://arxiv.org/abs/2408.01234},
  urldate = {2024-08-09},
  abstract = {Entanglement routing in near-term quantum networks consists of choosing the optimal sequence of short-range entanglements to combine through swapping operations to establish end-to-end entanglement between two distant nodes. Similar to traditional routing technologies, a quantum routing protocol uses network information to choose the best paths to satisfy a set of end-to-end entanglement requests. However, in addition to network state information, a quantum routing protocol must also take into account the requested entanglement fidelity, the probabilistic nature of swapping operations, and the short lifetime of entangled states. In this work, we formulate a practical entanglement routing problem and analyze and categorize the main approaches to address it, drawing comparisons to, and inspiration from, classical network routing strategies where applicable. We classify and discuss the studied quantum routing schemes into reactive, proactive, opportunistic, and virtual routing},
  archiveprefix = {arXiv},
  keywords = {quantum-networks,review,routing}
}

@article{rozpkedek2018parameter,
  title={Parameter regimes for a single sequential quantum repeater},
  author={Rozp{\k{e}}dek, Filip and Goodenough, Kenneth and Ribeiro, Jeremy and Kalb, Norbert and Vivoli, V Caprara and Reiserer, Andreas and Hanson, Ronald and Wehner, Stephanie and Elkouss, David},
  journal={Quantum Science and Technology},
  volume={3},
  number={3},
  pages={034002},
  year={2018},
  publisher={IOP Publishing}
}

@article{foead2021systematic,
  title={A systematic literature review of A* pathfinding},
  author={Foead, Daniel and Ghifari, Alifio and Kusuma, Marchel Budi and Hanafiah, Novita and Gunawan, Eric},
  journal={Procedia Computer Science},
  volume={179},
  pages={507--514},
  year={2021},
  publisher={Elsevier}
}

@article{rozpkedek2019near,
  title={Near-term quantum-repeater experiments with nitrogen-vacancy centers: Overcoming the limitations of direct transmission},
  author={Rozp{\k{e}}dek, Filip and Yehia, Raja and Goodenough, Kenneth and Ruf, Maximilian and Humphreys, Peter C and Hanson, Ronald and Wehner, Stephanie and Elkouss, David},
  journal={Physical Review A},
  volume={99},
  number={5},
  pages={052330},
  year={2019},
  publisher={APS}
}

@article{avis2022a,
  title = {Requirements for a Processing-Node Quantum Repeater on a Real-World Fiber Grid},
  author = {Avis, Guus and {Ferreira da Silva}, Francisco and Coopmans, Tim and Dahlberg, Axel and Jirovsk{\'a}, Hana and Maier, David and Rabbie, Julian and {Torres-Knoop}, Ariana and Wehner, Stephanie},
  year = {2023},
  month = oct,
  journal = {npj Quantum Information},
  volume = {9},
  number = {1},
  pages = {1--12},
  publisher = {Nature Publishing Group},
  issn = {2056-6387},
  doi = {10.1038/s41534-023-00765-x},
  url = {https://www.nature.com/articles/s41534-023-00765-x},
  urldate = {2023-10-18},
  abstract = {We numerically study the distribution of entanglement between the Dutch cities of Delft and Eindhoven realized with a processing-node quantum repeater and determine minimal hardware requirements for verifiable blind quantum computation using color centers and trapped ions. Our results are obtained considering restrictions imposed by a real-world fiber grid and using detailed hardware-specific models. By comparing our results to those we would obtain in idealized settings, we show that simplifications lead to a distorted picture of hardware demands, particularly on memory coherence and photon collection. We develop general machinery suitable for studying arbitrary processing-node repeater chains using NetSquid, a discrete-event simulator for quantum networks. This enables us to include time-dependent noise models and simulate repeater protocols with cut-offs, including the required classical control communication. We find minimal hardware requirements by solving an optimization problem using genetic algorithms on a high-performance-computing cluster. Our work provides guidance for further experimental progress, and showcases limitations of studying quantum-repeater requirements in idealized situations.},
  copyright = {2023 Springer Nature Limited},
  langid = {english},
  keywords = {BQC,ion trap,NV,quantum-networks,quantum-repeater,simulation,swap-asap}
}

@article{avis2024,
  title = {Asymmetric Node Placement in Fiber-Based Quantum Networks},
  author = {Avis, Guus and Knegjens, Robert and S{\o}rensen, Anders S. and Wehner, Stephanie},
  year = {2024},
  month = may,
  journal = {Physical Review A},
  volume = {109},
  number = {5},
  pages = {052627},
  publisher = {American Physical Society},
  doi = {10.1103/PhysRevA.109.052627},
  url = {https://link.aps.org/doi/10.1103/PhysRevA.109.052627},
  urldate = {2024-06-12},
  abstract = {Restrictions imposed by existing infrastructure can make it hard to ensure an even spacing between the nodes of future fiber-based quantum networks. We investigate here the negative effects of asymmetric node placement by considering separately the placement of midpoint stations required for heralded entanglement generation, as well as of processing-node quantum repeaters in a chain. For midpoint stations, we describe the effect asymmetry has on the time required to perform one entangling attempt, the success probability of such an attempt, and the fidelity of the entangled states created. This includes accounting for the effects of chromatic dispersion on photon indistinguishability. For quantum-repeater chains, we numerically investigate how uneven spacing between repeater nodes leads to bottlenecks, thereby increasing both the waiting time and the time states are stored in noisy quantum memory. We find that while the time required to perform one entangling attempt may increase linearly with the midpoint's asymmetry, the success probability and fidelity of heralded entanglement generation and the distribution time and error rate for repeater chains all have vanishing first derivatives with respect to the amount of asymmetry. This suggests resilience of quantum-network performance against small amounts of asymmetry.},
  keywords = {asymmetry,double-click,network-design,quantum-networks,quantum-optics,quantum-repeater,single-click,swap-asap}
}

@article{avis2025,
  title = {Optimization of Quantum-Repeater Networks Using Stochastic Automatic Differentiation},
  author = {Avis, Guus and Krastanov, Stefan},
  year = {2025},
  month = aug,
  journal = {Physical Review Research},
  volume = {7},
  number = {3},
  pages = {033111},
  publisher = {American Physical Society},
  doi = {10.1103/2g9m-h3jm},
  url = {https://link.aps.org/doi/10.1103/2g9m-h3jm},
  urldate = {2025-08-15},
  abstract = {Quantum repeaters are envisioned to enable long-distance entanglement distribution. Analysis of quantum-repeater networks could hasten their realization by informing design decisions and research priorities. Determining derivatives of network properties is crucial toward that end, facilitating optimizations and revealing parameter sensitivity. Doing so, however, is difficult because the networks are discretely random. Here we use a recently developed technique, stochastic automatic differentiation, to automatically extract derivatives from discrete Monte Carlo simulations of repeater networks. With these derivatives, we optimize rate-fidelity tradeoffs in a repeater chain, determine the chain's sensitivity with respect to the coherence times of different nodes, and finally choose the locations of quantum repeaters in a two-dimensional plane to optimize the guaranteed quality of service between four end nodes. In particular, the technique enabled us to discover how the best achievable quality of service, the minimal number of repeaters required to improve a network, and the number of repeaters required to saturate the network scale with the physical size of the network.},
  keywords = {network-design,optimization,quantum-networks,simulated annealing,simulation}
}

@article{barrett2005,
  title = {Efficient High-Fidelity Quantum Computation Using Matter Qubits and Linear Optics},
  author = {Barrett, Sean D. and Kok, Pieter},
  year = {2005},
  month = jun,
  journal = {Physical Review A},
  volume = {71},
  number = {6},
  pages = {060310(R)},
  publisher = {American Physical Society},
  doi = {10.1103/PhysRevA.71.060310},
  url = {https://link.aps.org/doi/10.1103/PhysRevA.71.060310},
  urldate = {2021-06-16},
  abstract = {We propose a practical, scalable, and efficient scheme for quantum computation using spatially separated matter qubits and single-photon interference effects. The qubit systems can be nitrogen-vacancy centers in diamond, Pauli-blockade quantum dots with an excess electron, or trapped ions with optical transitions, which are each placed in a cavity and subsequently entangled using a double-heralded single-photon detection scheme. The fidelity of the resulting entanglement is extremely robust against the most important errors such as detector loss, spontaneous emission, and mismatch of cavity parameters. We demonstrate how this entangling operation can be used to efficiently generate cluster states of many qubits, which, together with single-qubit operations and readout, can be used to implement universal quantum computation. Existing experimental parameters indicate that high-fidelity clusters can be generated with a moderate constant overhead.},
  keywords = {double-click,important,quantum-networks,quantum-repeater}
}

@article{bennett1992a,
  title = {Quantum Cryptography without {{Bell}}'s Theorem},
  author = {Bennett, Charles H. and Brassard, Gilles and Mermin, N. David},
  year = {1992},
  month = feb,
  journal = {Physical Review Letters},
  volume = {68},
  number = {5},
  pages = {557--559},
  publisher = {American Physical Society},
  doi = {10.1103/PhysRevLett.68.557},
  url = {https://link.aps.org/doi/10.1103/PhysRevLett.68.557},
  urldate = {2020-10-23},
  abstract = {Ekert has described a cryptographic scheme in which Einstein-Podolsky-Rosen (EPR) pairs of particles are used to generate identical random numbers in remote places, while Bell's theorem certifies that the particles have not been measured in transit by an eavesdropper. We describe a related but simpler EPR scheme and, without invoking Bell's theorem, prove it secure against more general attacks, including substitution of a fake EPR source. Finally we show our scheme is equivalent to the original 1984 key distribution scheme of Bennett and Brassard, which uses single particles instead of EPR pairs., This article appears in the following collection:},
  keywords = {important,QKD}
}

@article{cabrillo1999,
  title = {Creation of Entangled States of Distant Atoms by Interference},
  author = {Cabrillo, C. and Cirac, J. I. and {Garc{\'i}a-Fern{\'a}ndez}, P. and Zoller, P.},
  year = {1999},
  month = feb,
  journal = {Physical Review A},
  volume = {59},
  number = {2},
  pages = {1025--1033},
  publisher = {American Physical Society},
  doi = {10.1103/PhysRevA.59.1025},
  url = {https://link.aps.org/doi/10.1103/PhysRevA.59.1025},
  urldate = {2021-06-11},
  abstract = {We propose a scheme to create distant entangled atomic states. It is based on driving two (or more) atoms with a weak laser pulse, so that the probability that two atoms are excited is negligible. If the subsequent spontaneous emission is detected, the entangled state is created. We have developed a model to analyze the fidelity of the resulting state as a function of the dimensions and location of the detector, and the motional properties of the atoms.},
  keywords = {important,quantum-networks,quantum-optics,quantum-repeater,single-click}
}

@article{caleffi2017,
  title = {Optimal {{Routing}} for {{Quantum Networks}}},
  author = {Caleffi, Marcello},
  year = {2017},
  journal = {IEEE Access},
  volume = {5},
  pages = {22299--22312},
  issn = {2169-3536},
  doi = {10.1109/ACCESS.2017.2763325},
  url = {https://ieeexplore.ieee.org/document/8068178},
  urldate = {2024-06-20},
  abstract = {To fully unleash the potentials of quantum computing, several new challenges and open problems need to be addressed. From a routing perspective, the optimal routing problem, i.e., the problem of jointly designing a routing protocol and a route metric assuring the discovery of the route providing the highest quantum communication opportunities between an arbitrary couple of quantum devices, is crucial. In this paper, the optimal routing problem is addressed for generic quantum network architectures composed by repeaters operating through single atoms in optical cavities. Specifically, we first model the entanglement generation through a stochastic framework that allows us to jointly account for the key physical-mechanisms affecting the end-to-end entanglement rate, such as decoherence time, atom-photon and photon-photon entanglement generation, entanglement swapping, and imperfect Bell-state measurement. Then, we derive the closed-form expression of the end-to-end entanglement rate for an arbitrary path and we design an efficient algorithm for entanglement rate computation. Finally, we design a routing protocol and we prove its optimality when used in conjunction with the entanglement rate as routing metric.},
  keywords = {quantum-networks,routing}
}

@article{coopmans2021,
  title = {{{NetSquid}}, a {{NETwork Simulator}} for {{QUantum Information}} Using {{Discrete}} Events},
  author = {Coopmans, Tim and Knegjens, Robert and Dahlberg, Axel and Maier, David and Nijsten, Loek and {de Oliveira Filho}, Julio and Papendrecht, Martijn and Rabbie, Julian and Rozp{\k e}dek, Filip and Skrzypczyk, Matthew and Wubben, Leon and {de Jong}, Walter and Podareanu, Damian and {Torres-Knoop}, Ariana and Elkouss, David and Wehner, Stephanie},
  year = {2021},
  month = jul,
  journal = {Communications Physics},
  volume = {4},
  number = {1},
  pages = {1--15},
  publisher = {Nature Publishing Group},
  issn = {2399-3650},
  doi = {10.1038/s42005-021-00647-8},
  url = {http://www.nature.com/articles/s42005-021-00647-8},
  urldate = {2022-03-05},
  abstract = {In order to bring quantum networks into the real world, we would like to determine the requirements of quantum network protocols including the underlying quantum hardware. Because detailed architecture proposals are generally too complex for mathematical analysis, it is natural to employ numerical simulation. Here we introduce NetSquid, the NETwork Simulator for QUantum Information using Discrete events, a discrete-event based platform for simulating all aspects of quantum networks and modular quantum computing systems, ranging from the physical layer and its control plane up to the application level. We study several use cases to showcase NetSquid's power, including detailed physical layer simulations of repeater chains based on nitrogen vacancy centres in diamond as well as atomic ensembles. We also study the control plane of a quantum switch beyond its analytically known regime, and showcase NetSquid's ability to investigate large networks by simulating entanglement distribution over a chain of up to one thousand nodes.},
  copyright = {2021 The Author(s)},
  langid = {english},
  keywords = {Delft,quantum-networks,simulation,swap-asap}
}

@article{coopmans2022,
  title = {Improved Analytical Bounds on Delivery Times of Long-Distance Entanglement},
  author = {Coopmans, Tim and Brand, Sebastiaan and Elkouss, David},
  year = {2022},
  month = jan,
  journal = {Physical Review A},
  volume = {105},
  number = {1},
  pages = {012608},
  publisher = {American Physical Society},
  doi = {10.1103/PhysRevA.105.012608},
  url = {https://link.aps.org/doi/10.1103/PhysRevA.105.012608},
  urldate = {2022-03-05},
  abstract = {The ability to distribute high-quality entanglement between remote parties is a necessary primitive for many quantum communication applications. A large range of schemes for realizing the long-distance delivery of remote entanglement has been proposed, for both bipartite and multipartite entanglement. For assessing the viability of these schemes, knowledge of the time at which entanglement is delivered is crucial. Specifically, if the communication task requires multiple remote-entangled quantum states and these states are generated at different times by the scheme, the earlier states will need to wait and thus their quality will decrease while being stored in an (imperfect) memory. For the remote-entanglement delivery schemes which are closest to experimental reach, this time assessment is challenging, as they consist of nondeterministic components such as probabilistic entanglement swaps. For many such protocols even the average time at which entanglement can be distributed is not known exactly, in particular when they consist of feedback loops and forced restarts. In this work, we provide improved analytical bounds on the average and on the quantiles of the completion time of entanglement distribution protocols in the case that all network components have success probabilities lower bounded by a constant. A canonical example of such a protocol is a nested quantum repeater scheme which consists of heralded entanglement generation and entanglement swaps. For this scheme specifically, our results imply that a common approximation to the mean entanglement distribution time, the 3-over-2 formula, is in essence an upper bound to the real time. Our results rely on a novel connection with reliability theory.},
  keywords = {Delft,quantum-networks,theory}
}

@article{dijkstra1959,
  title = {A Note on Two Problems in Connexion with Graphs},
  author = {Dijkstra, E. W.},
  year = {1959},
  month = dec,
  journal = {Numer. Math.},
  volume = {1},
  number = {1},
  pages = {269--271},
  issn = {0029-599X},
  doi = {10.1007/BF01386390},
  url = {https://doi.org/10.1007/BF01386390},
  urldate = {2025-08-20},
  keywords = {computer science,important,routing}
}

@article{goodenough2025,
  title = {On Noise in Swap {{ASAP}} Repeater Chains: Exact Analytics, Distributions and Tight Approximations},
  shorttitle = {On Noise in Swap {{ASAP}} Repeater Chains},
  author = {Goodenough, Kenneth and Coopmans, Tim and Towsley, Don},
  year = {2025},
  month = may,
  journal = {Quantum},
  volume = {9},
  pages = {1744},
  publisher = {Verein zur F{\"o}rderung des Open Access Publizierens in den Quantenwissenschaften},
  doi = {10.22331/q-2025-05-15-1744},
  url = {https://quantum-journal.org/papers/q-2025-05-15-1744/},
  urldate = {2025-05-29},
  abstract = {Kenneth Goodenough, Tim Coopmans, and Don Towsley, Quantum 9, 1744 (2025). Losses are one of the main bottlenecks for the distribution of entanglement in quantum networks, which can be overcome by the implementation of quantum repeaters. The most basic form of a qu{\dots}},
  langid = {british},
  keywords = {important,performance analysis,quantum-networks,quantum-repeater,repeater generation 1,repeater-calculation,swap-asap,theory}
}

@book{goodfellow2016,
    title={Deep Learning},
    author={Goodfellow, Ian and Bengio, Yoshua and Courville, Aaron},
    publisher={MIT Press},
    note={\url{http://www.deeplearningbook.org}},
    year={2016}
}

@article{haldar2025,
  title = {Reducing Classical Communication Costs in Multiplexed Quantum Repeaters Using Hardware-Aware Quasi-Local Policies},
  author = {Haldar, Stav and Barge, Pratik J. and Cheng, Xiang and Chang, Kai-Chi and Kirby, Brian T. and Khatri, Sumeet and Wong, Chee Wei and Lee, Hwang},
  year = {2025},
  month = apr,
  journal = {Communications Physics},
  volume = {8},
  number = {1},
  pages = {1--20},
  publisher = {Nature Publishing Group},
  issn = {2399-3650},
  doi = {10.1038/s42005-025-02029-w},
  url = {https://www.nature.com/articles/s42005-025-02029-w},
  urldate = {2025-05-29},
  abstract = {Future quantum networks will have nodes equipped with multiple quantum memories, allowing for multiplexing and entanglement distillation strategies for long-distance entanglement distribution. In this work, we focus on quasi-local policies for multiplexed quantum repeater chains. In fully-local policies, nodes use the knowledge of only their own states, whereas more efficient global policies use knowledge of the entire network state. The classical communication costs of using this knowledge have not been explored in existing literature. We show that quasi-local policies not only obtain improved performance over local policies, but also reduce classical communication costs considerably. Our policies also outperform the widely studied nested purification and doubling policy in practical parameter regimes. We identify parameter regimes where distillation is useful and address the question: ``Should we distill before swapping, or vice versa?'' Finally, we propose an implementation scheme for a multiplexed repeater chain, experimentally demonstrate the key element, a high-dimensional biphoton frequency comb, and evaluate its anticipated performance using our multiplexing-based policies.},
  copyright = {2025 The Author(s)},
  langid = {english},
  keywords = {distillation,machine learning,multiplexing,optimization,quantum-networks,repeater generation 1,swap-asap}
}

@article{hart1968,
  title = {A {{Formal Basis}} for the {{Heuristic Determination}} of {{Minimum Cost Paths}}},
  author = {Hart, Peter E. and Nilsson, Nils J. and Raphael, Bertram},
  year = {1968},
  month = jul,
  journal = {IEEE Transactions on Systems Science and Cybernetics},
  volume = {4},
  number = {2},
  pages = {100--107},
  issn = {2168-2887},
  doi = {10.1109/TSSC.1968.300136},
  url = {https://ieeexplore.ieee.org/document/4082128},
  urldate = {2025-09-17},
  abstract = {Although the problem of determining the minimum cost path through a graph arises naturally in a number of interesting applications, there has been no underlying theory to guide the development of efficient search procedures. Moreover, there is no adequate conceptual framework within which the various ad hoc search strategies proposed to date can be compared. This paper describes how heuristic information from the problem domain can be incorporated into a formal mathematical theory of graph searching and demonstrates an optimality property of a class of search strategies.}
}

@article{inesta2023,
  title = {Optimal Entanglement Distribution Policies in Homogeneous Repeater Chains with Cutoffs},
  author = {I{\~n}esta, {\'A}lvaro G. and Vardoyan, Gayane and Scavuzzo, Lara and Wehner, Stephanie},
  year = {2023},
  month = may,
  journal = {npj Quantum Information},
  volume = {9},
  number = {1},
  pages = {46},
  publisher = {Nature Publishing Group},
  issn = {2056-6387},
  doi = {10.1038/s41534-023-00713-9},
  url = {https://www.nature.com/articles/s41534-023-00713-9},
  urldate = {2023-05-13},
  abstract = {We study the limits of bipartite entanglement distribution using a chain of quantum repeaters that have quantum memories. To generate end-to-end entanglement, each node can attempt the generation of an entangled link with a neighbor, or perform an entanglement swapping measurement. A maximum storage time, known as cutoff, is enforced on the memories to ensure high-quality entanglement. Nodes follow a policy that determines when to perform each operation. Global-knowledge policies take into account all the information about the entanglement already produced. Here, we find global-knowledge policies that minimize the expected time to produce end-to-end entanglement. Our methods are based on Markov decision processes and value and policy iteration. We compare optimal policies to a policy in which nodes only use local information. We find that the advantage in expected delivery time provided by an optimal global-knowledge policy increases with increasing number of nodes and decreasing probability of successful swapping.},
  copyright = {2023 The Author(s)},
  langid = {english},
  keywords = {cutoff,machine learning,Markov process,network-protocol,optimization,quantum-networks,quantum-repeater,swap-asap}
}

@article{kamin2023,
  title = {Exact Rate Analysis for Quantum Repeaters with Imperfect Memories and Entanglement Swapping as Soon as Possible},
  author = {Kamin, Lars and Shchukin, Evgeny and Schmidt, Frank and {van Loock}, Peter},
  year = {2023},
  month = may,
  journal = {Physical Review Research},
  volume = {5},
  number = {2},
  pages = {023086},
  publisher = {American Physical Society},
  doi = {10.1103/PhysRevResearch.5.023086},
  url = {https://link.aps.org/doi/10.1103/PhysRevResearch.5.023086},
  urldate = {2023-09-06},
  abstract = {We present an exact rate analysis for a secret key that can be shared among two parties employing a linear quantum repeater chain. One of our main motivations is to address the question whether simply placing quantum memories along a quantum communication channel can be beneficial in a realistic setting. The underlying model assumes deterministic entanglement swapping of single-spin quantum memories and it excludes probabilistic entanglement distillation, and thus two-way classical communication, on higher nesting levels. Within this framework, we identify the essential properties of any optimal repeater scheme: entanglement distribution in parallel, entanglement swapping as soon and parallel quantum storage as little as possible. While these features are obvious or trivial for the simplest repeater with one middle station, for more stations they cannot always be combined. We propose an optimal scheme including channel loss and memory dephasing, proving its optimality for the case of two stations and conjecturing it for the general case. In an even more realistic setting, we consider additional tools and parameters such as memory cutoffs, multiplexing, initial state and swapping gate fidelities, and finite link coupling efficiencies in order to identify potential regimes in memory-assisted quantum key distribution beyond one middle station that exceed the rates of the smallest quantum repeaters as well as those obtainable in all-optical schemes unassisted by stationary memory qubits and two-way classical communication. Our analytical treatment enables us to determine simultaneous trade-offs between various parameters, their scaling, and their influence on the performance ordering among different types of protocols, comparing two-photon interference after dual-rail qubit transmission with one-photon interference of single-rail qubits or, similarly, optical interference of coherent states. We find that for experimental parameter values that are highly demanding but not impossible (up to 10 s coherence time, about 80\% link coupling, and state or gate infidelities in the regime of 1\%--2\%), one secret bit can be shared per second at a total channel loss budget of 157.6 dB, i.e. a total distance of 800 km for a fiber attenuation length of 22 km with repeater stations placed at every 100 km---a clear improvement over realistic twin-field or, much more pronouncedly, ideal point-to-point quantum key distribution at GHz clock rates.},
  keywords = {quantum-networks,quantum-repeater,swap-asap,theory}
}

@misc{quantum_network_recipes.jl,
  author = {Avis, Guus},
  year = {2025},
  month = jan,
  url = {https://github.com/GuusAvis/QuantumNetworkRecipes.jl},
  urldate = {2025-01-03},
  abstract = {Provides a language for specifying quantum networks (in the form of "recipes"), and a collection of analytical and numerical tools for evaluating their performance.},
  copyright = {MIT},
  title = {GuusAvis/QuantumNetworkRecipes.jl}
}

@book{russellnorvig2010,
  title = {Artificial Intelligence: A Modern Approach},
  author = {Russell, Stuart and Norvig, Peter},
  year = {2010},
  edition = {Third},
  publisher = {Prentice Hall},
}

@inproceedings{shi2020a,
  title = {Concurrent {{Entanglement Routing}} for {{Quantum Networks}}: {{Model}} and {{Designs}}},
  shorttitle = {Concurrent {{Entanglement Routing}} for {{Quantum Networks}}},
  booktitle = {Proceedings of the {{Annual}} Conference of the {{ACM Special Interest Group}} on {{Data Communication}} on the Applications, Technologies, Architectures, and Protocols for Computer Communication},
  author = {Shi, Shouqian and Qian, Chen},
  year = {2020},
  month = jul,
  series = {{{SIGCOMM}} '20},
  pages = {62--75},
  publisher = {Association for Computing Machinery},
  address = {New York, NY, USA},
  doi = {10.1145/3387514.3405853},
  url = {https://doi.org/10.1145/3387514.3405853},
  urldate = {2024-06-24},
  abstract = {Quantum entanglement enables important computing applications such as quantum key distribution. Based on quantum entanglement, quantum networks are built to provide long-distance secret sharing between two remote communication parties. Establishing a multi-hop quantum entanglement exhibits a high failure rate, and existing quantum networks rely on trusted repeater nodes to transmit quantum bits. However, when the scale of a quantum network increases, it requires end-to-end multi-hop quantum entanglements in order to deliver secret bits without letting the repeaters know the secret bits. This work focuses on the entanglement routing problem, whose objective is to build long-distance entanglements via untrusted repeaters for concurrent source-destination pairs through multiple hops. Different from existing work that analyzes the traditional routing techniques on special network topologies, we present a comprehensive entanglement routing model that reflects the differences between quantum networks and classical networks as well as a new entanglement routing algorithm that utilizes the unique properties of quantum networks. Evaluation results show that the proposed algorithm Q-CAST increases the number of successful long-distance entanglements by a big margin compared to other methods. The model and simulator developed by this work may encourage more network researchers to study the entanglement routing problem.},
  isbn = {978-1-4503-7955-7},
  keywords = {quantum-networks,routing}
}

@article{shor2000,
  title = {Simple {{Proof}} of {{Security}} of the {{BB84 Quantum Key Distribution Protocol}}},
  author = {Shor, Peter W. and Preskill, John},
  year = {2000},
  month = jul,
  journal = {Physical Review Letters},
  volume = {85},
  number = {2},
  pages = {441--444},
  publisher = {American Physical Society},
  doi = {10.1103/PhysRevLett.85.441},
  url = {https://link.aps.org/doi/10.1103/PhysRevLett.85.441},
  urldate = {2021-03-15},
  abstract = {We prove that the 1984 protocol of Bennett and Brassard (BB84) for quantum key distribution is secure. We first give a key distribution protocol based on entanglement purification, which can be proven secure using methods from Lo and Chau's proof of security for a similar protocol. We then show that the security of this protocol implies the security of BB84. The entanglement purification based protocol uses Calderbank-Shor-Steane codes, and properties of these codes are used to remove the use of quantum computation from the Lo-Chau protocol.},
  keywords = {important,QKD}
}

@article{sobrinho2005,
  title = {An Algebraic Theory of Dynamic Network Routing},
  author = {Sobrinho, J.L.},
  year = {2005},
  month = oct,
  journal = {IEEE/ACM Transactions on Networking},
  volume = {13},
  number = {5},
  pages = {1160--1173},
  issn = {1558-2566},
  doi = {10.1109/TNET.2005.857111},
  url = {https://ieeexplore.ieee.org/document/1528502},
  urldate = {2024-12-20},
  abstract = {We develop a non-classic algebraic theory for the purpose of investigating the convergence properties of dynamic routing protocols. The algebraic theory can be regarded as a generalization of shortest-path routing, where the new concept of free cycle generalizes that of a positive-length cycle. A primary result then states that routing protocols always converge, though not necessarily onto optimal paths, in networks where all cycles are free. Monotonicity and isotonicity are two algebraic properties that strengthen convergence results. Monotonicity implies protocol convergence in every network, and isotonicity assures convergence onto optimal paths. A great many applications arise as particular instances of the algebraic theory. In intra-domain routing, we show that routing protocols can be made to converge to shortest and widest paths, for example, but that the composite metric of Internet Gateway Routing Protocol (IGRP) does not lead to optimal paths. The more interesting applications, however, relate to inter-domain routing and its Border Gateway Protocol (BGP), where the algebraic framework provides a mathematical template for the specification, design, and verification of routing policies. We formulate existing guidelines for inter-domain routing in algebraic terms, propose new guidelines contemplating backup relationships between domains, and derive a sufficient condition for signaling correctness of internal-BGP.},
  keywords = {routing,theory}
}

@article{waxman1988,
  title = {Routing of Multipoint Connections},
  author = {Waxman, B.M.},
  year = {1988},
  month = dec,
  journal = {IEEE Journal on Selected Areas in Communications},
  volume = {6},
  number = {9},
  pages = {1617--1622},
  issn = {1558-0008},
  doi = {10.1109/49.12889},
  url = {https://ieeexplore.ieee.org/document/12889},
  urldate = {2025-06-28},
  abstract = {The author addresses the problem of routing connections in a large-scale packet-switched network supporting multipoint communications. He gives a formal definition of several versions of the multipoint problem, including both static and dynamic versions. He looks at the Steiner tree problem as an example of the static problem and considers the experimental performance of two approximation algorithms for this problem. A weighted greedy algorithm is considered for a version of the dynamic problem which allows endpoints to come and go during the life of a connection. One of the static algorithms serves as a reference to measure the performance of the proposed weighted greedy algorithm in a series of experiments.{$<>$}},
  keywords = {routing}
}

@article{tarjan1972,
author = {Tarjan, Robert},
title = {Depth-First Search and Linear Graph Algorithms},
journal = {SIAM Journal on Computing},
volume = {1},
number = {2},
pages = {146-160},
year = {1972},
doi = {10.1137/0201010},
URL = {https://doi.org/10.1137/0201010},
eprint = {https://doi.org/10.1137/0201010},
abstract = { The value of depth-first search or “backtracking” as a technique for solving problems is illustrated by two examples. An improved version of an algorithm for finding the strongly connected components of a directed graph and at algorithm for finding the biconnected components of an undirect graph are presented. The space and time requirements of both algorithms are bounded by \$k\_1 V + k\_2 E + k\_3 \$ for some constants \$k\_1 ,k\_2 \$, and \$k\_3 \$, where V is the number of vertices and E is the number of edges of the graph being examined. }
}

@article{pirandola2019end,
  title={End-to-end capacities of a quantum communication network},
  author={Pirandola, Stefano},
  journal={Communications Physics},
  volume={2},
  number={1},
  pages={51},
  year={2019},
  publisher={Nature Publishing Group UK London}
}

@article{bennett1993teleporting,
  title={Teleporting an unknown quantum state via dual classical and Einstein-Podolsky-Rosen channels},
  author={Bennett, Charles H and Brassard, Gilles and Cr{\'e}peau, Claude and Jozsa, Richard and Peres, Asher and Wootters, William K},
  journal={Physical review letters},
  volume={70},
  number={13},
  pages={1895},
  year={1993},
  publisher={APS}
}

@article{pirandola2015advances,
  title={Advances in quantum teleportation},
  author={Pirandola, Stefano and Eisert, Jens and Weedbrook, Christian and Furusawa, Akira and Braunstein, Samuel L},
  journal={Nature photonics},
  volume={9},
  number={10},
  pages={641--652},
  year={2015},
  publisher={Nature Publishing Group UK London}
}

@article{bouwmeester1997experimental,
  title={Experimental quantum teleportation},
  author={Bouwmeester, Dirk and Pan, Jian-Wei and Mattle, Klaus and Eibl, Manfred and Weinfurter, Harald and Zeilinger, Anton},
  journal={Nature},
  volume={390},
  number={6660},
  pages={575--579},
  year={1997},
  publisher={Nature Publishing Group UK London}
}

@article{bennett2014quantum,
  title={Quantum cryptography: Public key distribution and coin tossing},
  author={Bennett, Charles H and Brassard, Gilles},
  journal={Theoretical computer science},
  volume={560},
  pages={7--11},
  year={2014},
  publisher={Elsevier}
}

@article{van2024hardware,
  title={Hardware requirements for trapped-ion-based verifiable blind quantum computing with a measurement-only client},
  author={van Dam, Janice and Avis, Guus and Propp, Tz B and da Silva, F Ferreira and Slater, Joshua A and Northup, Tracy E and Wehner, Stephanie},
  journal={Quantum Science and Technology},
  volume={9},
  number={4},
  pages={045031},
  year={2024},
  publisher={IOP Publishing}
}

@inproceedings{broadbent2009universal,
  title={Universal blind quantum computation},
  author={Broadbent, Anne and Fitzsimons, Joseph and Kashefi, Elham},
  booktitle={2009 50th annual IEEE symposium on foundations of computer science},
  pages={517--526},
  year={2009},
  organization={IEEE}
}

@article{fitzsimons2017private,
  title={Private quantum computation: an introduction to blind quantum computing and related protocols},
  author={Fitzsimons, Joseph F},
  journal={npj Quantum Information},
  volume={3},
  number={1},
  pages={23},
  year={2017},
  publisher={Nature Publishing Group UK London}
}

@inproceedings{buhrman2003distributed,
  title={Distributed quantum computing},
  author={Buhrman, Harry and R{\"o}hrig, Hein},
  booktitle={International Symposium on Mathematical Foundations of Computer Science},
  pages={1--20},
  year={2003},
  organization={Springer}
}

@article{van2016path,
  title={The path to scalable distributed quantum computing},
  author={Van Meter, Rodney and Devitt, Simon J},
  journal={Computer},
  volume={49},
  number={9},
  pages={31--42},
  year={2016},
  publisher={IEEE}
}

@article{li2019experimental,
  title={Experimental quantum repeater without quantum memory},
  author={Li, Zheng-Da and Zhang, Rui and Yin, Xu-Fei and Liu, Li-Zheng and Hu, Yi and Fang, Yu-Qiang and Fei, Yue-Yang and Jiang, Xiao and Zhang, Jun and Li, Li and others},
  journal={Nature photonics},
  volume={13},
  number={9},
  pages={644--648},
  year={2019},
  publisher={Nature Publishing Group UK London}
}

@article{azuma2023quantum,
  title={Quantum repeaters: From quantum networks to the quantum internet},
  author={Azuma, Koji and Economou, Sophia E and Elkouss, David and Hilaire, Paul and Jiang, Liang and Lo, Hoi-Kwong and Tzitrin, Ilan},
  journal={Reviews of Modern Physics},
  volume={95},
  number={4},
  pages={045006},
  year={2023},
  publisher={APS}
}

@article{langenfeld2021quantum,
  title={Quantum repeater node demonstrating unconditionally secure key distribution},
  author={Langenfeld, Stefan and Thomas, Philip and Morin, Olivier and Rempe, Gerhard},
  journal={Physical review letters},
  volume={126},
  number={23},
  pages={230506},
  year={2021},
  publisher={APS}
}

@article{pu2021experimental,
  title={Experimental demonstration of memory-enhanced scaling for entanglement connection of quantum repeater segments},
  author={Pu, Yun-Fei and Zhang, Sheng and Wu, Yu-Kai and Jiang, Nan and Chang, Wei and Li, Chang and Duan, Lu-Ming},
  journal={Nature Photonics},
  volume={15},
  number={5},
  pages={374--378},
  year={2021},
  publisher={Nature Publishing Group UK London}
}

@article{goodenough2024near,
  title={Near-term n to k distillation protocols using graph codes},
  author={Goodenough, Kenneth and De Bone, Sebastian and Addala, Vaishnavi and Krastanov, Stefan and Jansen, Sarah and Gijswijt, Dion and Elkouss, David},
  journal={IEEE Journal on Selected Areas in Communications},
  volume={42},
  number={7},
  pages={1830--1849},
  year={2024},
  publisher={IEEE}
}

@article{li2021efficient,
  title={Efficient optimization of cutoffs in quantum repeater chains},
  author={Li, Boxi and Coopmans, Tim and Elkouss, David},
  journal={IEEE Transactions on Quantum Engineering},
  volume={2},
  pages={1--15},
  year={2021},
  publisher={IEEE}
}

@article{coutinho2023entanglement,
  title={Entanglement routing based on fidelity curves},
  author={Coutinho, Bruno C and Monteiro, Raul and Bugalho, Lu{\'\i}s and Monteiro, Francisco A},
  journal={arXiv preprint arXiv:2303.12864},
  year={2023}
}

@article{schoute2016shortcuts,
  title={Shortcuts to quantum network routing},
  author={Schoute, Eddie and Mancinska, Laura and Islam, Tanvirul and Kerenidis, Iordanis and Wehner, Stephanie},
  journal={arXiv preprint arXiv:1610.05238},
  year={2016}
}

@article{victora2020purification,
  title={Purification and entanglement routing on quantum networks},
  author={Victora, Michelle and Krastanov, Stefan and de la Cerda, Alexander Sanchez and Willis, Steven and Narang, Prineha},
  journal={arXiv preprint arXiv:2011.11644},
  year={2020}
}

@article{chakraborty2020entanglement,
  title={Entanglement distribution in a quantum network: A multicommodity flow-based approach},
  author={Chakraborty, Kaushik and Elkouss, David and Rijsman, Bruno and Wehner, Stephanie},
  journal={IEEE Transactions on Quantum Engineering},
  volume={1},
  pages={1--21},
  year={2020},
  publisher={IEEE}
}

@article{chakraborty2019distributed,
  title={Distributed routing in a quantum internet},
  author={Chakraborty, Kaushik and Rozpedek, Filip and Dahlberg, Axel and Wehner, Stephanie},
  journal={arXiv preprint arXiv:1907.11630},
  year={2019}
}

@article{van2013path,
  title={Path selection for quantum repeater networks},
  author={Van Meter, Rodney and Satoh, Takahiko and Ladd, Thaddeus D and Munro, William J and Nemoto, Kae},
  journal={Networking Science},
  volume={3},
  number={1},
  pages={82--95},
  year={2013},
  publisher={Springer}
}

@inproceedings{shi2020concurrent,
  title={Concurrent entanglement routing for quantum networks: Model and designs},
  author={Shi, Shouqian and Qian, Chen},
  booktitle={Proceedings of the Annual conference of the ACM Special Interest Group on Data Communication on the applications, technologies, architectures, and protocols for computer communication},
  pages={62--75},
  year={2020}
}

@article{bugalho2023distributing,
  title={Distributing multipartite entanglement over noisy quantum networks},
  author={Bugalho, Lu{\'\i}s and Coutinho, Bruno C and Monteiro, Francisco A and Omar, Yasser},
  journal={quantum},
  volume={7},
  pages={920},
  year={2023},
  publisher={Verein zur F{\"o}rderung des Open Access Publizierens in den Quantenwissenschaften}
}

@article{widestpathproblem,
  author={T. C. Hu},
  year={1961},
  title={Letter to the Editor—The Maximum Capacity Route Problem},
  journal={Operations Research},
  volume={9},
  number={6},
  pages={898-900},
  url={https://doi.org/10.1287/opre.9.6.898},
  doi={10.1287/opre.9.6.898}
}

@article{yen1970,
  author={Jin Y. Yen},
  year={1970},
  title={An algorithm for finding shortest routes from all source nodes to a given destination in general networks},
  journal={Quarterly of Applied Mathematics},
  volume={27},
  pages={526-530},
  url={https://doi.org/10.1090/qam/253822},
  doi={10.1090/qam/253822}
}

@book{wilde2013quantum,
  title={Quantum information theory},
  author={Wilde, Mark},
  year={2013},
  publisher={Cambridge university press}
}

@misc{datarepo,
    title={Code-For-Routing-in-Non-Isotonic-Quantum-Networks},
    author={Maxwell Tang and
            Garrett Hinkley and
            Kenneth Goodenough and
            Stefan Krastanov and
            Guus Avis},
    doi={10.5281/zenodo.17714237},
    url={https://doi.org/10.5281/zenodo.17714237},
    year={2025}
}

\appendix
\section{}

\subsection{Binary Search Merit Function}
\label{app:binary_search}
As mentioned in Sec.~\ref{sec:heuristic_befs}, in Heuristic-BeFS uses a merit function that is nearly, but not actually admissible.
This merit function upper-bounds the utility that can be achieved by any path that begins with the candidate prefix and ends with a suffix of length at least $L$, the distance from the end of the candidate prefix to the destination node.
To further improve the complexity of the algorithm, we assume that the utility has a unique local maximum.
In other words, it strictly increases, then strictly decreases.

Since we have a bounded list of possible solutions, and the quality of the solutions has a unique local maximum, we can use exponential search to find the maximum.

In the exponential search algorithm, we store a triple of suffix lengths that represents our search window initialized as $[1, 2, 4]$.
If the utilities calculated with these lengths is strictly increasing, then our solution's suffix length must be at least the greatest suffix length in our triple.
Thus, we can double all of our suffix lengths to "slide" the window up.
Eventually, we should end up in a situation where the solution is strictly contained in our window.
This should lead to the middle suffix length having the greatest utility.

Then, we can recursively measure utilities to shrink our window until we find the greatest utility.
We calculate the maximum possible utility using a variant of exponential search for finding maximums
instead of roots.

See Algorithm~\ref{alg:best_case} for details on the implementation.

\newcommand{\boundof}[1]{{\mathrm{bound}(#1)}}
\begin{algorithm}
\caption{Best-Case Estimation}\label{alg:best_case}
\begin{algorithmic}
  \Require{bound is a function whose maximum value is the merit}
  \Function{merit}{$p, t$}
    \State{$a, b, c := 1, 2, 4$}
    \If{$\boundof a \geq \boundof b$}
      \State{\Return{$\boundof a$}}
    \EndIf

    \While{$\boundof c > \boundof b$}
      \State{$a, b, c = b, c, 2c$}
    \EndWhile

    \While{$a + 1 \neq b$ or $b + 1 \neq c$}
      \If{$c - b > b - a$}
        \State{$d := \lfloor \frac{b + c}{2} \rfloor$}
        \If{$\boundof d > \boundof b$}
          \State{$a, b, c := b, d, c$}
        \Else
          \State{$a, b, c := a, b, d$}
        \EndIf
      \Else
        \State{$d := \lfloor \frac{a + b}{2} \rfloor$}
        \If{$\boundof d > \boundof b$}
          \State{$a, b, c := a, d, b$}
        \Else
          \State{$a, b, c := d, b, c$}
        \EndIf
      \EndIf
    \EndWhile

    \State{\Return{$\boundof b$}}
  \EndFunction
\end{algorithmic}
\end{algorithm}

\subsection{Directionality}
\label{app:directionality}
One interesting quirk of quantum pathfinding problem is that it is symmetric with respect to swapping end nodes, but the computational costs of the algorithms that we describe are not.
We call this effect \textit{directionality}, where graphs that are searched more quickly forwards have \textit{forward directionality} and graphs that are searched more quickly backwards have \textit{reverse directionality}.
Thus we have the opportunity to swap the end nodes to the configuration that is likely to be less expensive for the current algorithm.
By choosing the correct direction accurately, the average runtime can be significantly lowered.
This motivates us to construct several algorithms to detect the directionality of the algorithm.

For each of the directionality detection algorithms, we first apply a normalization procedure that consists of 2 steps:
First, we apply a Euclidean transform (rotations and translations) on the graph so that the midpoint of the 2 endpoints is at the origin and the endpoints lie along the x-axis.
This does not change the directionality, since both the algorithms and the utility function are invariant under Euclidean transforms.
Then, we scale the graph so that the source is at $(-1, 0)$ and the destination is at $(1, 0)$.
This changes the utility of each path, but we suspect that this rarely affects the directionality because directionality appears to be based on scale-invariant qualities, like how nodes are clustered together.
In addition, since this is just a heuristic, rare errors are acceptable.
After normalization, the algorithms that we create should be anti-symmetric with respect to the x-coordinate (the preferred direction should flip if we negate the x-coordinate of each node) and symmetric with respect to the y-coordinate.

\paragraph{Mean-based}
As we found earlier, it is generally more efficient to set the destination node to be the one closer to the
majority of the nodes.
We can implement this heuristic by using the mean value of the x-coordinate of the repeater nodes.
If the mean x-coordinate is positive (respectively negative), then the repeaters tend to cluster around the
destination (respectively source) node, and the algorithm should be run forwards (respectively backwards).

\paragraph{Median-based}
Another option is to use the median x-value.
Since we want to measure overall clustering, a node that is far from both endpoints may throw off the prediction.
Since the median is more resistant to outliers, it might represent the clustering behavior more accurately.

When we apply these algorithms, we see a notable increase in performance.

\begin{table}[h!]
\centering
\begin{tabular}{|c c|} 
 \hline
 Algorithm & Average Complexity \\
 \hline
 None              & 100.520 \\ 
 Multi-Thread Race & 163.232 \\
 Mean              & 97.418 \\
 Median            & 96.582 \\
 \hline
\end{tabular}
\vspace{2mm}
\caption{Comparison of directionality estimation algorithms.}
\label{table:1}
\end{table}

As a potential future extension, the prediction algorithms could include additional features like graph distances and spectral analysis. In addition, these values could be fed into a machine learning model.

\subsection{Bounds on the Performance of Repeater Chains} \label{app:bounds}

In this appendix we prove that the performance of a swap-ASAP repeater chain as described in Sec.~\ref{sec:problem_statement} can be bounded by assuming that some of the repeater nodes are spaced equidistantly and have perfect quantum memory.
This result forms the basis of our BeFS-EXACT algorithm and is stated at the end of this appendix in Corollary~\ref{cor:skr_of_extension_best_for_smallest_length}.
Proving this requires showing that the rate of entanglement generation is optimized by placing repeaters equidistantly, even when the repeater locations can only be freely chosen along a part of the chain and the remaining part of the chain does not have its repeaters distributed homogeneously.
This result is stated in Corollary~\ref{cor:rate_optimal_for_hom_extension}, while most of the statistical heavy lifting required to show this is taken care of by Theorem~\ref{thm:hom_dominates}.
Extending the result to not just the entangling rate but also the secret-key rate, by assuming that the repeaters that are freely placed have perfect quantum memory, is achieved mostly by Lemma~\ref{lem:skr_optimal_for_hom_extension}.
This appendix contains moreover one final result, Lemma~\ref{lem:max_N}, that helps us limit the search space for finding the optimal number of repeaters in the BeFS algorithms.
For the scaffolding used to prove the results in this appendix, we have been inspired by the work done in~\cite{coopmans2022}.

First we introduce some notation, and in particular the concept of statistical dominance.
We then lead up to Theorem~\ref{thm:hom_dominates} which shows that the maximum over a number of geometric distributions is stochastically dominant when those geometric distributions are homogeneous.

\begin{definition} \label{def:cdf}
Let $X$ be a random variable.
The cumulative distribution function (CDF), denoted $F_X(x)$, is defined by
\begin{equation}
F_X(x) = \text{Pr}(X \leq x).
\end{equation}
\end{definition}

\begin{definition} \label{def:stochastic_dominance}
Let $X$ and $Y$ be two random variables.
We say that $X$ stochastically dominates $Y$ and write $X \geq_\text{st} Y$ if
\begin{equation} \label{eq:stochastic_dominance}
\text{Pr}(X > z) \geq \text{Pr}(Y > z)
\end{equation}
for all $z$.
\end{definition}

\begin{lemma} \label{lem:expected_values_from_dominance}
Let $X$ and $Y$ be two discrete random variables, both with domain $\mathbb N$ (the natural numbers excluding 0).
If $X \geq_\text{st} Y$, then $\mathbb E[X] \geq \mathbb E[Y]$.
\end{lemma}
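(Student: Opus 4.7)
The plan is to use the standard tail-sum formula for the expectation of a non-negative integer-valued random variable and then apply the stochastic dominance inequality term by term. For any random variable $Z$ taking values in $\mathbb{N} = \{1, 2, 3, \dots\}$, we have the identity
\begin{equation}
\mathbb{E}[Z] \;=\; \sum_{k=0}^{\infty} \Pr(Z > k),
\end{equation}
which I would either invoke as a well-known fact or derive in one line via the Fubini/summation-order-swap argument $\sum_{n=1}^{\infty} n \Pr(Z = n) = \sum_{n=1}^{\infty} \sum_{k=0}^{n-1} \Pr(Z = n) = \sum_{k=0}^{\infty} \sum_{n=k+1}^{\infty} \Pr(Z=n) = \sum_{k=0}^{\infty} \Pr(Z > k)$.

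Applying this to both $X$ and $Y$, I would then use the hypothesis $X \geq_\text{st} Y$, which by Definition~\ref{def:stochastic_dominance} gives $\Pr(X > z) \geq \Pr(Y > z)$ for every $z$. In particular this holds for every integer $k \geq 0$, so term-by-term comparison yields
\begin{equation}
\mathbb{E}[X] \;=\; \sum_{k=0}^{\infty} \Pr(X > k) \;\geq\; \sum_{k=0}^{\infty} \Pr(Y > k) \;=\; \mathbb{E}[Y],
\end{equation}
which is exactly the claim.

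The only subtlety, and the one place I would be careful, is the possibility that the sums diverge: since neither $X$ nor $Y$ is assumed to have finite expectation, the inequality must be interpreted in the extended reals $[0, \infty]$. Luckily this causes no real trouble, because all terms in both series are non-negative, so the partial sums satisfy $\sum_{k=0}^{K} \Pr(X > k) \geq \sum_{k=0}^{K} \Pr(Y > k)$ for every finite $K$, and letting $K \to \infty$ preserves the inequality in $[0,\infty]$. If $\mathbb{E}[Y] = \infty$ then the right-hand side diverges, forcing the left-hand side to diverge as well, so $\mathbb{E}[X] = \infty \geq \infty$; otherwise both sides are finite and the bound is the ordinary one on real numbers. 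No deeper obstacle arises: the result is essentially a direct consequence of the tail-sum identity combined with the definition of stochastic dominance.
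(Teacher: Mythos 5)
Your proposal is correct and follows essentially the same route as the paper: the tail-sum identity for $\mathbb{N}$-valued random variables followed by a term-by-term comparison using $\Pr(X>k)\geq\Pr(Y>k)$. If anything, your version is the more careful one --- you start the tail sum at $k=0$ (the correct indexing; the paper's displayed sum starting at $z=1$ actually equals $\mathbb{E}[X]-1$, a harmless off-by-one since it shifts both sides equally) and you explicitly handle possible divergence, which the paper glosses over.
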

\begin{proof}
Because $\text{Pr}(X > z) = \sum_{z' = z+1}^\infty \text{Pr}(X = z')$ we have
\begin{equation}
\sum_{z=1}^\infty \text{Pr}(X > z) = \sum_{z=1}^\infty z \text{Pr}(X=z) = \mathbb E[X].
\end{equation}
Therefore, if $\text{Pr}(X > z) \geq \text{Pr}(Y>z)$ for all $z$, then it follows immediately that $\mathbb E[X] \geq \mathbb E[Y]$.
\end{proof}

\begin{lemma} \label{lem:dominance_in_terms_of_cdfs}
Let $X$ and $Y$ be two random variables.
Then $X \geq_\text{st} Y$ if and only if
\begin{equation}
F_X(z) \leq F_Y(z)
\end{equation}
for all $z$.
\end{lemma}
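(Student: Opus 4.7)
The plan is to prove this biconditional by a direct manipulation of the two definitions: since stochastic dominance is phrased in terms of upper-tail probabilities $\Pr(X > z)$ while the CDF is a lower-tail probability $\Pr(X \leq z)$, the two are related by complementation, and the inequality between them simply flips direction.

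First I would observe that for any random variable $X$ and any real $z$, the events $\{X > z\}$ and $\{X \leq z\}$ partition the sample space. Hence by Definition~\ref{def:cdf},
\begin{equation}
\Pr(X > z) = 1 - \Pr(X \leq z) = 1 - F_X(z),
\end{equation}
and similarly $\Pr(Y > z) = 1 - F_Y(z)$.

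Next, I would plug these expressions into the condition from Definition~\ref{def:stochastic_dominance}. By that definition, $X \geq_\text{st} Y$ holds if and only if $\Pr(X > z) \geq \Pr(Y > z)$ for all $z$, which by the identity above is equivalent to $1 - F_X(z) \geq 1 - F_Y(z)$ for all $z$, i.e.\ $F_X(z) \leq F_Y(z)$ for all $z$. Since every step in this chain is an equivalence, both directions of the biconditional follow simultaneously.

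There is no real obstacle here: the lemma is essentially a restatement of Definition~\ref{def:stochastic_dominance} in terms of the CDF via complementation, and the only subtlety is to note that the strict inequality inside the probability ($X > z$) is consistent with the non-strict inequality in the CDF ($X \leq z$) precisely because they describe complementary events. Once this is pointed out, the proof is a one-line algebraic manipulation valid for arbitrary (not necessarily discrete) random variables.
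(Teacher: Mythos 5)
Your proof is correct and follows exactly the same route as the paper's: both rely on the complementation identity $\Pr(X > z) = 1 - F_X(z)$ to translate the tail-probability inequality of Definition~\ref{def:stochastic_dominance} into the reversed CDF inequality. You simply spell out the one-line argument in more detail; there is no substantive difference.
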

\begin{proof}
This follows directly from Eq.~\eqref{eq:stochastic_dominance} because $\text{Pr}(X > z) = 1 - F_X(z)$ (and similar for $Y$).
\end{proof}

\begin{lemma} \label{lem:cdf_of_max}
Let $X = \max{\left(\left\{X_i\right\}_{i=1}^n\right)}$ for some independent random variables $X_i$.
Then its CDF can be expressed as
\begin{equation} \label{eq:cdf_of_max}
F_X(x) = \prod_i F_{X_i}(x).
\end{equation}
\end{lemma}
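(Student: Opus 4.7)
The plan is to unpack the definition of the CDF and exploit the key structural fact about the maximum of a collection of numbers: the maximum is bounded above by $x$ if and only if every element is bounded above by $x$. So the event $\{X \le x\}$ coincides with the intersection $\bigcap_i \{X_i \le x\}$.

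First I would write
\begin{equation}
F_X(x) = \Pr(X \le x) = \Pr\!\left(\max_i X_i \le x\right),
\end{equation}
and then rewrite the event on the right as the intersection $\bigcap_{i=1}^n \{X_i \le x\}$, since the maximum of the $X_i$ is at most $x$ exactly when each $X_i$ is at most $x$. Next I would invoke independence of the $X_i$, which lets me factor the probability of the intersection as a product of marginal probabilities, yielding
\begin{equation}
F_X(x) = \prod_{i=1}^n \Pr(X_i \le x) = \prod_{i=1}^n F_{X_i}(x),
\end{equation}
which is the claimed identity.

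There is no real obstacle here: the only substantive ingredient is the independence assumption, which is exactly what permits the factorization of the joint probability into marginals. Note that this argument does not require the $X_i$ to be identically distributed or discrete; it holds for any collection of independent random variables, so the result applies to the geometric-distribution setting of Theorem~\ref{thm:hom_dominates} without modification.
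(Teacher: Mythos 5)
Your proof is correct and follows exactly the same route as the paper's: identify the event $\{X \le x\}$ with the intersection $\bigcap_i \{X_i \le x\}$ and factor by independence. No issues.
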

\begin{proof}
The value of $X$ is smaller than or equal to $x$ if and only if the same is true for each $X_i$, hence,
\begin{equation}
\begin{aligned}
\text{Pr}(X \leq x) &= \text{Pr}(X_1 \leq x \wedge X_2 \leq x \wedge ... \wedge X_n \leq x)\\
&= \text{Pr}(X_1 \leq x) \text{Pr}(X_2 \leq x) ... \text{Pr}(X_n \leq x),
\end{aligned}
\end{equation}
where the second line follows because the $X_i$ are all independent.
Eq.~\eqref{eq:cdf_of_max} then follows directly from Def.~\ref{def:cdf}.
\end{proof}

\begin{theorem} \label{thm:hom_dominates}
Let $\mathcal X$ be the family of random variables
\begin{equation} \label{eq:family_for_hom_dominates_thm}
\mathcal X = \left\{ \max \left( \left\{X_i\right\}_{i=1}^n \right) | X_i \sim \text{geom}(p_i), \prod_i^n {p_i} = p_\text{tot}  \right\}, \nonumber
\end{equation}
where $0 < p_\text{tot}, p_i \leq 1$.
Let $X_\text{hom} \in \mathcal X$ be the member of that family for which $p_1 = p_2 = ... = p_n$, i.e.,
$X_\text{hom} = \max\left( \left\{ X_{\text{hom}, i}  \right\}_{i=1}^n \right)$ with $X_{\text{hom}, i} \sim \text{geom}(\sqrt[n]{p_\text{tot}})$.
Then, for every $X \in \mathcal X$ it holds that
\begin{equation}
X \geq_\text{st} X_\text{hom}.
\end{equation}
\end{theorem}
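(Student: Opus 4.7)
The plan is to reduce stochastic dominance to a pointwise comparison of cumulative distribution functions, re-express that comparison as a Jensen-type inequality on the constraint surface $\sum_i \log p_i = \log p_\text{tot}$, and then reduce everything to the concavity of a single auxiliary function.

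By Lemma~\ref{lem:dominance_in_terms_of_cdfs} it suffices to show $F_X(z) \leq F_{X_\text{hom}}(z)$ for every $z$. Since each $X_i$ is $\mathbb{N}$-valued, these CDFs are step functions constant on each $[k,k+1)$, so I may restrict to $z = k$ a positive integer (for $z < 1$ both sides vanish). Lemma~\ref{lem:cdf_of_max} then gives $F_X(k) = \prod_{i=1}^n (1 - (1-p_i)^k)$ and $F_{X_\text{hom}}(k) = (1 - (1-\sqrt[n]{p_\text{tot}})^k)^n$. Setting $y_i = \log p_i$ and $\psi(y) = 1 - (1 - e^y)^k$, the desired inequality becomes
\begin{equation}
\sum_{i=1}^n \log \psi(y_i) \leq n \log \psi\!\left(\frac{1}{n} \sum_{i=1}^n y_i\right)
\end{equation}
under the constraint $\sum_i y_i = \log p_\text{tot}$, and I plan to obtain this by applying Jensen's inequality to $\log \psi$ on $(-\infty, 0]$.

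The remaining work, and the main obstacle, is showing that $\log \psi$ is concave on this interval. Differentiating and writing $u = e^y$ gives $(\log \psi)'(y) = \alpha(u)$ with $\alpha(u) = k u (1-u)^{k-1} / (1 - (1-u)^k)$, so $(\log \psi)''(y) = u\, \alpha'(u)$ and concavity reduces to showing that $\alpha$ is nonincreasing on $(0,1]$. A direct second-derivative computation is unwieldy, and the trick I would use is to first invoke the telescoping identity $1 - (1-u)^k = u \sum_{j=0}^{k-1} (1-u)^j$ and then substitute $v = 1-u$, which after dividing numerator and denominator by $v^{k-1}$ collapses $\alpha$ to
\begin{equation}
\alpha(u) = \frac{k}{1 + v^{-1} + v^{-2} + \cdots + v^{-(k-1)}}.
\end{equation}
This is manifestly increasing in $v$, hence decreasing in $u$, which gives the required concavity and closes the argument. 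The case $k = 1$ is degenerate ($\psi$ is log-linear and the inequality collapses to an equality consistent with the constraint), while for $k \geq 2$ the concavity is strict, so equality in the stochastic dominance occurs only when all $p_i$ coincide.
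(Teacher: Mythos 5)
Your proof is correct, and it takes a genuinely different route from the paper's. The paper parameterizes the family as $p_i = \sqrt[n]{p_\text{tot}}\,e^{\Delta_i}$ and argues slice by slice: for each coordinate $\Delta_i$ it computes $\partial_i F_X$, factors the relevant quantity $T$ through geometric-sum identities, shows the unique critical point on each one-dimensional slice sits at $p_i = p_n$ and is a strict maximum, and concludes that the CDF is maximized only at the homogeneous point. Your argument replaces all of this with a single convexity statement: after passing to $y = \log p$, the CDF factor $\psi(y) = 1 - (1-e^y)^k$ is log-concave on $(-\infty,0]$, so Jensen's inequality on the constraint surface $\sum_i y_i = \log p_\text{tot}$ immediately gives $\prod_i \bigl(1-(1-p_i)^k\bigr) \le \bigl(1-(1-\sqrt[n]{p_\text{tot}})^k\bigr)^n$, which is exactly $F_X(k) \le F_{X_\text{hom}}(k)$ as required by Lemma~\ref{lem:dominance_in_terms_of_cdfs} and Lemma~\ref{lem:cdf_of_max}. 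Your reduction of the concavity check to the monotonicity of $\alpha(u) = k/(1 + v^{-1} + \cdots + v^{-(k-1)})$ with $v = 1-u$ is clean and I verified it: the denominator is decreasing in $v$, so $\alpha$ is increasing in $v$ and hence decreasing in $u = e^y$, giving $(\log\psi)'' = u\,\alpha'(u) \le 0$, with strictness for $k \ge 2$ on the interior. What each approach buys: yours is shorter, more conceptual, and makes the equality case ($k\ge 2$ forces all $p_i$ equal) fall out of strict concavity rather than a separate critical-point analysis; the paper's coordinate-wise argument is more elementary in the sense of avoiding Jensen, and its explicit sign analysis of $\partial_i F_X$ is what the authors reuse rhetorically to assert that the homogeneous configuration is the \emph{unique strict} maximizer, which your strict-concavity remark recovers equally well. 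One small point of care, which you handle correctly: the reduction to integer $z$ and the boundary cases $p_i = 1$ (where $v = 0$ and the rewritten form of $\alpha$ degenerates) should be noted explicitly, since $\alpha(1) = 0$ for $k \ge 2$ is the continuous limit and the $k=1$ case is log-linear with equality throughout, consistent with $F_X(1) = p_\text{tot}$ for every member of the family.
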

\begin{proof}
We can parameterize all members of the family $\mathcal X$ in terms of parameters $\{\Delta_i\}_{i=1}^{n-1}$ such that $X = X(\Delta_1, \Delta_2, ..., \Delta_{n-1}) = \max \left( \left\{X_i\right\}_{i=1}^n \right)$ with $X_i \sim \text{geom}(p_i)$, $p_i = \sqrt[n]{p_\text{tot}} e^{\Delta_i}$ for $i = 1, 2, ..., n-1$ and $p_n = \sqrt[n]{p_\text{tot}} e^{- \sum_i \Delta_i}$, where $\Delta_i \leq \frac 1 n \ln{(1/p_\text{tot})}$ such that $0<p_i\leq1$ and $\sum_i \Delta_i \geq - \frac 1 n \ln{(1/p_\text{tot})}$ such that $0<p_n\leq1$.

By Lemma~\ref{lem:cdf_of_max} we can write the CDF of $X$, which is implicitly also a function of the $\Delta_i$, as
\begin{equation}
F_X(k) = \prod_{i=1}^n F_{X_i}(k),
\end{equation}
where for the geometric random variables $X_i$ we have
\begin{equation}
F_{X_i}(k) = 1 - (1 - p_i)^k,
\end{equation}
and where $k = 1, 2, ...$ as the domain of a geometric random variable is the positive integers.
We will show that, for each $k$, $F_X(k)$ attains its maximum value when $\Delta_i=0$ for all $i$, i.e., when all $p_i$ are equal.
This proves the theorem per Lemma~\ref{lem:dominance_in_terms_of_cdfs}.

First, we note that when $p_\text{tot} = 1$, there is only one allowed parameter configuration, namely $\Delta_i=0$ for all $i$;
the theorem is hence trivially true in that case.
Throughout the rest of the proof, we will assume $p_\text{tot} < 1$.
Moreover, we also note that $F_X(k)$ is trivially maximal for $\Delta_i =0$ when $k=1$.
In that case we have $F_{X_i}(k) = p_i$ and hence $F_X(k) = \prod_i p_i = p_\text{tot}$.
Hence, the function is constant and attains its maximum value everywhere in its domain, including when all $p_i$ are equal.

In order to prove that $F_X(k)$ is maximal for $k = 2, 3, ...$ when all $p_i$ are equal, we show that for any $i$, when keeping all $\Delta_j$ for $j \neq i$ fixed, choosing $\Delta_i$ such that $p_i = p_n$ strictly maximizes $F_X(k)$.
It then follows that if for any $i < n$ it holds that $p_i \neq p_n$, the value of $F_X(k)$ can be strictly improved by changing the value of $\Delta_i$ such that $p_i$ and $p_n$ become equal.
Therefore, whenever any $p_i$ is not equal to $p_n$, $F_X(k)$ is not maximal, and it thus follows that if $F_X(k)$ is maximal then $p_i = p_n$ for all $i < n$, and hence all $p_i$ are equal.
The method we will use to show that a strict maximum along the one-dimensional slice of $F_X(k)$ associated with $\Delta_i$ is located at $p_i = p_n$ is by showing that there is exactly one critical point (i.e., location where the derivative w.r.t. $\Delta_i$ is zero) along that slice, and that this critical point is a strict local maximum.
Because the slice is one dimensional, the critical point must then necessarily also be a strict global maximum (where ``global'' refers only to that particular slice).
More specifically, such one-dimensional slices are defined by $- \frac 1 n \ln{(1/p_\text{tot})} + \sum_{j \neq i} \Delta_j \leq \Delta_i \leq \frac 1 n \ln{(1/p_\text{tot})}$ and we will show that the strict maximum occurs at $\Delta_i^* = -\frac 1 2 \sum_{j \neq i} \Delta_j$.

To that end, we now compute the derivatives of $F_X(k)$.
Writing $\partial_i = \frac {\partial}{\partial \Delta_i}$ and suppressing the argument $k$ in the CDFs, we have
\begin{equation}
\begin{aligned}
\partial_i F_X = \left(F_{X_n}\partial_i F_{X_i}  + F_{X_i} \partial_i F_{X_n} \right) \prod_{j \neq i, n} F_{X_j}.
\end{aligned}
\end{equation}
Moreover, for $j \in \{i, n\}$, we have
\begin{equation}
\partial_i F_j = k(1-p_j)^{k-1}\partial_i p_j = \pm k(1-p_j)^{k-1} p_j,
\end{equation}
where the $+$ sign holds for $j=i$ and the $-$ sign holds for $j=n$.
Defining
\begin{equation}
\begin{aligned}
T:=& \left( 1 - (1 - p_n)^k \right)(1 - p_i)^{k-1} p_i -&  \\
 &\left(1 - (1 - p_i)^k \right)(1 - p_n)^{k-1} p_n \\
\end{aligned}
\end{equation}
we then find
\begin{equation}
\partial_i F_X = k T \prod_{j \neq i, n} F_{X_j}.
\end{equation}
Since $k$ and all the $F_{X_j}$ are strictly positive, it now suffices to study $T$ in order to find and analyze critical points.
To this end we introduce $r = 1 - p_n$ and $s = 1 - p_i$, satisfying $0 \leq r, s < 1$, such that
\begin{equation}
\begin{aligned}
T &= (1 - r^k) s^{k-1} (1 - s) - (1 - s^k) r^{k-1} (1 - r)\\
&= (r^k - s^k) - (r^{k-1} - s^{k-1}) - (rs)^{k-1} (r - s).
\end{aligned}
\end{equation}
We immediately note that if $s=0$, $T=0$ if and only if $r=0$, and vice versa.
The condition $r=s=0$ (or $p_i=p_n=1$) is only true if $\Delta_i = \frac 1 n \ln{(1/p_\text{tot})}$ and $\sum_{j \neq i} \Delta_j = - \frac 2 n \ln{(1/p_\text{tot})}$.
But the one-dimensional slice for which this last condition holds only consists of a single point (so it is, in fact, a zero-dimensional slice), and hence $F_X$ is always trivially maximal along that slice.
Having covered the cases $s=0$ and $r=0$, we will continue below for $0<s,r<1$.

Now, we use the identity
\begin{align}
W_k &:= \sum_{i=0}^{k-1} r^{k-1-i}s^i \\
r^k - s^k &= (r - s) W_k
\end{align}
to rewrite
\begin{equation}
T = (r - s) \left( W_k - W_{k-1} - (rs)^{k-1}\right).
\end{equation}
Since we have $W_k = sW_{k-1} + r^{k-1}$, this is equivalent to
\begin{equation}
T = (r - s) \left( W_{k-1} (s - 1) - r^{k-1} (s^{k-1} - 1) \right).
\end{equation}
We can now use the same identity (or, equivalently, the formula for the value of a geometric series) to find
\begin{equation}
s^{k-1} - 1 = (s - 1) \sum_{i=0}^{k-2} s^i.
\end{equation}
Hence,
\begin{equation}
\begin{aligned}
T &= (r - s) (s - 1) \left(W_{k-1} - r^{k-1} \sum_{i=0}^{k-2} s^i \right) \\
&= (r - s) (s - 1) \sum_{i=0}^{k-2} s^i (r^{k-2-i} - r^{k-1}) \\
&= (s - r) (1 - s) r^{k-1} \sum_{i=0}^{k-2} s^i (r^{-(1+i)} - 1).
\end{aligned}
\end{equation}
Now, remembering that $0 < r, s < 1$, we have $1 - s > 0$, $s^i > 0$, and $r^{-(1+i)} - 1 > 0$.
Therefore, $T = 0$ if and only if $s - r = 0$, i.e., if $p_i = p_n$, and hence the only critical point along the one-dimensional slice of $F_X$ associated with $\Delta_i$ occurs when that condition is true, which is at $\Delta_i^* = - \frac 1 2 \sum_{j \neq i} \Delta_j$.
Moreover, we note that the sign of $T$, and hence of $\partial_i F_X$, is equal to the sign of $s - r = p_n - p_i$, and that moreover, $p_n - p_i$ is a function that strictly decreases when $\Delta_i$ increases (it is equal to $-2 \sqrt[n]{p_\text{tot}} e^{\Delta_i^*} \sinh(\Delta_i - \Delta_i^*)$).
Hence the derivative is strictly negative for $\Delta_i > \Delta_i^*$, while the derivative is strictly positive for $\Delta_i < \Delta_i^*$.
This indicates that $\Delta_i^*$ is a strict local maximum, and being the only critical point, the strict global maximum of the slice.
It then follows from the arguments presented above that $F_X$ is strictly maximal if and only if all $p_i$ are equal.
\end{proof}

Geometric distributions can be used to represent the time required to create entanglement over an elementary link of a repeater chain.
Therefore we can use Theorem~\ref{thm:hom_dominates} to make statements about which repeater chains can distribute entanglement the fastest.
It directly follows that homogeneous repeater chains have the highest rate, as shown in Corollary~\ref{cor:rate_optimal_for_hom_chain}.

\begin{corollary} \label{cor:rate_optimal_for_hom_chain}
Let $R$ be the rate of a repeater chain as modeled in Sec.~\ref{sec:problem_statement}, with $N + 2$ total nodes in the chain.
Let the end nodes (nodes 0 and $N + 1$) be distance $L$ apart, and let the $N$ repeater nodes be placed along a line between the end nodes.
Denote the distance between nodes $i$ and $i + 1$ as $\ell_i > 0$ for $i = 0, ..., N$, which satisfy $\sum_i \ell_i = L$.
Then $R$ is maximal when the repeaters are placed equidistantly, i.e., when $\ell_i = \frac L {N+1}$.
\end{corollary}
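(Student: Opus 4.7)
The plan is to express the rate $R$ explicitly in terms of the chain configuration $\{\ell_i\}_{i=0}^N$ and then show that each factor in that expression is simultaneously optimized by the homogeneous configuration $\ell_i = L/(N+1)$.

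First, I would derive a closed-form expression for $R$. Under the synchronized swap-ASAP model of Sec.~\ref{sec:problem_statement}, every attempt round has the same duration $L_{\max}/c$ where $L_{\max} = \max_i \ell_i$, and since each repeater has a single qubit per neighbor which retains its entangled state until swapping, the number of attempt rounds needed to generate one end-to-end entangled pair is $\max_i X_i$, where the $X_i \sim \text{geom}(p_i)$ are independent and $p_i = 10^{-\alpha \ell_i / 10}$. Because consecutive end-to-end generation cycles are i.i.d., the long-run rate is
\begin{equation}
R \;=\; \frac{c}{L_{\max}\,\mathbb{E}\!\left[\max_i X_i\right]}.
\end{equation}

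Next, I would minimize each denominator factor separately over all configurations satisfying $\sum_i \ell_i = L$ with $\ell_i > 0$. For $L_{\max}$, the pigeonhole principle gives $L_{\max} \geq L/(N+1)$ with equality if and only if the $\ell_i$ are all equal. For $\mathbb{E}[\max_i X_i]$, note that $\prod_i p_i = 10^{-\alpha L/10}$ is fixed by $L$ alone, so every allowed configuration produces a random variable in the family $\mathcal{X}$ of Theorem~\ref{thm:hom_dominates} with $p_{\mathrm{tot}} = 10^{-\alpha L/10}$. By that theorem, every such $\max_i X_i$ stochastically dominates the homogeneous variable $X_{\mathrm{hom}}$, and then Lemma~\ref{lem:expected_values_from_dominance} yields $\mathbb{E}[\max_i X_i] \geq \mathbb{E}[X_{\mathrm{hom}}]$, with equality exactly when all $p_i$ are equal, i.e.\ when all $\ell_i$ are equal.

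Since both $L_{\max}$ and $\mathbb{E}[\max_i X_i]$ attain their minima simultaneously at the homogeneous configuration $\ell_i = L/(N+1)$, the rate $R$ is maximized there, completing the argument. The main obstacle I expect is the initial justification of the rate formula itself: one must argue carefully that (i) a link that succeeds early does not lose its entanglement while waiting for the slower links (so the per-pair time really is $(L_{\max}/c) \cdot \max_i X_i$), and (ii) the swap operations contribute no additional time in the model of Sec.~\ref{sec:problem_statement} and consecutive generation cycles are genuinely i.i.d.\ so that the renewal-theorem reciprocal holds. Both follow from the modeling assumptions, but should be stated explicitly so that the two-factor optimization argument applies without qualification.
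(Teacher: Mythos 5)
Your proposal is correct and follows essentially the same route as the paper's proof: write $R = 1/\mathbb{E}[t_{\text{att}}\max_i X_i]$, note that $t_{\text{att}} = \max_i \ell_i / c$ is minimized by the equal partition, and use the fixed product $\prod_i p_i = 10^{-\alpha L/10}$ to invoke Theorem~\ref{thm:hom_dominates} together with Lemma~\ref{lem:expected_values_from_dominance} for the expected number of rounds. Your explicit attention to justifying the rate formula (persistence of early-succeeding links and i.i.d.\ generation cycles) is a reasonable addition that the paper leaves implicit in its modeling assumptions.
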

\begin{proof}
Let $T$ be the time required to create end-to-end entanglement.
Then, by definition, the entangling rate is $R = 1/\mathbb E[T]$.
Hence, $R$ is maximal for equidistantly spaced repeaters if and only if $\mathbb E[T]$ is minimal in that case.
In the model described in Sec.~\ref{sec:problem_statement} we have
\begin{equation}
T = t_\text{att} X,
\end{equation}
where $t_\text{att}$ is the attempt duration (which is the same for all edges as the attempts are synchronized to match the longest link), $X = \max{\left(\left\{X_i\right\}\right)}$ is the number of rounds until completion, $X_i \sim \text{geom}(p_i)$ is the number of rounds required to create entanglement on edge $i$ and $p_i = 10^{-\frac{\alpha}{10} \ell_i}$ is the success probability of entanglement generation over edge $i$.

The attempt duration in this model is given by
\begin{equation}
t_\text{att} = \frac 1 c \max{\left(\left\{\ell_i\right\}_i\right)}
\end{equation}
with $c$ the speed of light in fiber.
This is trivially minimized by splitting the total length $L$ into equal parts.

Now, we observe that we have
\begin{equation} \label{eq:ptot_in_terms_of_L}
\prod_i p_i = \ 10^{- \frac \alpha {10} \sum_i \ell_i} = 10^{- \frac \alpha {10} L} =: p_\text{tot}.
\end{equation}
Hence, the number of attempts $X$ fulfills the assumptions of Theorem~\ref{thm:hom_dominates}, and it follows that the instantiation of $X$ for which all $p_i$ are equal, i.e., for which the $\ell_i$ equally partition the total length $L$, is stochastically dominated by all other instantiations of $X$.
Lemma~\ref{lem:expected_values_from_dominance} then implies that $\mathbb E[X]$ is minimal when the nodes are placed equidistantly.

As equidistant node placement minimizes both $t_\text{att}$ and $\mathbb E[X]$ it also minimizes $\mathbb E[T] = t_\text{att} \mathbb E[X]$.
\end{proof}

That the rate of a homogeneous chain is optimal has been known, see, e.g.,~\cite{avis2024}.
However, saying that one random variable statistically dominating another is a stronger statement than saying that its expected value is larger.
This allows us to show that it is optimal to place repeaters equidistantly even if only some of the repeater locations can be freely chosen.
We now build towards this result, which is contained in Corollary~\ref{cor:rate_optimal_for_hom_extension}.

\begin{lemma} \label{lem:domination_of_max}
Let $X$, $X'$ and $Y$ be random variables such that $X \geq_\text{st} X'$.
Then
\begin{equation}
\max{ \left( \left\{ X, Y \right\} \right)} \geq_\text{st} \max{ \left( \left\{ X', Y \right\} \right)}
\end{equation}
\end{lemma}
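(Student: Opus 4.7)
The plan is to reduce the claim about stochastic dominance of maxima to a pointwise inequality of cumulative distribution functions, which will then follow from the elementary monotonicity of multiplication by a nonnegative factor. Implicitly I assume $Y$ is independent of both $X$ and $X'$, which is the natural setting in which the lemma will be applied (since in the repeater-chain context the per-edge completion times of different edges are independent), and which is the regime in which Lemma~\ref{lem:cdf_of_max} was stated.

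First I would invoke Lemma~\ref{lem:dominance_in_terms_of_cdfs} to rewrite the desired conclusion as the statement
\begin{equation}
F_{\max(\{X,Y\})}(z) \le F_{\max(\{X',Y\})}(z) \quad \text{for all } z.
\end{equation}
Second, I would apply Lemma~\ref{lem:cdf_of_max} to each side, obtaining $F_{\max(\{X,Y\})}(z) = F_X(z) F_Y(z)$ and $F_{\max(\{X',Y\})}(z) = F_{X'}(z) F_Y(z)$. Third, the hypothesis $X \ge_{\text{st}} X'$ yields, again by Lemma~\ref{lem:dominance_in_terms_of_cdfs}, the inequality $F_X(z) \le F_{X'}(z)$ for every $z$.

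The final step is to multiply both sides of this inequality by $F_Y(z)$, which is a nonnegative real number (being a probability), giving $F_X(z) F_Y(z) \le F_{X'}(z) F_Y(z)$, i.e., $F_{\max(\{X,Y\})}(z) \le F_{\max(\{X',Y\})}(z)$. This being true for every $z$, a second application of Lemma~\ref{lem:dominance_in_terms_of_cdfs} yields the claimed stochastic dominance.

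There is no real obstacle here; the only subtlety is being explicit about the independence of $Y$ from $X$ and $X'$, which is required to invoke Lemma~\ref{lem:cdf_of_max} in the factored form used above. Without independence the same conclusion still holds, but would require a coupling argument (e.g., via Strassen's theorem) rather than the direct CDF calculation, and this is unnecessary for the uses of the lemma in the subsequent corollary.
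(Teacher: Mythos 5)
Your proof is correct and follows essentially the same route as the paper's: factor the CDFs of the maxima via Lemma~\ref{lem:cdf_of_max}, translate the hypothesis into $F_X(z) \le F_{X'}(z)$ via Lemma~\ref{lem:dominance_in_terms_of_cdfs}, multiply by the nonnegative factor $F_Y(z)$, and translate back. You are also right to flag the independence of $Y$ from $X$ and $X'$ as an implicit hypothesis needed for the factored CDF form; the paper's statement omits it but its proof relies on it in exactly the same way.
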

\begin{proof}
Let $Z = \max{ \left( \left\{ X, Y \right\} \right)}$ and $Z' = \max{ \left( \left\{ X', Y \right\} \right)}$.
By Lemma~\ref{lem:cdf_of_max} we have
\begin{equation}
F_Z(z) = F_X(z) F_Y(z)
\end{equation}
and
\begin{equation}
F_{Z'}(z) = F_{X'}(z) F_Y(z).
\end{equation}
Because $X \geq_\text{st} X'$, by Lemma~\ref{lem:dominance_in_terms_of_cdfs} we have $F_{X}(z) \leq F_{X'}(z)$ for all $z$.
Since $F_Y(z) \geq 0$ (it is a probability), it follows that for all $z$ it holds that $F_{Z}(z) \leq F_{Z'}(z)$ and hence (again by Lemma~\ref{lem:dominance_in_terms_of_cdfs}) $Z \geq_\text{st} Z'$.
\end{proof}

\begin{corollary} \label{cor:hom_dominates_in_max}
Let $\mathcal Z$ be the family of random variables
\begin{equation}
\mathcal Z = \left\{ \max \left( \left\{Y\right\} \cup \left\{X_i\right\}_{i=1}^n \right) | X_i \sim \text{geom}(p_i), \prod_i^n {p_i} = p_\text{tot}  \right\},\nonumber
\end{equation}
where $Y$ is a discrete random variable and $0 < p_\text{tot}, p_i \leq 1$.
Let $Z_\text{hom} \in \mathcal X$ be the member of that family for which $p_1 = p_2 = ... = p_n$, i.e.,
$Z_\text{hom} = \max\left( \left\{Y \right)\} \cup \left\{ X_{\text{hom}, i}  \right\}_{i=1}^n \right)$ with $X_{\text{hom}, i} \sim \text{geom}(\sqrt[n]{p_\text{tot}})$.
Then, for every $Z \in \mathcal Z$ it holds that
\begin{equation}
Z \geq_\text{st} Z_\text{hom}.
\end{equation}
\end{corollary}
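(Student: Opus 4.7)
The plan is to reduce the statement directly to the two results that precede it: Theorem~\ref{thm:hom_dominates}, which handles stochastic dominance for the maximum of the geometric variables alone, and Lemma~\ref{lem:domination_of_max}, which lets us insert an additional random variable into a maximum without breaking dominance. The structural observation that makes this work is that the maximum of a set of random variables can be rewritten as the maximum of any partition of that set, so $\max(\{Y\} \cup \{X_i\}_{i=1}^n) = \max\bigl(\{Y, \max(\{X_i\}_{i=1}^n)\}\bigr)$.

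First I would define $X = \max(\{X_i\}_{i=1}^n)$ for an arbitrary member of $\mathcal{Z}$ and $X_\text{hom} = \max(\{X_{\text{hom},i}\}_{i=1}^n)$ for the homogeneous member. Since the families of geometric random variables appearing in $X$ and $X_\text{hom}$ satisfy exactly the hypothesis of Theorem~\ref{thm:hom_dominates} (both have success probabilities multiplying to $p_\text{tot}$, and the $X_{\text{hom},i}$ are the equal-probability case), we obtain $X \geq_\text{st} X_\text{hom}$ immediately.

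Next I would invoke Lemma~\ref{lem:domination_of_max} with this $X$, $X' = X_\text{hom}$, and the extra variable $Y$ from the corollary statement. This gives $\max(\{X, Y\}) \geq_\text{st} \max(\{X_\text{hom}, Y\})$. Rewriting the maxima by the partition identity above, the left-hand side equals $Z$ and the right-hand side equals $Z_\text{hom}$, completing the argument.

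The main subtlety I expect rather than a genuine obstacle is the independence assumption: Lemma~\ref{lem:domination_of_max} is stated without an explicit independence hypothesis, but its proof uses Lemma~\ref{lem:cdf_of_max}, which does require independence. In the setting of the corollary, $Y$ is introduced separately from the $X_i$, and in the physical interpretation (representing, for instance, the time for a disjoint portion of the chain to produce entanglement) these are naturally independent, so I would either state this independence assumption explicitly at the start of the proof or note that it is implicit in the model of Sec.~\ref{sec:problem_statement}. Beyond this bookkeeping point, everything else is a direct combination of the two previously established results.
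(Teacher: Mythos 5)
Your proposal is correct and follows the paper's proof essentially verbatim: rewrite $\max(\{Y\}\cup\{X_i\}_{i=1}^n)$ as $\max(\{Y,\max(\{X_i\}_{i=1}^n)\})$, apply Theorem~\ref{thm:hom_dominates} to get $X \geq_\text{st} X_\text{hom}$, and then invoke Lemma~\ref{lem:domination_of_max}. Your remark on the implicit independence assumption is a fair bookkeeping point but does not change the argument.
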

\begin{proof}
Because $\max \left(\{A, B, C\} \right) \nolinebreak[4]=\nolinebreak[4]\max (\{A, \max(\{B, C\})\})$ we can rewrite
\begin{equation}
\mathcal Z = \left\{ \max \left( \left\{Y, X \right\}\right)| X \in \mathcal X\right\},
\end{equation}
where $\mathcal X$ is as defined in Eq.~\eqref{eq:family_for_hom_dominates_thm}.
We know from Theorem~\ref{thm:hom_dominates} that, for all $X \in \mathcal X$, $X \geq_\text{st} X_\text{hom}$, where $X_\text{hom} = \max\left( \left\{ X_{\text{hom}, i}  \right\}_{i=1}^n \right)$.
Then it follows by Lemma~\ref{lem:domination_of_max} that, for all $X \in \mathcal X$, $\max \left(\left\{Y, X \right\} \right) \geq_\text{st} \max \left(\left\{Y, X_\text{hom} \right\} \right) = \mathcal Z_\text{hom}$.
\end{proof}

\begin{corollary} \label{cor:rate_optimal_for_hom_extension}
Let $R$ be the rate of a repeater chain as modeled in Sec.~\ref{sec:problem_statement} of $M + N + 1$ nodes.
Let the nodes 1, 2, ..., $M$ be at fixed locations, with fixed distances between them.
Let the nodes $M$ and $M + N + 1$ be distance $L$ apart, and let the $N$ repeater nodes between those nodes be placed along a straight line.
Denote the distance between nodes $M + i - 1$ and $M + i$ as $\ell_i \geq 0$ for $i = 1, 2, ..., N+1$, which satisfy $\sum_i \ell_i = L$.
Then $R$ is maximal when the repeaters are placed equidistantly, i.e., when $\ell_i = \frac L {N+1}$.
\end{corollary}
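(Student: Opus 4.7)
The plan is to reduce the statement to the machinery already developed, closely mimicking the proof of Corollary \ref{cor:rate_optimal_for_hom_chain} but using Corollary \ref{cor:hom_dominates_in_max} in place of Theorem \ref{thm:hom_dominates}. As in the homogeneous case, we write $R = 1/\mathbb{E}[T]$ with $T = t_\text{att} \cdot X$, so it suffices to show that both $t_\text{att}$ and $\mathbb{E}[X]$ are simultaneously minimized when the free repeaters are placed equidistantly.

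For the attempt duration, $t_\text{att}$ equals $1/c$ times the maximum edge length over the entire chain. The fixed edges contribute a constant $\ell_\text{fixed,max}$ that does not depend on the free-edge configuration, so minimizing $t_\text{att}$ reduces to minimizing the maximum among $\ell_1, \ldots, \ell_{N+1}$ subject to $\sum_i \ell_i = L$. This minimum is $L/(N+1)$, attained precisely at equidistant placement, so equidistant placement minimizes $t_\text{att}$ regardless of the fixed-edge lengths.

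For the number-of-attempts random variable $X$, let $Y$ be the maximum of the geometric random variables associated with the fixed edges; $Y$ is a single discrete random variable whose distribution does not depend on the placement of the free repeaters. The free-edge completion times are independent $X_i \sim \text{geom}(p_i)$ with $p_i = 10^{-\frac{\alpha}{10}\ell_i}$, so that $\prod_{i=1}^{N+1} p_i = 10^{-\frac{\alpha}{10}L} =: p_\text{tot}$ is fixed. Hence $X = \max(\{Y\} \cup \{X_i\}_{i=1}^{N+1})$ fits the hypothesis of Corollary \ref{cor:hom_dominates_in_max}, which yields $X \geq_\text{st} X_\text{hom}$, where $X_\text{hom}$ corresponds to the equidistant assignment $p_i = \sqrt[N+1]{p_\text{tot}}$. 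Applying Lemma \ref{lem:expected_values_from_dominance} then gives $\mathbb{E}[X] \geq \mathbb{E}[X_\text{hom}]$, so equidistant placement also minimizes $\mathbb{E}[X]$.

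Combining the two observations, $\mathbb{E}[T] = t_\text{att} \cdot \mathbb{E}[X]$ is minimized at equidistant placement, and hence $R$ is maximized there. The only real statistical content lives in Theorem \ref{thm:hom_dominates} and its extension Corollary \ref{cor:hom_dominates_in_max}; the present proof is essentially bookkeeping that separates the edges into a fixed part (packaged into the single random variable $Y$) and a free part (the $X_i$). The one place warranting a small amount of care is verifying that $Y$ and the $X_i$ are mutually independent so that Corollary \ref{cor:hom_dominates_in_max} applies verbatim, but this follows from the independence of entanglement-generation attempts across distinct edges in the swap-ASAP model.
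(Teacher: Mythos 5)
Your proof is correct and follows essentially the same route as the paper's: the same factorization $\mathbb{E}[T] = t_\text{att}\,\mathbb{E}[X]$, the same splitting of the maximum into a fixed part (your $Y$, the paper's $A$) and a free part, and the same invocation of Corollary~\ref{cor:hom_dominates_in_max} followed by Lemma~\ref{lem:expected_values_from_dominance}. Your explicit remark about verifying independence of $Y$ and the $X_i$ is a small point of extra care the paper leaves implicit.
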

\begin{proof}
Just as in the proof of Corollary~\ref{cor:rate_optimal_for_hom_chain}, we can write $R = 1/\mathbb E[T]$ where $T = t_\text{att} X$ where $t_\text{att}$ is the attempt duration and $X$ is the number of rounds required until completion.
We will show that $\mathbb E[T]$ is minimal when repeaters are placed equidistantly by showing that both $t_\text{att}$ and $\mathbb E[X]$ are minimal, which implies that $R$ is maximal.
To this end, let us assign the index $i$ to the edge between nodes $i$ and $i + 1$ for $i = 1, 2, ..., M+N$, and let $l_i$ be the length of edge $i$.
Note that $l_{i+ M - 1} = \ell_i$ for $i = 1, 2, ..., N+1$, while $l_i$ is a constant for $i = 1, 2, ..., M-1$.

We start with the attempt duration.
Then
\begin{equation} \label{eq:attempt_duration_as_split_max}
\begin{aligned}
t_\text{att} &= \max\left(\left\{l_i\right\}_{i=1}^{M+N}\right) \\
&= \max \left( \left\{ \max\left(\left\{l_i\right\}_{i=1}^{M-1}\right),  \max\left(\left\{\ell_i \right\}_{i=0}^{N}\right) \right\} \right).
\end{aligned}
\end{equation}
Now, because $\max\left(\left\{\ell_i \right\}_{i=0}^{M+1}\right)$ is minimal when the total length $L$ is split equally among the $\ell_i$ it follows that $t_\text{att}$ is minimal when the repeaters are placed equidistantly.

For the number of rounds until completion, we have
\begin{equation} \label{eq:number_of_rounds_split_up}
X = \max \left(\left\{X_i\right\}_{i=1}^{M+N}\right) = \max \left( \left\{A, B\right\} \right)
\end{equation}
where $X_i$ is the number of rounds until edge $i$ has successfully created entanglement, $A = \max \left( \left\{X_i\right\}_{i = 1}^{M-1}\right)$, and $B = \max \left( \left\{X_i\right\}_{i = M}^{M+N}\right)$.
We have $X_i \sim \text{geom}(p_i)$ with $p_i = e^{-\frac \alpha {10} l_i}$.
Hence, $\prod_{i=M}^{M+N} p_i = 10^{-\frac \alpha {10} L} =: p_\text{tot}$, and therefore $X$ fulfills the assumptions of Corollary~\ref{cor:hom_dominates_in_max}.
The instantiation of $X$ for which the repeaters are placed equidistantly is thus stochastically dominated by all other instantiations and hence, by Lemma~\ref{lem:expected_values_from_dominance}, $\mathbb E[X]$ is minimal in that case.
\end{proof}

Above we have shown that, when extending a repeater chain, the entangling rate $R$ is optimized when that extension contains equidistantly placed repeaters.
However, the utility function considered in this paper is the secret-key rate SKR, and therefore what we need are results about how to extend a repeater chain such that the secret-key rate is optimal.
This is difficult due to the complicated nature of noise incurred during the storage of qubits
This problem can be solved by assuming the extension added to the repeater chain has perfect quantum memory (but still imperfect state generation).
We show that homogeneous extensions are optimal in that case in Lemma~\ref{lem:skr_optimal_for_hom_extension}, and then extend this in Corollary~\ref{cor:skr_of_extension_best_for_smallest_length} to also show that the optimal extension is as short as possible.
This is useful because, for BeFS-EXACT, we are interested in a bound on the best achievable SKR, and assuming that some quantum memories are perfect can only increase the SKR.

\begin{lemma} \label{lem:skr_optimal_for_hom_extension}
Let SKR be the secret-key rate of a repeater chain as modeled in Sec.~\ref{sec:problem_statement} of $M + N + 1$ nodes.
Let the nodes 1, 2, ..., $M$ be at fixed locations, with fixed distances between them.
Let the nodes $M$ and $M + N + 1$) be distance $L$ apart, and let the $N$ repeater nodes between those nodes be placed along a straight line.
Denote the distance between nodes $M + i- 1$ and $M + i$ as $\ell_i \geq 0$ for $i = 1, 2, ..., N + 1$, which satisfy $\sum_i \ell_i = L$.
Let moreover $F$ be the elementary-link fidelity, and $T$ the coherence time of nodes $1, 2, ..., M-1$, while nodes $M, M+1, ..., M + N + 1$ have perfect quantum memory.
Then SKR is maximal when the repeaters are placed equidistantly, i.e., when $\ell_i = \frac L {N+1}$.
\end{lemma}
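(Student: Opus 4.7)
The plan is to factorize $\text{SKR} = R \cdot \text{SKF}$ as in Eq.~\eqref{eq:SKR} and show that both factors are individually maximized when the extension is homogeneous. The rate factor is already handled by Corollary~\ref{cor:rate_optimal_for_hom_extension}, whose proof establishes that the homogeneous extension simultaneously minimizes the attempt duration $t_\text{att}$ and the expected number of attempt rounds $\mathbb E[X]$. What remains is to argue that, under the assumption of perfect quantum memory in the extension, the expected secret-key fraction is also maximized by the homogeneous configuration.

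To analyze the SKF, I would first identify which components of the end-to-end state depend on the extension's internal placement. The $M+N$ elementary links contribute a Werner factor of $\left(\frac{4F-1}{3}\right)^{M+N}$ that is identical for every extension of the same total length, and by hypothesis the repeaters in $\{M, M+1, \ldots, M+N+1\}$ inflict no memory decoherence, leaving memory noise at nodes $\{2, \ldots, M-1\}$ as the only placement-dependent contribution. In the swap-ASAP protocol, each register at node $i$ is created when its adjacent elementary link succeeds at time $t_{i-1}$ or $t_i$, and is consumed when the swap at $i$ fires at time $\max(t_{i-1}, t_i)$. Because each repeater holds a dedicated register per neighbor, once the swap at $i$ fires the register is released, and subsequent swaps propagating through the extension do not retroactively prolong storage at $i$. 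Writing $t_j = t_\text{att} X_j$, the storage time of every qubit at a fixed-part node is therefore either $t_\text{att}\max(0, X_i - X_{i-1})$ or $t_\text{att}\max(0, X_{i-1} - X_i)$, and depends on the extension only through the scalar prefactor $t_\text{att}$.

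Combining the multiplicative action of the depolarizing channel of Eq.~\eqref{eq:memory_noise} on the Werner parameter with the multiplicativity of Werner parameters under entanglement swapping, the end-to-end Werner parameter takes the form $\left(\frac{4F-1}{3}\right)^{M+N} \exp\!\left(-\frac{t_\text{att}}{T}\sum_{j=1}^{M-2}|X_j - X_{j+1}|\right)$. The sum is a statistic of the fixed part whose joint distribution is unaffected by the extension. Because the proof of Corollary~\ref{cor:rate_optimal_for_hom_extension} shows that $t_\text{att} = L_\text{max}/c$ is minimized by the homogeneous extension, this Werner parameter is pointwise (in the fixed-part realization) maximized by the homogeneous extension, and therefore so are the end-to-end fidelity and the per-pair SKF (the binary entropies in Eq.~\eqref{eq:SKF} are monotonic in the QBER, which in turn is monotonic in the Werner parameter). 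Taking expectations preserves the inequality, and combining with the rate bound yields the desired maximality of $\text{SKR}$ under homogeneous extension.

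The main obstacle is establishing the decoupling claim above, namely that every storage time at a fixed-part node truly depends on the extension only through $t_\text{att}$. This requires tracking the swap-ASAP timeline carefully and leaning on two model-specific features: the dedicated-per-neighbor register discipline, which ensures each register is released at its home-node swap rather than at any later swap further down the chain; and the perfect memories in the extension, which guarantee that the (possibly long) time required for the cascade of swaps to propagate through the extension contributes no decoherence and is thus irrelevant to the fidelity of the final state.
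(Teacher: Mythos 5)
Your proposal is correct and follows essentially the same route as the paper's proof: factorize $\text{SKR}=R\cdot\text{SKF}$, invoke Corollary~\ref{cor:rate_optimal_for_hom_extension} for the rate, use the multiplicativity of Werner parameters under swapping to write the end-to-end Werner parameter as $w_0^{M+N}\prod_{i=2}^{M-1}\left(e^{-|X_i-X_{i-1}|}\right)^{t_\text{att}/T}$, and observe that the extension enters only through $t_\text{att}$, which equidistant placement minimizes. Your explicit justification of why downstream swaps in the extension do not prolong storage at fixed-part nodes is a detail the paper leaves implicit, but the argument is the same.
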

\begin{proof}
As defined in Eq.~\eqref{eq:SKR}, SKR is the product of the entangling rate $R$ and the secret-key fraction SKF.
$R$ is maximal when repeaters are placed equidistantly by Corollary~\ref{cor:rate_optimal_for_hom_extension}.
Below, we will prove that SKF as well is maximal when all $\ell_i$ are equal.
It then follows that SKR is maximal when the repeaters are placed equidistantly.
For this proof we will use the same edge labeling as in the proof of Corollary~\ref{cor:rate_optimal_for_hom_extension}: edge $i$ is the edge between nodes $i$ and $i + 1$ and has length $l_i$.

First, let us define the depolarizing channel
\begin{equation}
\mathcal D_{k, q}(\rho) = q\rho + (1 - q) \frac{\mathbb 1_k}{2} \text{Tr}_k(\rho),
\end{equation}
where $q$ is the depolarizing parameter corresponding to the channel, $k$ is the index of the qubit the channel acts on, $\mathbb 1_k$ is the identity operator on the Hilbert space of qubit $k$, and $\text{Tr}_k$ denotes the partial trace with respect to the Hilbert space of qubit $k$.
A convenient property of this channel is that concatenating multiple depolarizing channel gives a new depolarizing channel with the product of the depolarizing parameters:
\begin{equation}
\mathcal D_{k, q_2} \otimes \mathcal D_{k, q_1} = \mathcal D_{k, q_1q_2}.
\end{equation}
Moreover, let us define the Werner state with Werner parameter $w$ as
\begin{equation} \label{eq:Werner_state}
W_w = w \ket{\phi^+}\bra{\phi^+} + (1 - w) \frac{\mathbb 1}{4}.
\end{equation}

An important property of the depolarizing channel for this proof is that depolarization on the first qubit of a Bell state has the same effect as depolarization of the second Bell state; in both cases a Werner state will be created with a Werner parameter that is equal to the depolarizing parameter,
\begin{equation}
\mathcal D_{1, q}(\ket{\phi^+}\bra{\phi^+}) = \mathcal D_{2, q}(\ket{\phi^+}\bra{\phi^+}) = W_q.
\end{equation}
This allows us to apply the ``ping-pong trick''~\cite{wilde2013quantum}, i.e., move noise around between qubits.
A direct consequence is that when performing entanglement swapping between states $W_{w_1}$ and $W_{w2}$, we get a new Werner state $W_{w1 w_2}$, which can be seen as follows:
the Werner states can be considered Bell states with depolarizing noise on the qubits that are not involved in the entanglement-swapping operation.
Because the swap and the noise now act on different parts of the Hilbert space, they must commute.
Hence, the result can be calculated by first calculating the result of the swap, which creates a perfect Bell state (up to a Pauli correction that commutes with all noise and can be accounted for at the end nodes), and then applying the noise channels.
The noise channels can moreover be moved to the same qubit, allowing us to use the multiplicative property of the depolarizing channel.

In the model under consideration the states that are created upon a successful entanglement-generation attempt are $W_{w_0}$ with
\begin{equation}
w_0 = \frac{4F - 1}{3}.
\end{equation}
Hence, if there were no memory noise, the end-to-end state created by performing entanglement swapping between all copies of $W_{w_0}$ created on the $M + N$ edges would be $W_{w_0^{M + N}}$.
The effect of the memory noise is that for the first repeaters, i.e., the nodes $i = 2, 3, ..., M-1$ there is a depolarizing channel with depolarizing parameter $\exp(-(|t_{i}-t_{i-1}|/T)$ (see Eq.~\eqref{eq:memory_noise}), where $t_i$ is the completion time of link $i$ (remember that the other nodes are assumed to have perfect memory, and that end nodes do not incur decoherence as they directly measure their qubits).
Because we can simply ping-pong these depolarizing channels around, the final result will be the state $W_{w_{e2e}}$ with
\begin{equation}
w_{e2e} = w_0^{M + N} \prod_{i=2}^{M-1} e^{\frac{-|t_{i}-t_{i-1}|}{T}}.
\end{equation}
This is a random variable, as the $t_i$ are random variables; the Werner parameter of the state corresponding to when no postselection on the completion times takes place is $\mathbb E[w_{e2e}]$.

The important thing now is to note that only the completion times of the first $M$ links appear, and hence the state does not depend on the completion time of links $M, M+1, ..., M+N$.
That seems to suggest that the state is fully independent of the parameters $\ell_i$, but that is not necessarily true as each $t_i$ is proportional to the attempt duration, which in this model is determined by the longest link in the chain.
If one of the $\ell_i$ exceeds the lengths of all the first $M$ edges, those first edges will be slowed down accordingly, and hence the final state that is produced is noisier.
We can make this explicit by writing $t_i = t_\text{att} X_i$ where $t_\text{att}$ is the attempt duration and $X_i$ is geometrically distributed.
Then we have
\begin{equation} \label{eq:we2e}
w_{e2e} = w_0^{M + N} \prod_{i=2}^{M-1} \left(e^{-|X_{i}-X_{i-1}|}\right)^{\frac{t_\text{att}}{T}}.
\end{equation}
Because $\exp(-|X_{i+1} - X_i|) \leq 1$ the effect of increasing $t_\text{att}$ is to apply a non-increasing map to each point in the domain of $w_{e2e}$, and hence $\mathbb E[w_{e2e}]$ is a non-increasing function of $t_\text{att}$.

Now, if all the $N$ repeaters are placed equidistantly, the maximum length among the edges of the second leg of the repeater chain is minimal, and hence the maximum length among all the edges in the chain is minimal to the extent that it can be influenced by tuning the $\ell_i$ parameters.
This, in turn, means that equidistant placement leads to minimal $t_\text{att}$ (see also the proof of Corollary~\ref{cor:rate_optimal_for_hom_extension}) and hence results in a state with minimal noise.
To show formally that this maximizes the SKF, note that for a Werner state $W_w$ the QBERs are $Q_X = Q_Z = (1 - w) / 2$ and hence
\begin{equation}
\text{SKF} = \max\left(0, 1 - 2h\left( \frac{1-\mathbb E[w_{e2e}]}{2} \right) \right),
\end{equation}
where $h$ is the binary entropy function.
This is a non-decreasing function of $\mathbb E[w_{e2e}]$ on its domain $0 \leq \mathbb E[w_{e2e}] \leq 1$, thereby completing the proof that the SKF is maximal for equidistant repeater placement.
\end{proof}

Now it only remains to show that, given equidistant repeater placement in the extension, the SKR is optimized by making the total length $L$ of the extension as small as possible.

\begin{lemma} \label{lem:dominant_when_ptot_is_smallest}
Let $X(p_\text{tot}, n) = \max\left( \left\{ X_i  \right\}_{i=1}^n \right)$ where $X_i \sim \text{geom}(\sqrt[n]{p_\text{tot}})$, for $n \in \mathbb N$ and $0 < p_\text{tot} \leq 1$.
If $p_\text{tot}' \geq p_\text{tot}$, then
\begin{equation}
X(p_\text{tot}, n) \geq_\text{st} X(p_\text{tot}', n).
\end{equation}
\end{lemma}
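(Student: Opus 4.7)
The plan is to reduce the claim directly to a comparison of cumulative distribution functions, using the machinery already established earlier in this appendix. By Lemma~\ref{lem:dominance_in_terms_of_cdfs}, showing $X(p_\text{tot}, n) \geq_\text{st} X(p_\text{tot}', n)$ is equivalent to showing that $F_{X(p_\text{tot}, n)}(k) \leq F_{X(p_\text{tot}', n)}(k)$ for every $k$, so the entire proof collapses to verifying one pointwise inequality between CDFs.

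First I would apply Lemma~\ref{lem:cdf_of_max} to write $F_{X(p_\text{tot}, n)}(k) = \prod_{i=1}^n F_{X_i}(k)$, where each $X_i \sim \text{geom}(\sqrt[n]{p_\text{tot}})$ contributes a geometric CDF $F_{X_i}(k) = 1 - (1 - \sqrt[n]{p_\text{tot}})^k$. Setting $p := \sqrt[n]{p_\text{tot}}$ and $p' := \sqrt[n]{p_\text{tot}'}$, this gives the compact expressions
\begin{equation}
F_{X(p_\text{tot}, n)}(k) = \left(1 - (1 - p)^k\right)^n, \quad F_{X(p_\text{tot}', n)}(k) = \left(1 - (1 - p')^k\right)^n.
\end{equation}
Since the $n$-th root is monotonically increasing on $[0, 1]$, the assumption $p_\text{tot}' \geq p_\text{tot}$ implies $p' \geq p$, hence $1 - p' \leq 1 - p$. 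As both quantities lie in $[0, 1]$ and $k \geq 1$, raising to the $k$-th power preserves the inequality, yielding $(1 - p')^k \leq (1 - p)^k$, and therefore $1 - (1 - p')^k \geq 1 - (1 - p)^k \geq 0$. Raising both sides to the $n$-th power (valid because both are nonnegative) gives the desired comparison $F_{X(p_\text{tot}, n)}(k) \leq F_{X(p_\text{tot}', n)}(k)$, which by Lemma~\ref{lem:dominance_in_terms_of_cdfs} completes the proof.

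I do not expect any real obstacle here; the statement is essentially a monotonicity fact about geometric distributions lifted to their pointwise maximum, and every algebraic step preserves inequality because the relevant quantities are confined to $[0, 1]$ and $n, k$ are positive integers. The only place to be mildly careful is the boundary case $p_\text{tot} = 0$, which is excluded by the hypothesis $0 < p_\text{tot} \leq 1$, so all roots and powers are well defined throughout.
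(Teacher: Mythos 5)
Your proof is correct and follows essentially the same route as the paper's: both apply Lemma~\ref{lem:cdf_of_max} to obtain $F_X(k) = \bigl(1 - (1 - \sqrt[n]{p_\text{tot}})^k\bigr)^n$, argue that this expression is monotonically increasing in $p_\text{tot}$, and conclude via Lemma~\ref{lem:dominance_in_terms_of_cdfs}. You merely spell out the chain of elementary inequalities more explicitly than the paper does.
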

\begin{proof}
The CDF of each of the $X_i$ is given by $F_{X_i}(k) = 1 - (1 - \sqrt[n]{p_\text{tot}})^k$.
By Lemma~\ref{lem:cdf_of_max} we then have
\begin{equation}
F_{X}(k) = \left(1 - (1 - \sqrt[n]{p_\text{tot}})^k\right)^n.
\end{equation}
This is a function that monotonically increases with $p_\text{tot}$ for all $0 < p_\text{tot} \leq 1$ (this follows from $x^y$ being an increasing function of $x$ for any $x, y>0$ and $1 - x$ being a decreasing function of $x$).
Hence, $X(p'_\text{tot}, n)$ has a larger CDF than $X(p_\text{tot}, n)$ throughout.
Stochastic domination then follows from Lemma~\ref{lem:dominance_in_terms_of_cdfs}.
\end{proof}

\begin{corollary} \label{cor:skr_of_extension_best_for_smallest_length}
Let SKR be the secret-key rate of a repeater chain as modeled in Sec.~\ref{sec:problem_statement} of $M + N + 1$ nodes.
Let the nodes 1, 2, ..., $M$ be at fixed locations, with fixed distances between them.
Let the $N$ repeater nodes between nodes $M$ and $M + N + 1$ be placed along a curve that can be freely chosen but that can be no shorter than $L_\text{min}$.
Denote the distance between nodes $M + i- 1$ and $M + i$ as $\ell_i \geq 0$ for $i = 1, 2, ..., N + 1$, which satisfy $\sum_i \ell_i \geq L_\text{min}$.
Let moreover $F$ be the elementary-link fidelity, and $T$ the coherence time of nodes $1, 2, ..., M-1$, while nodes $M, M+1, ..., M + N + 1$ have perfect quantum memory.
Then SKR is maximal when the length of the curve is minimized, i.e., $\sum_i \ell_i = L_\text{min}$, and the repeaters are placed equidistantly, i.e., $\ell_i = \frac {L_\text{min}} {N+1}$.
\end{corollary}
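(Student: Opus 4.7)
The plan is to reduce the problem to comparing homogeneous extensions of different total lengths and then show that the rate and secret-key fraction are both individually non-increasing in that length. Fix any admissible curve of length $L \geq L_\text{min}$ on which the $N$ extension repeaters are to be placed. Since the argument of Lemma~\ref{lem:skr_optimal_for_hom_extension} only uses the constraint $\sum_{i=1}^{N+1} \ell_i = L$ (not that the curve is straight), that lemma immediately implies that, for every such $L$, the SKR is maximized over placements of the $N$ repeaters by the equidistant choice $\ell_i = L/(N+1)$. It therefore suffices to show that, restricted to this homogeneous slice, SKR is a non-increasing function of $L$; the optimum then occurs at $L = L_\text{min}$ with $\ell_i = L_\text{min}/(N+1)$.

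For the rate factor $R = 1/\mathbb{E}[T]$ with $T = t_\text{att} X$, I would argue that both $t_\text{att}$ and $\mathbb{E}[X]$ are non-decreasing in $L$ on the homogeneous slice. First, each extension edge has length $L/(N+1)$, so by Eq.~\eqref{eq:attempt_duration_as_split_max} the attempt duration $t_\text{att} = \max\bigl(\max_{i<M} l_i, L/(N+1)\bigr)$ is manifestly non-decreasing in $L$. Second, by Eq.~\eqref{eq:number_of_rounds_split_up} we may write $X = \max(A, B)$, where $A$ depends only on the fixed first $M-1$ edges and $B$ is the maximum of $N+1$ i.i.d.\ geometric variables with success probability $\sqrt[N+1]{p_\text{tot}}$ for $p_\text{tot} = 10^{-\alpha L/10}$. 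Decreasing $L$ strictly increases $p_\text{tot}$, so Lemma~\ref{lem:dominant_when_ptot_is_smallest} gives stochastic dominance of $B$ at the shorter length; Lemma~\ref{lem:domination_of_max} then lifts this to $X$; and Lemma~\ref{lem:expected_values_from_dominance} converts the stochastic dominance into a comparison of expectations. Hence $\mathbb{E}[X]$ is non-decreasing in $L$, and $R$ is non-increasing in $L$.

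For the SKF factor, I would use the explicit expression in Eq.~\eqref{eq:we2e}. Because $N$ (and hence $M+N$) is fixed and the joint distribution of $(X_1,\ldots,X_{M-1})$ depends only on the fixed edges $l_1,\ldots,l_{M-1}$, the only channel through which $L$ enters $w_{e2e}$ is the exponent $t_\text{att}/T$. Since each factor $\exp(-|X_i - X_{i-1}|) \leq 1$ pointwise, raising this exponent is a pointwise non-increasing operation on $w_{e2e}$, so $\mathbb{E}[w_{e2e}]$ is non-increasing in $t_\text{att}$ and therefore in $L$. The same monotonicity argument as in the proof of Lemma~\ref{lem:skr_optimal_for_hom_extension}, using that SKF is a non-decreasing function of $\mathbb{E}[w_{e2e}]$ on $[0,1]$, then gives that the SKF is non-increasing in $L$. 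Multiplying by the non-increasing $R$ yields the claim.

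The main subtlety is not any individual estimate but rather the bookkeeping required to isolate how $L$ affects the SKR; once Lemma~\ref{lem:skr_optimal_for_hom_extension} pins us to the homogeneous slice, the dependence on $L$ factors cleanly into the success probability $p_\text{tot}$ (handled by Lemma~\ref{lem:dominant_when_ptot_is_smallest}) and the attempt duration $t_\text{att}$ (handled directly), and both the rate and the SKF inherit the right monotonicity from the dominance machinery already developed in this appendix.
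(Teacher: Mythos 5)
Your proof is correct and takes essentially the same route as the paper's: both first invoke Lemma~\ref{lem:skr_optimal_for_hom_extension} to reduce to homogeneous placements at each fixed $L$, then show $R$ is non-increasing in $L$ via Lemmas~\ref{lem:dominant_when_ptot_is_smallest}, \ref{lem:domination_of_max}, and \ref{lem:expected_values_from_dominance} together with the monotonicity of $t_\text{att}$, and finally argue that the SKF depends on $L$ only through $t_\text{att}$ in Eq.~\eqref{eq:we2e} and is therefore also maximized at $L = L_\text{min}$.
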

\begin{proof}
Let $\sum_i \ell_i = L$.
Then, for any given value of $L$, by Lemma~\ref{lem:skr_optimal_for_hom_extension}, the SKR is maximal for $\ell_i = \frac L {N + 1}$.
It then only remains to prove that the optimal value for $L$ is $L_\text{min}$.
As in the proof of Lemma~\ref{lem:skr_optimal_for_hom_extension} we note that SKR is the product of SKF and $R$.
We will first show that making $L$ as small as possible maximizes SKF, and then that it maximizes $R$.
It then follows that SKR is maximal for $L = L_\text{min}$.

From Eq.\eqref{eq:we2e} we see that $L$ can only affect SKF through its effect on the attempt duration, $t_\text{att}$.
We can use Eq.~\eqref{eq:attempt_duration_as_split_max} to write
\begin{equation}
t_\text{att} = \max \left( \left\{ \max\left(\left\{l_i\right\}_{i=1}^{M-1}\right),  \frac L {N + 1} \right\}\right),
\end{equation}
where the $l_i$ for $i= 1, 2, ..., M-1$ are the constant lengths of the first $M - 1$ edges.
From this relation, it follows that $t_\text{att}$ is minimal when $L$ is minimal.
Moreover, we have concluded in the proof of Lemma~\ref{lem:skr_optimal_for_hom_extension} that the SKF is maximal when $t_\text{att}$ is minimal, and hence $L = L_\text{min}$ maximizes SKF.

Now, to prove that the rate $R$ is maximal we write $R = 1/\mathbb[T]$ with $T = t_\text{att} X$ where $X$ is the number of rounds until completion, which is the maximum of the number of rounds until completion over all of the $M + N$ edges.
We saw in Eq.~\eqref{eq:number_of_rounds_split_up} we have $X = \max \left( \left\{A, B\right\} \right)$ where $A$ is independent of $L$ and, since the repeater nodes are equidistantly spaced, $B = \max\left( \left\{ X_i  \right\}_{i=1}^n \right)$ where $X_i \sim \text{geom}(\sqrt[n]{p_\text{tot}})$ and $p_\text{tot} = 10^{-\frac \alpha {10} L}$ (see Eq.~\eqref{eq:ptot_in_terms_of_L}).
From Lemma~\ref{lem:dominant_when_ptot_is_smallest} we know that the $B$ for which $p_\text{tot}$ is smallest is stochastically dominated by all other instantiations of $B$, and by Lemma~\ref{lem:domination_of_max} the same is true for $X$.
As $p_\text{tot}$ decreases monotonically with $L$, it follows that $X$ is stochastically dominated for $L = L_\text{min}$ by any other instantiation of $X$.
By Lemma~\ref{lem:expected_values_from_dominance} it then follows that $\mathbb E[X]$ is minimal for $L = L_\text{min}$.
Since we already concluded that $t_\text{att}$ is minimal in that case, it follows that $R$ is maximal.
\end{proof}

Finally, we include a result that bounds the number of repeaters that can be included in a chain such that there is a nonzero secret-key rate.
This helps finding the optimal number of repeaters in the BeFS-EXACT and BeFS-HEURISTIC algorithms more quickly.

\begin{lemma} \label{lem:max_N}
Let SKR be the secret-key rate of a repeater chain as modeled in Sec.~\ref{sec:problem_statement} with $N + 1$ nodes and elementary-link fidelity $F$.
Then, for every $N \geq N^*$ it holds that $\text{SKR} = 0$, where
\begin{equation}
N^* = \frac{ \log \left(1 - 2Q^* \right) } {\log \left( \frac {4F - 1}{3} \right) }
\end{equation}
and $Q^* \approx 0.110028$ is the unique solution on the domain $0 \leq Q \leq \frac 1 2$ of the equation $h(Q) = \frac 1 2$ where $h(Q) = -Q\log2(Q) - (1 - Q)\log2(1 - Q)$ is the binary entropy function.
\end{lemma}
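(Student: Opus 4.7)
The plan is to show that whenever $N \geq N^*$, the secret-key fraction (SKF) vanishes, from which $\text{SKR} = R \cdot \text{SKF} = 0$ is immediate. The argument proceeds in four stages.

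First, I would use the structural observation already established in the proof of Lemma~\ref{lem:skr_optimal_for_hom_extension}: each elementary link is produced in the Werner state $W_{w_0}$ with $w_0 = (4F-1)/3$, entanglement swapping of two Werner states $W_{w_1}$ and $W_{w_2}$ yields $W_{w_1 w_2}$, and memory decoherence acts as a depolarizing channel which (by the ping-pong trick) can be absorbed into the Werner parameter of the end-to-end state. Consequently the end-to-end state is $W_{w_{e2e}}$ with $w_{e2e} = w_0^N \prod_k \mu_k$, where each $\mu_k \in (0,1]$ is a depolarizing factor coming from memory noise. Taking expectations and noting that the $\mu_k$ are at most $1$, one obtains $\mathbb{E}[w_{e2e}] \leq w_0^N$.

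Second, I would compute the QBER. For a Werner state $W_w$, both $Q_X$ and $Q_Z$ equal $(1-w)/2$; averaging over experimental runs, the observed QBERs entering Eq.~\eqref{eq:SKF} are $Q_X = Q_Z = (1 - \mathbb{E}[w_{e2e}])/2$. Since $\mathbb{E}[w_{e2e}] \leq w_0^N$, this QBER is bounded below by $(1 - w_0^N)/2$, which lies in $[0, 1/2]$.

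Third, I would translate the positivity condition $\text{SKF} > 0$ into a threshold on $\mathbb{E}[w_{e2e}]$. From Eq.~\eqref{eq:SKF}, $\text{SKF} > 0$ iff $2 h(Q_X) < 1$, i.e.\ $h((1-\mathbb{E}[w_{e2e}])/2) < 1/2$. Because $h$ is strictly increasing on $[0, 1/2]$ and $Q^*$ is by definition the unique point in that interval with $h(Q^*) = 1/2$, positivity is equivalent to $(1-\mathbb{E}[w_{e2e}])/2 < Q^*$, i.e.\ $\mathbb{E}[w_{e2e}] > 1 - 2Q^*$.

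Finally, I would combine the two inequalities. A necessary condition for $\text{SKF} > 0$ is therefore $w_0^N > 1 - 2Q^*$. Since $0 < w_0 < 1$ (assuming $F < 1$, for which $w_0 \in (0,1)$ and the statement is nontrivial) we have $\log w_0 < 0$, so taking logarithms flips the inequality and yields $N < \log(1 - 2Q^*)/\log w_0 = N^*$. Contrapositively, $N \geq N^*$ forces $\text{SKF} = 0$ and hence $\text{SKR} = 0$. The only real subtlety — and the step I would double-check most carefully — is the passage from the random Werner parameter $w_{e2e}$ to the deterministic quantity $\mathbb{E}[w_{e2e}]$ that actually appears in the empirical QBER; this is justified because the error probability is linear in the state, so the empirical error frequency estimates $\mathbb{E}[(1-w_{e2e})/2] = (1 - \mathbb{E}[w_{e2e}])/2$, and the monotonicity of $h$ then propagates the bound $\mathbb{E}[w_{e2e}] \leq w_0^N$ through to a lower bound on the QBER without loss.
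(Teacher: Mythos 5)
Your proposal is correct and follows essentially the same route as the paper: both arguments reduce the end-to-end state to a Werner state whose parameter is bounded above by $w_0^N$ with $w_0 = (4F-1)/3$ (the paper does this by simply dropping memory noise, you by bounding the depolarizing factors by $1$), identify the QBER as $(1-w)/2$, use monotonicity of $h$ to locate the threshold $Q^*$, and solve $w_0^N \leq 1 - 2Q^*$ for $N$. Your extra care about whether the bound should be applied pointwise to $w_{e2e}$ or to $\mathbb{E}[w_{e2e}]$ is a reasonable refinement but immaterial here, since the bound $w_{e2e} \leq w_0^N$ holds for every realization.
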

\begin{proof}
Let $Q_X = Q_Z = Q$ be the QBER, where the X and Z QBERs are equal because end-to-end entangled states are always Werner states and these are invariant when rotating both qubits simultaneously between the X and Z bases (i.e., under the unitary $H \otimes H$, where $H$ is the Hadamard gate).
The QBER is defined as the probability of obtaining anti-correlated instead of correlated measurement results.
Sticking to the Z basis for simplicity, for a Werner state $W_w$ (defined in Eq.~\eqref{eq:Werner_state} this is given by
\begin{equation}
Q = \bra{01} W_w \ket{01} + \bra{10} W_w \ket{10} = \frac {1 - w} 2.
\end{equation}
Because $0 \leq w \leq 1$, we have $0 \leq Q \leq \frac 1 2$.

Now, the SKF is given by (see Eq.~\eqref{eq:SKF}) $\text{SKF} = \max(0, 1 - 2h(Q))$.
The binary entropy function $h(Q)$ is strictly increasing on $0 \leq Q \leq \frac 1 2$ with $h(0) = 0$ and $h(\frac 1 2) = 1$, and hence on that domain there is a unique value $Q^*$ such that $h(Q^*) = \frac 1 2$.
It follows that for all $Q \geq Q^*$ the SKF is zero.
Numerically, it can be determined that $Q^* \approx 0.110028$.

We now assume that there is no memory noise in the repeater chain, hence providing a lower bound on the QBER.
In that case, the end-to-end state is $W_{w_{e2e}}$ with $w_{e2e} = w_0^N$ where $w_0 = (4F-1)/3$ (see Eq.~\eqref{eq:we2e} and surrounding discussion).
The SKF will be zero whenever $w_{e2e} \leq 1 - 2Q^*$, and hence when $N \geq \log_{w_0}(1 - 2Q^*) = \log(1 - 2Q^*) / \log((4F - 1)/3)$.
\end{proof}

\subsection{Validity of the Complexity Estimate}
\label{app:complexity validity}
\begin{figure}[t!]
\centering
\includegraphics[width=.99\columnwidth]{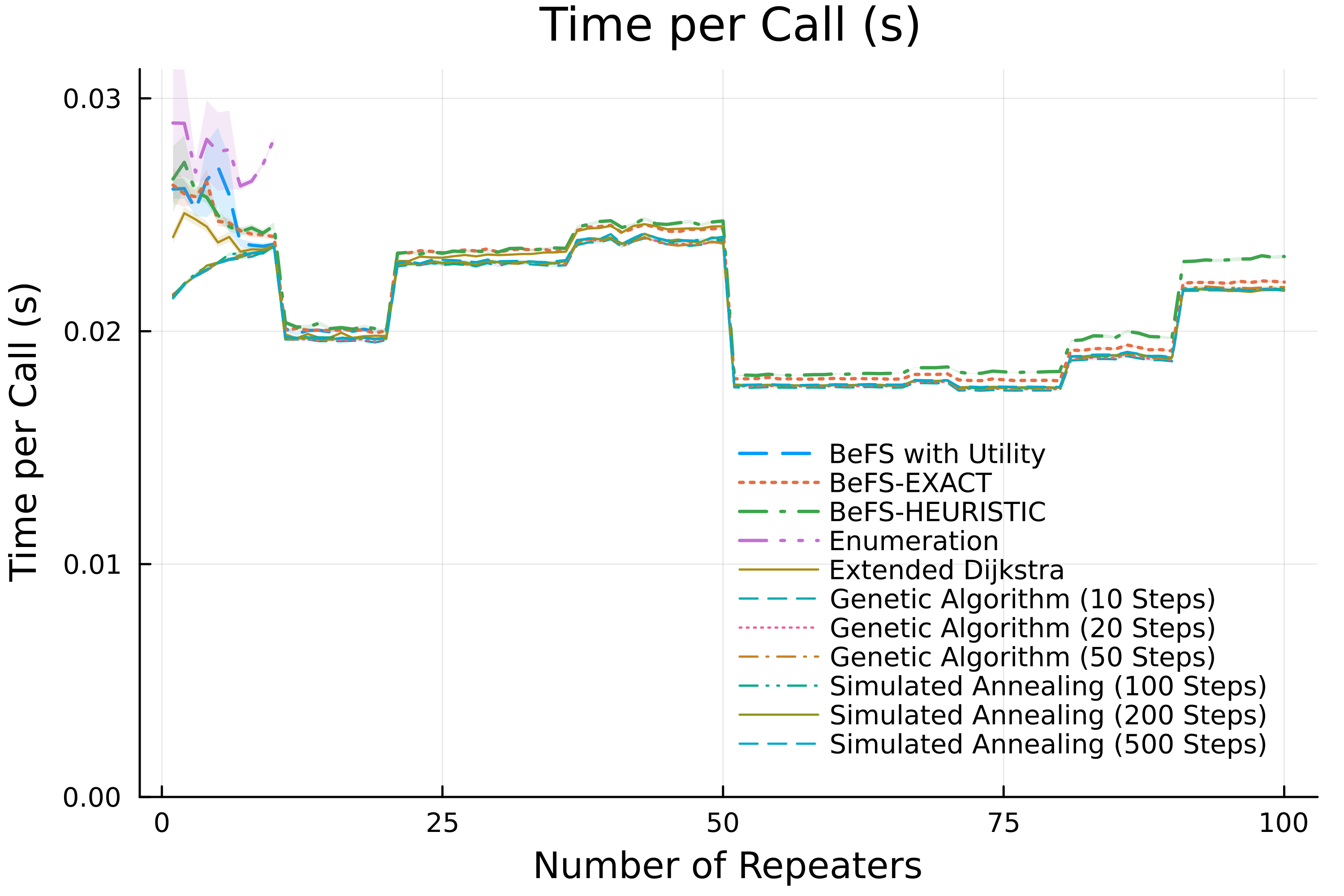}\caption{
    A comparison of the different algorithms for time per secret key rate estimate.
    The x-axis has the number of repeaters, the y-axis has the complexity.
    \label{fig:call_efficiency}
}
\end{figure}
As shown in Fig.~\ref{fig:call_efficiency}, all of the algorithms presented have runtimes that are nearly exactly proportional to the number of SKR estimates.
There is a major spike for very low costs, but this is primarily due to the warm-up time of Julia's JIT compiler.
There is also a cliff at 10 repeaters, which reflects the fact that this was run on a different computer.

\subsection{Removal of Unnecessary Edges}
\label{app:bcc_pruning}
In principle, we can safely remove every edge in the network that is not part of the best path without changing the solution.
However, this is difficult to calculate.
A more conservative approach, however, can be implemented efficiently.
Since the solution path is a path, any edge that is not on any path between the end nodes ($s$ and $t$) can be removed without changing the solution to the pathfinding problem.
We can efficiently identify these edges using the properties of biconnected components.
A biconnected component (BCC) of an undirected graph is an equivalence class of the edges of a graph, where every pair of edges is in the same BCC iff there is a cycle connecting them.
If we add a new edge between $s$ and $t$, then the cycles containing this edge are the paths from $s$ to $t$ with this edge appended.
Thus, the BCC containing this edge is exactly the set of edges that lie on some path from $s$ to $t$.
Biconnected components can be calculated in $O(|V| + |E|)$ time using Tarjan's algorithm~\cite{tarjan1972}.
Thus, before running each of our algorithms, we use Tarjan's algorithm to identify and remove unused edges.

\subsection{Monte-Carlo Error of Exact and Near-Exact Algorithms}
\label{app:monte_carlo_error}
\begin{figure}[t!]
\centering
\includegraphics[width=.99\columnwidth]{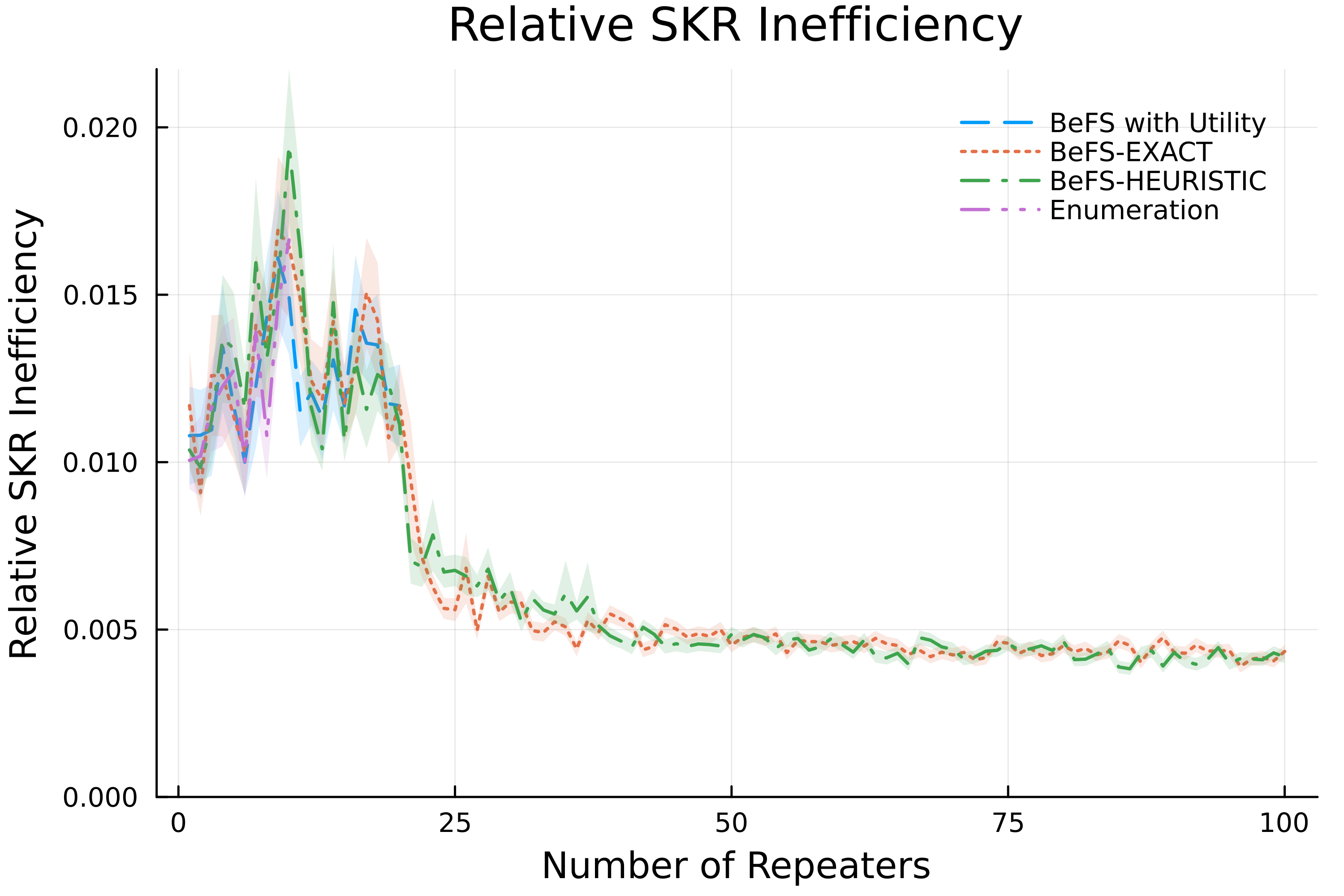}
\caption{
    A comparison of the exact algorithms for utility imperfection as a fraction of the best algorithm for that graph.
    The x-axis has the number of repeaters, the y-axis defect fraction.
    \label{fig:num_rep_vs_skr_deficit_exact}
}
\end{figure}
As shown in Fig.~\ref{fig:num_rep_vs_skr_deficit_exact}, all of the exact and near-exact algorithms have roughly the same deficit ratio.
The relative SKR inefficiency is calculated in the way described in Sec.~\ref{sec:algorithm_comparisons}, with the relative SKR inefficiency being the ratio of the SKR deficit against the best algorithm.

We suspect that these algorithms almost always produce the best path, and that the measured inefficiency is a result of noise in the SKR estimation algorithm.
This is because the measured inefficiency is roughly the same for all of the algorithms, which reflects the symmetry in how inefficiency is calculated.
If there was error stemming from any of the algorithms themselves, as opposed to error in the SKR estimation function, we would expect that algorithm to have regions with a higher inefficiency than the other algorithms.

The relative SKR inefficiency is low for small repeater counts, since many graphs lack a viable path due to memory coherence issues, which forces all of the relative inefficiencies to be 0.
The relative SKR inefficiency is also low for high repeater counts, since it is divided by the maximum secret-key rate, which increases with repeater count.
This leaves the middle, with the relative SKR inefficiency peaking at 1.5\% around 10--20 repeaters.

\subsection{SKR vs. Query Complexity Tradeoff for Large Networks}
\label{app:tradeoff_50}
\begin{figure}[t!]
\centering
\includegraphics[width=.99\columnwidth]{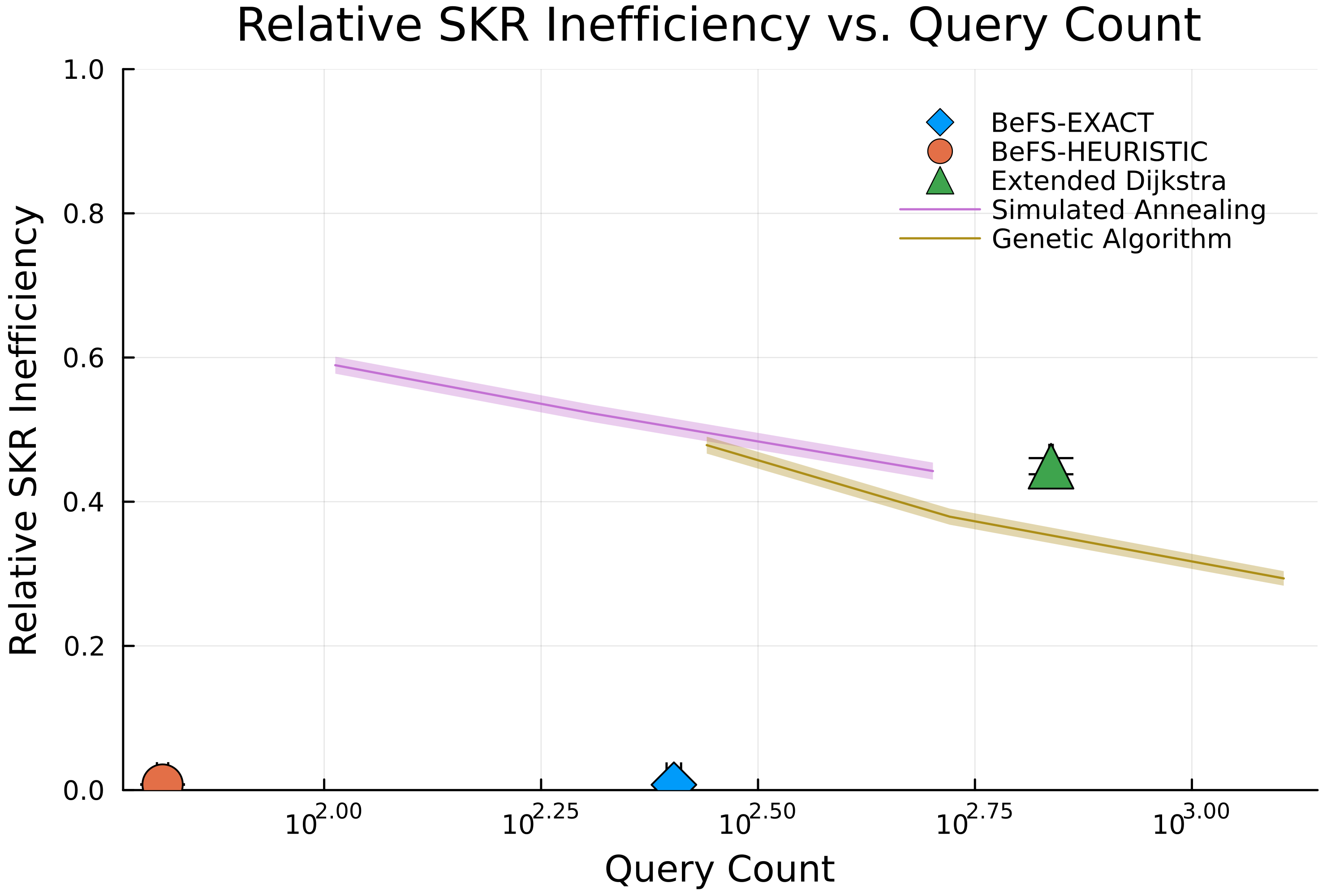}
\caption{
    A scatterplot of the query count of the algorithm (x-axis) vs. the relative secret key rate inefficiency (y-axis) when benchmarking the algorithms described in this paper on a set of randomly generated graphs.
    The query count is the number of times that the SKR estimation algorithm is called.
    The algorithms were run on a set of 1000 randomly generated Waxman graphs that each have 50 repeaters.
    The graphs were generated with the parameters $\alpha=0.5, \beta=0.9, L=300\mathrm{km}$.
    The metaheuristic algorithms (simulated annealing and genetic algorithms) are parameterized by the number of steps, so they are plotted as lines rather than points.
    The simulated annealing algorithms are plotted for 100 to 500 steps, and the genetic algorithms are plotted for 10 to 50 generations.
    Note that this is plot was generated with a subset of the data used for Fig.~\ref{fig:num_rep_vs_query_complexity} and Fig.~\ref{fig:num_rep_vs_skr_deficit} rather than a special run, meaning that fewer algorithms are plotted here.
    \label{fig:query_complexity_vs_skr_deficit_50}
}
\end{figure}
In comparison to Fig.~\ref{fig:query_complexity_vs_skr_deficit}, Fig.~\ref{fig:query_complexity_vs_skr_deficit_50} shows the extended Dijkstra's algorithm to be strictly worse than some parameterizations of simulated annealing and genetic algorithms.
In addition, the BeFS algorithms remain accurate.

\subsection{Time Tradeoff for Large Networks}
As an extension of our experiments, we also benchmarked the most promising algorithms from our investigations on networks of up to 1000 repeaters.
We show our results in Fig.~\ref{fig:num_rep_vs_time_1000} and Fig.~\ref{fig:num_rep_vs_query_complexity_1000}.
\label{app:tradeoff_1000}
\begin{figure}[ht!!!]
\centering
\includegraphics[width=.99\columnwidth]{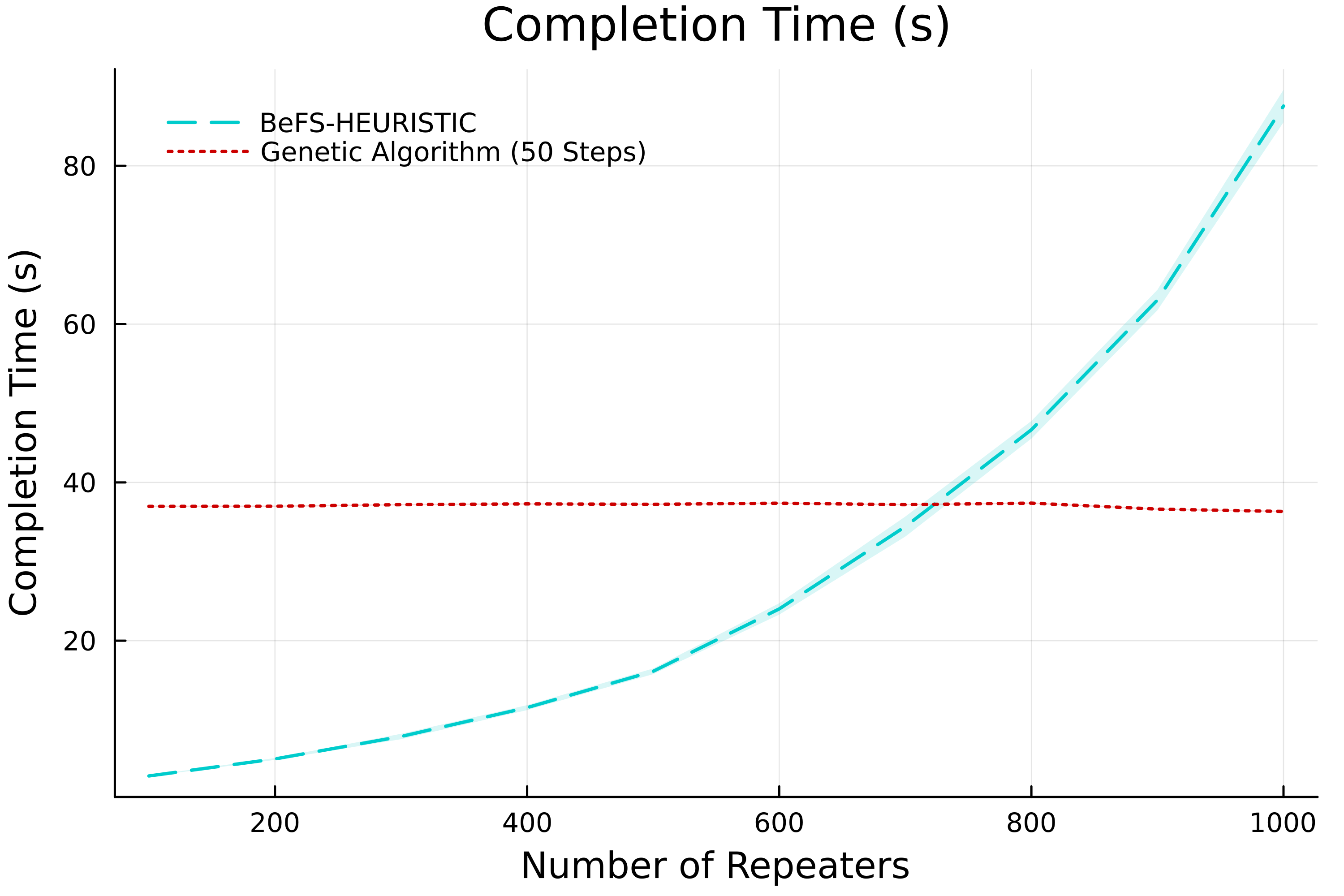}
\caption{
    A plot comparing the average runtime for BeFS-HEURISTIC v.s. Genetic Algorithms with 50 generations for graphs with 100--1000 repeaters.
    \label{fig:num_rep_vs_time_1000}
}
\end{figure}

\begin{figure}[ht!!!]
\centering
\includegraphics[width=.99\columnwidth]{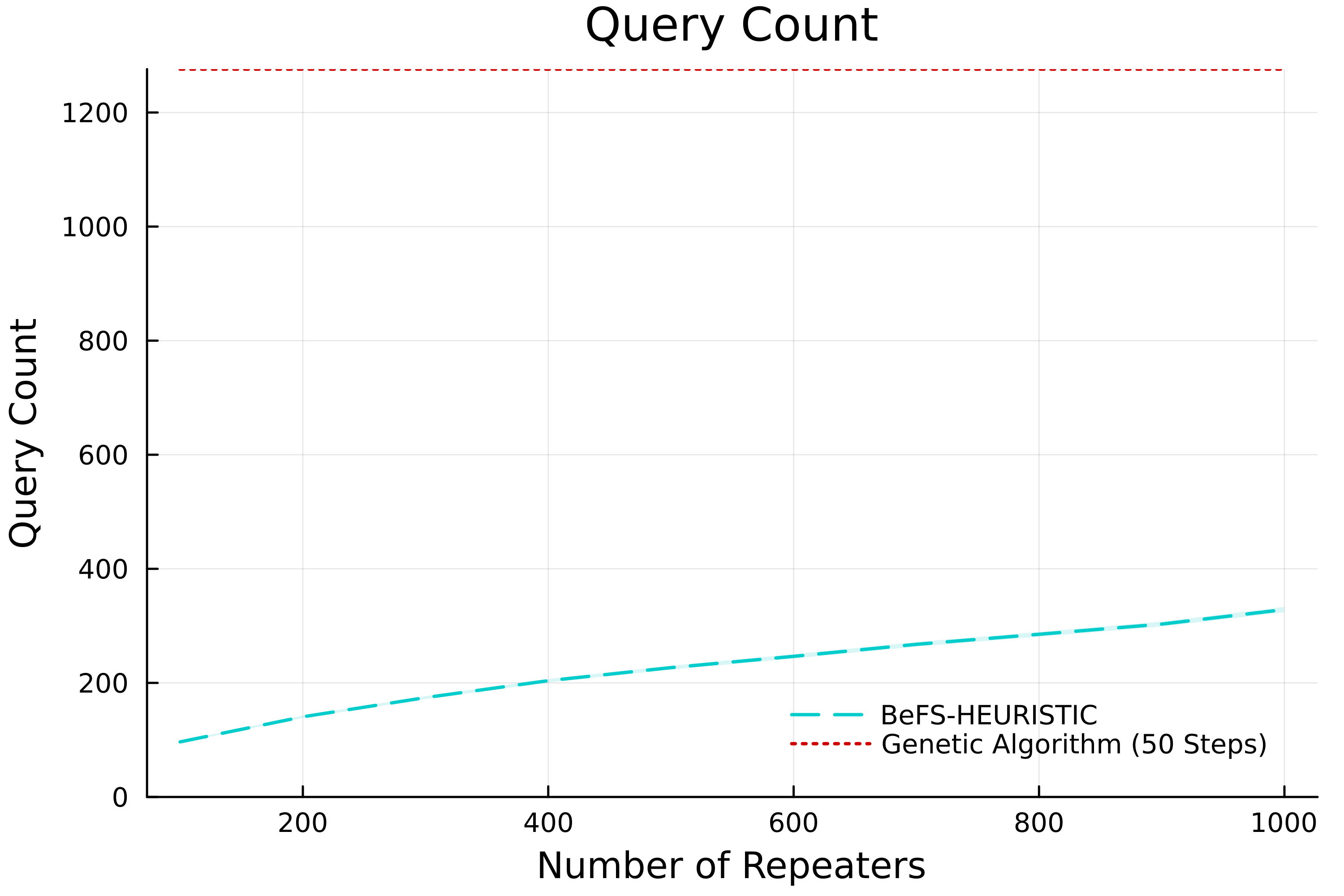}
\caption{
    A plot comparing the average query count for BeFS-HEURISTIC v.s. Genetic Algorithms with 50 generations for graphs with 100--1000 repeaters.
    \label{fig:num_rep_vs_query_complexity_1000}
}
\end{figure}

    

\newpage
Comparing Fig.~\ref{fig:num_rep_vs_time_1000} and Fig.~\ref{fig:num_rep_vs_query_complexity_1000}, we see that the query complexity does not reflect the actual runtime behavior of BeFS-HEURISTIC for large repeater networks.
This confirms our hypothesis that the cost of dominance checks becomes the dominant runtime cost for large repeater networks.

\end{document}